\documentclass[11pt,letterpaper]{article}
\usepackage{docmute}

\usepackage[utf8]{inputenc}

\usepackage{amsthm,amsmath,amssymb}


\usepackage{letterspace}

\usepackage{graphicx}
\usepackage{setspace}
\usepackage{ascmac}
\usepackage{bm}
\usepackage{tikz}
\usetikzlibrary{intersections,calc,arrows.meta,shapes.callouts}

\usepackage{xcolor}

\usepackage{filecontents}
\makeatletter\@input{ivqr_id_supp_a.tex}\makeatother

\usepackage{nameref} 
\usepackage{zref-xr}
\zxrsetup{toltxlabel} 
\zexternaldocument*{ivqr_id_supp}

\if0
\usepackage{xr}
\externaldocument{ivqr_id_supp}

\usepackage{xr}
\makeatletter
\newcommand*{\addFileDependency}[1]{
  \typeout{(#1)}
  \@addtofilelist{#1}
  \IfFileExists{#1}{}{\typeout{No file #1.}}
}
\makeatother

\newcommand*{\myexternaldocument}[1]{%
    \externaldocument{#1}%
    \addFileDependency{#1.tex}%
    \addFileDependency{#1.aux}%
}
\myexternaldocument{ivqr_id_supp}
\fi

\usepackage{algorithm2e}
\usepackage{algorithmic}
\usepackage{enumitem}
\usepackage{latexsym}
\usepackage[hang,small,bf]{caption}
\usepackage[subrefformat=parens]{subcaption}
\captionsetup{compatibility=false}

\newcommand{\R}{\mathbb{R}}

\renewcommand{\tilde}{\widetilde}

\newcommand{\mA}{\mathcal{A}}

\newcommand{\mC}{\mathcal{C}}	
\newcommand{\mD}{\mathcal{D}}

\newcommand{\mL}{\mathcal{L}}
\newcommand{\mM}{\mathcal{M}}

\newcommand{\mP}{\mathcal{P}}

\newcommand{\mT}{\mathcal{T}}

\newcommand{\mV}{\mathcal{V}}
\newcommand{\mW}{\mathcal{W}}
\newcommand{\mX}{\mathcal{X}}
\newcommand{\mY}{\mathcal{Y}}
\newcommand{\mZ}{\mathcal{Z}}

\newcommand{\deq}{\stackrel{d}{=}}
\newcommand{\emp}{\varnothing}

\newcommand{\To}{\Rightarrow}
\newcommand{\toto}{\Leftrightarrow}
\newcommand{\eps}{\varepsilon}

\if0

\fi

\DeclareMathOperator*{\argmax}{arg\,max}

\if0
\setlength{\footnotesep}{18pt}
\makeatletter
\let\@makefntextOrig\@makefntext
\def\@makefntext#1{\@makefntextOrig{%
\baselineskip=18pt
#1}}
\makeatother
\fi


\theoremstyle{plain}
\newtheorem{theorem}{Theorem}
\newtheorem{lemma}{Lemma}
\newtheorem{proposition}{Proposition}

\newtheorem{assumption}{Assumption}

\theoremstyle{definition}

\newtheorem{remark}{Remark}

\usepackage[top=1.25in, bottom=1.25in, left=1.25in, right=1.25in]{geometry}

\onehalfspacing

\numberwithin{equation}{section}

\usepackage{color}

\thickmuskip=1.0\thickmuskip
\medmuskip=0.8\medmuskip
\thinmuskip=0.8\thinmuskip
\arraycolsep=0.3\arraycolsep
\AtBeginDocument{
  \abovedisplayskip     =0.6\abovedisplayskip
  \abovedisplayshortskip=0.6\abovedisplayshortskip
  \belowdisplayskip     =0.6\belowdisplayskip
  \belowdisplayshortskip=0.6\belowdisplayshortskip}

\let\oldenumerate\enumerate
\renewcommand{\enumerate}{
\oldenumerate
\setlength{\itemsep}{1.5pt}
\setlength{\parskip}{1.5pt}
\setlength{\parsep}{1.5pt}
}

\sloppy
\usepackage{hyperref}
\hypersetup{
       colorlinks=false,
       citebordercolor=green,
       linkbordercolor=red,
       urlbordercolor=cyan,
}
\PassOptionsToPackage{linktocpage}{hyperref}
\usepackage{breakcites}
\usepackage{breakurl}
\usepackage[authoryear, round]{natbib}
\bibliographystyle{econometrica}


	\title{Identification of multi-valued treatment effects with unobserved heterogeneity}
	
	\author{Koki Fusejima\thanks{Koki Fusejima, Institute for Advanced Study, Hitotsubashi University, 2-1 Naka, Kunitachi-shi, Tokyo 186-8601, Japan; Email: k.fusejima@r.hit-u.ac.jp} \\
    Institute for Advanced Study \\ Hitotsubashi University}
	
	\date{This version: \today}

	
\begin{document}

\if0
\setlength{\abovedisplayskip}{2pt} 
\setlength{\belowdisplayskip}{2pt} 
\fi

	\maketitle

\begin{abstract}
In this paper, we establish sufficient conditions for identifying treatment effects on continuous outcomes in endogenous and multi-valued discrete treatment settings with unobserved heterogeneity. We employ the monotonicity assumption for multi-valued discrete treatments and instruments, and our identification condition has a clear economic interpretation. In addition, we identify the local treatment effects in multi-valued treatment settings and derive closed-form expressions of the identified treatment effects. We provide examples to illustrate the usefulness of our result.  
\end{abstract}

\textit{Keywords:} Treatment effect, unobserved heterogeneity, identification, endogeneity, instrumental variable

\textit{JEL classification:} C14, C21, C26

\newpage

\section{Introduction}
Unobserved heterogeneity in treatment effects is an essential consideration in many empirical studies in economics.
As discussed in \cite{Heckman2001}, for example, economic theory and applications strongly suggest that the causal effects of treatments or policy variables differ across individuals and subpopulations with the same characteristics.
Quantile treatment effects characterize the heterogeneous impacts of treatments on individuals with different levels of unobserved characteristics in terms of potential outcome quantiles.
Based on instrumental variable (IV) methods, local treatment effects, as introduced by \cite{Imbens1994}, are the treatment effects conditional on an unobservable subpopulation for which the instrument affects treatment states.

In this paper, we establish sufficient conditions for identifying treatment effects on continuous outcomes in endogenous and multi-valued discrete treatment settings with unobserved heterogeneity using IV methods. We use only discrete instruments for identification because instruments are discrete in many empirical applications.
Discrete treatments are implicitly or explicitly multi-valued in many applications. For example, households may receive different levels of transfers in anti-poverty programs, and students who wish to attend college have multiple options for choosing a college or major.
For the policy-maker, it is essential to compare such multi-valued treatment effects when determining which treatment level is appropriate.

In multi-valued treatment settings, \cite{CH2005} establish the identification of quantile treatment effects (QTEs; on the observed populations) with discrete instruments, and their identification results are testable in principle.
However, as we present in the next section, it is unclear how to interpret the required numerical conditions in each empirical study economically.

The main contributions of this paper are stated as follows. 
We establish sufficient conditions for identifying treatment effects in multi-valued treatment settings, which are easier to interpret economically than the identification conditions of \cite{CH2005}.
In addition, we provide closed-form expressions of the identified treatment effects.
We also establish the identification of local treatment effects in multi-valued treatment settings based on our assumptions.

To illustrate the usefulness of our results, we provide three examples based on empirical research and discuss the applicability of our identification result for these examples.
The first example is the effects of choosing different fields on earnings in postsecondary education.
The second example is the effects of expanding access to two-year colleges on student outcomes.
The third example is the effects of relocating to low-poverty neighborhoods on the outcomes of disadvantaged families living in high-poverty neighborhoods. 

We provide identification conditions that take the form of monotonicity assumptions.
The monotonicity assumption, introduced by \cite{Imbens1994} for the binary treatment case, has a clear interpretation in empirical studies because they can be motivated based on behavioral assumptions.
The monotonicity assumption we employ for multi-valued and unordered treatments is motivated by an unordered discrete choice model, where each individual chooses the treatment option with the highest utility.
Our monotonicity assumption is related to \cite{Heckman2018}'s ``unordered monotonicity" assumption for multi-valued and unordered treatments.
However, the identification condition we provide is based on a weaker assumption that holds only on a particular subset of the pairs of values the instrument can take.
As highlighted in our examples, imposing the monotonicity assumption on all the instrument pairs employed in many studies, including \cite{Heckman2018}, may be too strong when the instrument is multi-valued.

When the treatment is binary, \cite{VX2017} and \cite{Wuthrich2019} show identification of the treatment effects as closed-form expressions under 
the instrument satisfying the monotonicity assumption.
\cite{Wuthrich2019} and \cite{FVX2019} develop plug-in estimators based on the closed-form expressions of the identified treatment effects. 
Our identification analysis covers more general settings with
multi-valued treatments and instruments, regardless of whether the treatment is ordered or unordered.

The identification approach adopted here generalizes the idea of matching two distributions, which is introduced by \cite{AtheyImbens2006} and used for identifying the treatment effects in some recent studies.
For the binary treatment case, \cite{VX2017} and \cite{Wuthrich2019} establish identification with binary instruments by matching the two potential outcome distributions conditional on the same subpopulation called ``compliers'' under the monotonicity assumption.
In continuous treatment settings, \cite{torgovitsky2015}, \cite{d2015}, and \cite{ishihara2017} establish identification with binary instruments employing sufficiently large support of the treatment. 

In multi-valued treatment settings, however, we generally cannot match the potential outcome distributions for two treatment states in the same subpopulation. 
The selection mechanism becomes more complicated than in the binary treatment case, and the support of the treatment is limited compared to the continuous treatment case.
We overcome this difficulty by developing systems of equations with multiple potential outcome distributions.
These equations are derived from relationships between the compliers that can be motivated by the discrete choice model under our monotonicity assumption.
Although the distributions are conditional on different subpopulations, we show that the simultaneous equations can be solved uniquely, and the potential outcome distributions are identified under our assumptions.

We also employ our monotonicity relationships for identifying local treatment effects in multi-valued treatment settings.
We establish identification when the outcome variable is continuously distributed under our monotonicity assumption, with some additional assumptions on unobservable factors, such as the rank similarity assumption.\footnote{In Section \ref{subsec:4.5}, we briefly review the identification studies of local treatment effects in multi-valued treatment settings.}

The remainder of the paper is organized as follows:
in Section \ref{sec:2}, we introduce our basic setup and provide three real-world examples. 
In Section \ref{sec:3}, we introduce our monotonicity assumption for multi-valued and unordered treatments.
In Section \ref{sec:4}, we establish the identification of the treatment effects using our monotonicity assumption.
Section \ref{sec:6} provides conclusions with brief recommendations for estimation.
Proofs of the main results and some auxiliary results are presented in Appendices \ref{sec:a}- \ref{sec:a.5}. Some additional discussions are given in Appendices \ref{sec:a.25}-\ref{sec:g} of the Supplemental Appendix.
\section{Basic set up and motivating examples}\label{sec:2}
In this section, we introduce our basic setup with a benchmark nonseparable model. 
We provide three real-world examples to show that our identification approach can be applied in well-known empirical settings.

\subsection{Notation and basic assumptions}\label{subsec:2.1}
Throughout this paper, we use the notations $F_A$, $Q_A$, and $f_A$ for the unconditional cumulative distribution function (cdf), quantile function (qf), and probability density function (pdf) of a scalar-valued random variable $A$, respectively. Similarly, for a set $\mD$ and random vectors $B$ and $C$, $F_{A|\mD BC}(\cdot|b,c)$,
$Q_{A|\mD BC}(\cdot|b,c)$, and $f_{A|\mD BC}(\cdot|b,c)$ denote the conditional cdf, qf, and pdf of $A$ on $\mD\cap\{(B,C)=(b,c)\}$, respectively; let $\mD^\circ$ denote the interior of $\mD$.

To introduce our basic setup, we consider the following nonseparable simultaneous equations for a continuous outcome and a multi-valued endogenous treatment:
\begin{eqnarray}
Y &=& g(T,X,U),  \label{model.1}\\
T &=& \rho(Z,X,V). \label{model.2}
\end{eqnarray}
For the outcome equation, $Y$ is the outcome, $T\in\mT$ is the multi-valued (possibly) unordered treatment where $\mT$ contains $k+1$ values, $X\in\mX\subset\R^r$ is a vector of observed covariates, and the random vector $U$ captures unobserved heterogeneity in the effect of $T$ on $Y$.
For the treatment $T$, we assume a finite collection of multiple treatment statuses (either unordered or ordered) indexed by $t\in \mT$ where, without loss of generality, $\mT=\{0,1,2,\ldots, k\}$.
For the treatment equation, $Z\in\mZ$ is a discrete instrument where $\mZ$ contains at least $k+1$ values, and the random vector $V$ captures unobserved factors affecting selection into treatment.
For simplicity, we suppress $X$ throughout the identification analysis.
All assumptions and results can be understood as conditional on $X$.
The potential outcome under each treatment level $t\in\mT$ is $Y_t=g(t,U)$, and we assume $Y_t\in\mY\subset\R$ and $E[|Y_t|]<\infty$. 
The potential treatment choice if $Z$ had been externally set to $z$ is $T(z)=\rho(z,V)$.
In this paper, for two different treatment levels $t$ and $t'$, we are interested in the sufficient conditions for identifying the average treatment effect (ATE): $E[Y_t] - E[Y_{t'}]$ and the quantile treatment effect (QTE): $Q_{Y_t} (\tau) - Q_{Y_{t'}} (\tau)$, where $\tau\in(0,1)$. 
We are also interested in the sufficient conditions for identifying the local treatment effects, and we introduce them in Section \ref{subsec:4.5}.
For notation simplicity, we assume that the supports of the distributions of $T$, $Y_t$, $Y$, and $Z$ are equal to $\mT$, $\mY$, $\mY$, and $\mZ$, respectively. 
The results in this paper do not rely on these restrictions.

For the outcome equation \eqref{model.1}, we allow $U=(U_0,\ldots,U_k)'$ to be multi-dimensional, and we assume that the potential outcome is expressed as $Y_t=g(t,U_t)$, where we define $U_t:=F_{Y_t} (Y_t)$.
$U_t$ is the rank variable that characterizes heterogeneity of outcomes for individuals with the same observed characteristics by relative ranking in terms of potential outcomes. 
For the rank variable, we assume the ``rank similarity" introduced in \cite{CH2005}.
The rank similarity is an assumption that weakens the ``rank invariance" assumption. 
Rank invariance assumes that $U$ is a scalar error term and that the rank variables satisfy $U_t=U_{t'}=U$ for any $t\neq t'$. 
However, rank similarity allows the rank variables to deviate from a common ranking $U$.

For the treatment equation \eqref{model.2}, we allow $V=(V_0,\ldots,V_k)'$ to be multi-dimensional for the unordered treatment $T$, and each element $V_t$ represents unobserved individual preference heterogeneity from choosing $T=t$.
The treatment decision can be explained by an unordered discrete choice model, where each individual chooses the treatment option with the highest indirect utility, 
\begin{equation}\label{eq:chcf}
T(z) = \rho(z,V) = \argmax_{t\in\mT} I_t(z,V_t),
\end{equation}
where $I_t$ is the indirect utility of choosing $T=t$.\footnote{We can justify the utility maximization model of the (potential) treatment choice as in \eqref{eq:chcf} under the rank similarity assumption with an argument similar to Example 2 of \cite{chernozhukov2013quantile}, where QTE is used to examine the effects of participating in a 401(k) plan.}
\cite{chesher2013instrumental} also employs a similar unordered choice model where $V$ is allowed to be multi-dimensional.

When the unobserved factor $V$ is a scalar random variable, the two equations \eqref{model.1} and \eqref{model.2} form the triangular model of \cite{chesher2005nonparametric}.
\cite{chesher2005nonparametric} studies interval identification of the endogenous nonseparable triangular model with a discrete ordered treatment.
Our research is related to \cite{chesher2005nonparametric}, but our treatment equation for unordered treatments is fundamentally different from the triangular model that captures a single source of unobserved heterogeneity.

To identify these treatment effects, it suffices to identify the conditional mean and qf of the potential outcomes.
We identify these under the following set of assumptions. \cite{CH2005}, \cite{VX2017}, and \cite{Wuthrich2019} employ a similar set of assumptions. 
\begin{assumption}[Instrument independence and rank similarity]\label{asp:ivqr}
The following conditions hold:
\begin{enumerate}
\item[(i)] Potential outcomes: For each $t\in\mT$, $Y_t$ is expressed as $Y_t=g(t,U_t)$ for some unknown function $g$ and $U_t=F_{Y_t} (Y_t)$, and $F_{Y_t} (\cdot)$ is continuous.
\item[(ii)] Independence: $\{U_t\}_{t=0}^k$ are independent of $Z$.
\item[(iii)] Selection: $T$ is expressed as $T = \rho(Z, V)$ for some unknown function $\rho$ and random vector $V$.
\item[(iv)] Rank similarity: Conditional on $(Z,V) = (z,v)$, $\{U_t\}_{t=0}^k$ are identically distributed.
\item[(v)] Outcome support: The closure of $\mY^\circ$ is equal to $\mY$, and $F_{Y_t}(\mY^\circ)$ does not depend on $t\in\mT$.
\end{enumerate}
\end{assumption}
\noindent Assumption \ref{asp:ivqr} (i) states an expression for the potential outcome with the rank variable and imposes continuity of the potential outcome cdf.
Under Assumption \ref{asp:ivqr} (i), $F_{Y_t} (y)$ is strictly increasing in $y\in\mY^\circ$, and $Q_{Y_t} (\tau)$ is strictly increasing in $\tau\in(0,1)$.\footnote{Lemmas \ref{lem:3b} and \ref{lem:3c} in Appendix \ref{sec:a.5} prove these properties.}
We do not assume that $Q_{Y_t} (\cdot)$ is continuous and allow $F_{Y_t} (\cdot)$ to have flat intervals.
Under Assumption \ref{asp:ivqr} (i), the rank variable $U_t$ follows the uniform distribution on $(0,1)$, and $Y_t$ and $Q_{Y_t}(U_t)$ are identically distributed.
Hence, we can interpret the QTE as treatment effects on individuals with the same level of unobserved heterogeneity at some level $U_t=\tau$.\footnote{We employ a slightly different definition for the rank variable from the original definition of \cite{CH2005}. 
\cite{CH2005} define the rank variable $U_t$ as a uniformly distributed random variable on $(0,1)$ that satisfies $Y_t=Q_{Y_t}(U_t)$, and they directly assume that $Q_{Y_t} (\tau)$ is strictly increasing in $\tau\in(0,1)$.
The difference does not matter in our settings because $Y_t$ and $Q_{Y_t}(U_t)$ are identically distributed.}
Assumption \ref{asp:ivqr} (ii) imposes conditional independence between the potential outcomes and the instrument. 
Assumption \ref{asp:ivqr} (iii) states a general selection equation where the random vector $V$ captures unobserved factors affecting selection into treatment. 
Assumption \ref{asp:ivqr} (iv) is the rank similarity assumption.
The rank similarity is arguably strong, but this condition has essential implications for identification and is consistent with many empirical situations.
Assumption \ref{asp:ivqr} (v) is assumed to simplify the proofs in the main paper, and we relax Assumption \ref{asp:ivqr} (v) in Appendix \ref{sec:g}. 
See also Remark \ref{rem:clfm} for related discussions.

The main statistical implication of Assumption \ref{asp:ivqr} is that, for each $t\in\mT$, the qf of $Y_t$ satisfies the following nonlinear moment equation (\cite{CH2005} Theorem 1): 
\begin{equation}\label{eq:ti}
\sum_{t=0}^k F_{Y|TZ}(Q_{Y_t} (\tau)|t,z)p_t(z)=\tau,
\end{equation}
where $p_t(z)$ is defined as 
\begin{equation}\label{eq:pt}
p_t(z):=P(T=t|Z=z).
\end{equation}
\cite{CH2005} show that $Q_{Y_t} (\tau )$'s are identified if the following $(k+1)\times(k+1)$ matrix $\Pi'(y_0,\ldots,y_k)$ is full rank for all the values $(y_0,\ldots,y_k)$ in a set of potential solutions to the moment equations (\ref{eq:ti}):
\begin{equation}\label{eq:ch2}
\Pi'(y_0,\ldots,y_k):=
\begin{pmatrix}
f_{Y|TZ}(y_0|0,z_0)p_0(z_0) & \cdots & f_{Y|TZ}(y_k|k,z_0)p_k(z_0) \\
\vdots & \ddots & \vdots \\
f_{Y|TZ}(y_0|0,z_k)p_0(z_k) & \cdots & f_{Y|TZ}(y_k|k,z_k)p_k(z_k)
\end{pmatrix},
\end{equation}
where $\{z_0,z_1\ldots,z_k\}\subset\mZ$. 
The identification condition of \cite{CH2005} is, in principle, directly testable.
However, in each empirical study, it is not easy to check this numerical condition, which takes the form of matrices of the outcome conditional densities.
It is unclear how to interpret the requirements for the endogenous variable and the instruments implied in this condition. 
In this paper, we establish sufficient conditions for identifying treatment effects in multi-valued treatment settings, which are easier to interpret economically than the identification conditions of \cite{CH2005}.
We also provide closed-form expressions of the identified treatment effects that could be used for constructing plug-in estimators.\footnote{This idea resembles that of \cite{Das2005}, who develops an estimation strategy based on the closed-form expression of the regression function with discrete endogenous treatments in the nonparametric regression model with an additive error term.}

\subsection{Examples}\label{subsec:2.2ex}
In this section, we introduce three examples based on empirical research. 
Throughout this paper, we consider Example I as a running example.
In Section \ref{sec:5}, we discuss the content of our assumptions and the applicability of our identification result for Examples II and III.

\subsubsection{Example I}
The first example is the effects of choosing different fields on earnings in postsecondary education.
In postsecondary education, almost all students have to choose a field of study, and earnings differ across not only universities but also fields.
\cite{kirkeboen2016field} study the identification and estimation of local average treatment effects (LATEs) of choosing different fields on earnings in Norway's postsecondary educational system.
\cite{kirkeboen2016field} find that a centralized admission process in Norway randomizes applicants into different groups, and applicants in each group are much more likely to receive an offer for each field.
Based on this process, \cite{kirkeboen2016field} use the predicted offers for each field as an instrument.
This process also provides information on individuals' ranking of fields, and the identification analysis of \cite{kirkeboen2016field} depends on each individual's next best alternative, that is, the field one would prefer if one's preferred field would not be feasible.
Our results can be applied when we can find a certain instrument that effectively randomizes students into different groups, and information on individuals' ranking of fields is not necessary for our identification approach.

\subsubsection{Example II}
The second example is the effects of expanding access to two-year colleges on student outcomes.
Two-year community colleges will increase the flow of young people into higher education, and expanding access to two-year colleges is expected to positively affect educational attainment and earnings.
However, higher enrollment rates in two-year colleges may adversely affect student outcomes because college applicants are discouraged from paying higher tuition and enrolling directly in four-year colleges.
\cite{Mountjoy2019} and \cite{ferreyra2022labor} estimate the effects of expanding access to two-year colleges in the United States and Colombia using instruments, respectively.
The treatment is the college applicant's decision to start college at a two-year or four-year institution or not to enroll in college.
With this multi-valued treatment, they compare the two opposing effects: the positive effect on new two-year entrants who otherwise would not have enrolled in any college, and the negative effect on two-year entrants who otherwise would have started directly at a four-year institution.
For the instruments, they use the distance to the nearest college.
\cite{Mountjoy2019} directly uses the distance as a continuous instrument, and \cite{ferreyra2022labor} use a discrete instrument that indicates whether the nearest college is located within a certain distance radius.
We employ discrete instruments and show that our identification method can be applied to discrete instruments.

\subsubsection{Example III}
The third example is Moving to Opportunity (MTO), a housing experiment implemented between 1994 and 1998. 
The MTO experiment was designed to evaluate the effects of relocating to low-poverty neighborhoods on the outcomes of disadvantaged families living in high-poverty urban neighborhoods in the United States. 
This project randomly assigned housing vouchers from the Section 8 program that could be used to subsidize housing costs.
Eligible families were placed in one of the following three assignment groups: experimental group, to which Section 8 housing vouchers were assigned but restricted to use their vouchers in a low-poverty neighborhood; Section 8 group, to which regular Section 8 housing vouchers were assigned without any restriction on their place of use; or the control group to which no voucher was assigned.
Impact evaluations were conducted in 2002, 2009, and 2010.
See \cite{Orr2003}, \cite{Sanbonmatsu2011}, and \cite{Shroder2012}
for the detailed information on this project. 
For the recent studies that find evidence of neighborhood effects on adult employment, \cite{aliprantis2019evidence} estimate LATEs for moving to a higher-quality neighborhood under an ordered treatment model using neighborhood quality as an observed continuous measure of the treatment variable.

Under an unordered treatment model, \cite{Pinto2015} applies the identification results of \cite{Heckman2018} for estimating the conditional means of the potential outcomes on the compliers.
We show that, under some additional assumptions on unobservable factors, such as the rank similarity assumption, the ATEs and QTEs are nonparametrically identified when the outcome variable is continuously distributed. 

\section{The (generalized) monotonicity assumption}\label{sec:3}
In this section, we introduce the monotonicity assumption we employ for the multi-valued treatment case.
We illustrate that the monotonicity assumption is motivated by economic analysis using a discrete choice index model introduced in Section \ref{sec:2}.

\subsection{Monotonicity assumption in multi-valued treatment settings}\label{subsec:3.1}
The monotonicity assumption is first introduced by \cite{Imbens1994} for the binary treatment case, and \cite{Heckman2018} generalize this assumption to  unordered multi-valued treatment settings.
We define a binary variable $D_t:=1\{T=t\}$, where $1\{\mA\}$ is the indicator function of a set $\mA$ and $D_t$ is an indicator function of each treatment level. 
Then, the observed outcome can be represented as $Y=\sum_{t=0}^kY_tD_t$.
We also define $D_t(z):=1\{T(z)=t\}$ as an indicator function of each potential treatment state if $Z$ had been externally set to $z$.
We define $\mP:=\{(z,z')\in\mZ^2:z\neq z'\}$ as a set of pairs of different values that the instrument can take.
The monotonicity assumption imposes restrictions on these pairs. 

For the binary treatment case, the monotonicity assumption requires that either $T(z)\leq T({z'})$ and $P(\{T(z)=0, T(z')=1\})>0$ or $T(z)\geq T({z'})$ and $P(\{T(z)=1, T(z')=0\})>0$ hold almost surely for each $(z,z')\in\mP$.
This condition implies that either $P(\{T(z)=1, T(z')=0\})=0$ or $P(\{T(z)=0, T(z')=1\})=0$ holds for each $(z,z')\in\mP$.
Under the monotonicity assumption, individuals who change their choice respond in only one direction to a change in $Z$, and the group with positive probability is called ``compliers."

\cite{Heckman2018} generalize this argument to multi-valued treatment settings. 
For each $z\in\mZ$ and $t\in\mT$, they assume the monotonicity assumption for binary treatments on each binary indicator $D_t(z)$, and either $D_t(z)\leq D_t(z')$ or $D_t(z)\geq D_t(z')$ holds almost surely for each pair $(z,z')\in\mP$.
They call this assumption ``unordered monotonicity'' because this condition can be assumed on the unordered treatments.
However, as we see in Section \ref{subsec:3.2}, imposing such conditions on all the pairs $(z,z')\in\mP$ may be too strong when the instrument is multi-valued. 
Hence, we employ a weaker assumption that imposes such conditions only on a subset of $\mP$. 
That subset is determined differently in each situation.
Recently, a similar problem has been discussed when there are multiple instruments. 
\cite{MTW2019,mogstad2020policy} and \cite{goff2020vector} consider the binary treatment case, and \cite{Mountjoy2019} considers the multi-valued unordered treatment case.
As discussed in Section \ref{sec:5}, \cite{Mountjoy2019} employs the same type of monotonicity assumption with multiple continuous instruments.
We consider a (possibly) scalar multi-valued instrument and weaken the monotonicity assumption from another perspective.

We define the compliers for the multi-valued treatment case as $\mC_{z,z'}^t:=\{D_t(z)=0, D_t(z')=1\}$. 
Our monotonicity assumption is characterized by inequalities such as $D_t(z)\leq D_t(z')$ and $D_t(z)\geq D_t(z')$. 
We employ the following monotonicity assumption:
\begin{assumption}[Instrument independence and monotonicity]\label{asp:mt}
There exists a subset $\Lambda$ of $\mP$ such that the following conditions hold for each $\lambda=(z,z')\in\Lambda$:
\begin{enumerate}
\item[(i)] Independence: $(Y_t, T(z))$ for $t\in\mT$ and $z\in\lambda$ are jointly independent of $Z$.
\item[(ii)] Monotonicity inequalities: Either $D_t(z)\leq D_t(z')$ or $D_t(z)\geq D_t(z')$ holds almost surely for each $t\in\mT$.
\item[(iii)] Instrument relevance: Either $P(\mC^t_{z,z'})>0$ or $P(\mC^t_{z',z})>0$ holds for each $t\in\mT$.
\item[(iv)] Sufficient support: The support of the conditional distribution of $Y_t$ on either $\mC^t_{z,z'}$ or $\mC^t_{z',z}$ is $\mY$.
\end{enumerate}
\end{assumption}

\noindent We call this subset $\Lambda$ ``monotonicity subset."
When $T$ is binary, Assumptions \ref{asp:mt} (i)--(iii) are the monotonicity assumption in \cite{Imbens1994}. 
Assumption \ref{asp:mt} (i) strengthens Assumption \ref{asp:ivqr} (ii) and assumes that the potential outcome and treatment are jointly independent of the instrument. 
Assumption \ref{asp:mt} (ii) assumes that monotonicity inequalities hold on a particular subset $\Lambda$ of $\mP$.
Assumption \ref{asp:mt} (iii) is an instrument relevance condition and assumes that the compliers always exist. 
Under these conditions, when monotonicity inequalities hold on $(z,z')\in\Lambda$, we exclude the cases where neither $D_t(z)<D_t(z')$ nor $D_t(z)>D_t(z')$ can happen for some $t\in\mT$.
Assumption \ref{asp:mt} (iv) strengthens the instrument relevance condition and assumes that the compliers are sufficiently large. 
\cite{VX2017} employ a similar condition for the binary treatment case.
Under Assumption \ref{asp:mt} (iv), each conditional cdf of the potential outcome on compliers strictly increases on $\mY^\circ$.\footnote{We show this statement in Lemma \ref{lem:3b} in Appendix \ref{sec:a.5}.}

\subsection{Motivating the monotonicity assumption}\label{subsec:3.2}
In this section, we illustrate that economic analysis implies the monotonicity assumption we employ for the multi-valued treatment case.
We consider Example I and use a discrete choice index model introduced in Section \ref{sec:2} to motivate the monotonicity assumption.
\cite{Mountjoy2019} also employs a discrete choice index model to motivate his monotonicity assumption for continuous instruments.
We take an approach similar to \cite{Mountjoy2019} for the choice index model.\footnote{\cite{Heckman2018} and \cite{Pinto2015} consider a general utility maximization problem and employ revealed preference analysis to motivate their unordered monotonicity assumption.
The generalized models also imply our monotonicity assumption.}

We consider a setting where individuals choose between not taking any postsecondary education or completing some postsecondary education and choose between $k$ different fields of study labeled as $1,\ldots,k$.
Let $Y$ denote observed earnings.
For the treatment $T$, let $T=0$ denote not taking any postsecondary education, and for $j\in\{1,\ldots,k\}$, $T=j$ denotes completing field $j$.
Suppose that individuals are randomly assigned to one of the following $(k+1)$ groups; individuals assigned to group $0$ have no cost reduction and for $j\in\{1,\ldots,k\}$, the cost of choosing field $j$ is decreased for individuals assigned to group $j$.
Let the instrument $Z$ represent group assignment that takes values on $\mZ=\{0,1,\ldots,k\}$, where $Z=j$ denotes assignment to group $j$.
In Example I, Assumption \ref{asp:mt} (i) holds because vouchers are randomly assigned.
Suppose that the treatment $T$ and the instrument $Z$ are sufficiently correlated, and we assume Assumptions \ref{asp:mt} (iii) and (iv) unless stated otherwise.

We first consider a case where individuals choose between three alternatives.
As in \cite{Mountjoy2019}, we assume additive separability of the utility functions in unobservable components for simplicity and ease of visualization.
As in \eqref{eq:chcf} in Section \ref{sec:2}, each individual chooses the treatment option with the highest indirect utility:
\begin{equation*}
T(z) = \argmax_{t\in\mT} I_t(z,V_t),
\end{equation*}
where the indirect utilities for each treatment option are defined as follows:
\[
I_0 = 0,\quad I_1 = V_1 - \mu_1(Z), \text{ and } I_2 = V_2 - \mu_2(Z).
\]
The utility of not taking any postsecondary education is normalized to zero.
$V_t$ is an individual's gross utility from choosing field $t$ and represents unobserved individual preference heterogeneity.
$\mu_t(Z)$ is the cost of choosing field $t$, and $I_t$ is the net utility of choosing field $t$.
The potential treatments under this choice index model are expressed as follows:
\begin{eqnarray}
D_0(z) &=& 1\{V_1 < \mu_1(z), V_2 < \mu_2(z)\}, \label{eq:te1} \\
D_1(z) &=& 1\{V_1 > \mu_1(Z), V_2 - V_1 < \mu_2(z) - \mu_1(z)\}, \label{eq:te2} \\
D_2(z) &=& 1\{V_2 > \mu_2(Z), V_2 - V_1 > \mu_2(z) - \mu_1(z)\}. \label{eq:te3}
\end{eqnarray}
Figure \ref{fig:ex1.1} (a) shows how these treatment choice equations \eqref{eq:te1}-\eqref{eq:te3} partition the two-dimensional space of unobserved preferences $(V_1,V_2)$.
Individuals who choose $T=0$ have low preferences for fields 1 and 2 relative to their costs, while those who choose $T=1$ or $T=2$ have higher preferences for their treatment choice.

For the cost functions, we can naturally assume the following relationships from the group assignment of the instrument:
\begin{eqnarray}
\mu_1(0) &=& \mu_1(2)>\mu_1(1), \label{eq:cf1} \\
\mu_2(0) &=& \mu_2(1)>\mu_2(2). \label{eq:cf2}
\end{eqnarray}
Relationship \eqref{eq:cf1} holds because the cost of choosing field 1 decreases for individuals assigned to group 1. 
Relationship \eqref{eq:cf2} holds for a similar reason.
Applying the restrictions on the cost functions \eqref{eq:cf1} and \eqref{eq:cf2} to the treatment choice equations \eqref{eq:te1}-\eqref{eq:te3} generates eight monotonicity inequalities summarized in Table \ref{tab:mtc1}.
\begin{table}[htb]
\caption{Monotonicity inequalities of Example I for the case of $k=2$}\label{tab:mtc1}
\centering
\begin{tabular}{|c| c c c c|} \hline
\multicolumn{1}{|c|}{} & \multicolumn{4}{|c|}{$\mT$} \\ \hline
\multicolumn{1}{|c|}{} & & $0$ & $1$ & $2$ \\ 
& $(1,0)$ & $\,\,\,\,\,\,\,\,D_0(1)\leq D_0(0)\,\,\,\,\,\,\,\,$ & $\,\,\,\,\,\,\,\,D_1(1)\textcolor{red}{\geq} D_1(0)\,\,\,\,\,\,\,\,$ & $\,\,\,\,\,\,\,\,D_2(1)\leq D_2(0)\,\,\,\,\,\,\,\,$ \\ 
$\mP$ & $(2,0)$ & $\,\,D_0(2)\leq D_0(0)\,\,$ & $\,\,D_1(2)\leq D_1(0)\,\,$ & $\,\,D_2(2)\textcolor{red}{\geq} D_2(0)\,\,$ \\ 
& $(1,2)$ & $\,\,\,\,$ & $\,\,D_1(1)\geq D_1(2)\,\,$ & $\,\,D_2(1)\leq D_2(2)\,\,$ \\ \hline
\end{tabular}
\end{table}
From Table \ref{tab:mtc1}, Assumption \ref{asp:mt} (ii) holds for 
$\Lambda=\{(1,0),(2,0)\}$.
For the inequalities in Table \ref{tab:mtc1}, \cite{kirkeboen2016field} also employ $D_1(1)\geq D_1(0)$ and $D_2(2)\geq D_2(0)$, and we obtain the other inequalities from the restrictions on the cost functions \eqref{eq:cf1} and \eqref{eq:cf2}.
As we discuss in Section \ref{sec:5}, \cite{Mountjoy2019} provides the same type of inequalities as the monotonicity inequalities for $(1,0)$ and $(2,0)$ with continuous instruments.

Figures \ref{fig:ex1.1} (b)--(d) visualize the monotonicity inequalities summarized in Table \ref{tab:mtc1}.
Figure \ref{fig:ex1.1} (b) illustrates how a shift in the instrument from $Z=0$ to $Z=1$ induces the three monotonicity inequalities for $(1,0)$.
Because this shift in $Z$ decreases the cost of choosing field 1 from $\mu_1(0)$ to $\mu_1(1)$ but does not change the cost of choosing field 2, some individuals find field 1 more attractive and change their choice from either $T=0$ or $T=2$ to $T=1$, but no individuals find field 1 less attractive and leave from the $T=1$ group.
Hence, the expansion of the $T=1$ region induces $D_1(1)\geq D_1(0)$ and, at the same time, the shrinkage of both the $T=0$ and $T=2$ regions induces $D_0(1)\leq D_0(0)$ and $D_2(1)\leq D_2(0)$.
Analogously, Figure \ref{fig:ex1.1} (c) illustrates that a shift in the instrument from $Z=0$ to $Z=2$ induces the three monotonicity inequalities for $(2,0)$.

Figure \ref{fig:ex1.1} (d) illustrates that a shift in the instrument from $Z=1$ to $Z=2$ does not induce any monotonicity inequality between $D_0(1)$ and $D_0(2)$, and $(1,2)$ is not contained in the monotonicity subset.
Because this shift in $Z$ increases the cost of choosing field 1 from $\mu_1(1)$ to $\mu_1(2)$ and decreases the cost of choosing field 2 from $\mu_2(1)$ to $\mu_2(2)$, some individuals who find field 1 less attractive move on to the $T=0$ group and, at the same time, some individuals who find field 2 more attractive leave from the $T=0$ group.
Hence, $D_0(1)>D_0(2)$ and $D_0(1)<D_0(2)$ can both happen depending on whether more individuals are induced into or out from the $T=0$ group.

\begin{figure}[htbp]
\begin{tabular}{cc}
\centering
\begin{minipage}[t]{0.45\hsize}
\centering
\scalebox{0.95}{
\begin{tikzpicture}
\coordinate(O)at(0,0);
\coordinate(XS)at(0,0);
\coordinate(XL)at(6,0);
\coordinate(YS)at(0,0);
\coordinate(YL)at(0,6);
\draw[->,>=stealth,semithick](XS)--(XL)node[below left]{$V_1$};
\draw[->,>=stealth,semithick](YS)--(YL)node[below left]{$V_2$};

\coordinate(P)at(3.5,3.5);
\draw[thick]($(XS)!(P)!(XL)$)node[below]{$\mu_1(z)$}--(P)--($(YS)!(P)!(YL)$)node[left]{$\mu_2(z)$};
\draw[thick,domain=3.5:6]plot(\x,\x)node[left]{$V_2-V_1=\mu_2(z)-\mu_1(z)$};

\draw(2,2)node{$D_0(z)=1$};
\draw(5,2)node{$D_1(z)=1$};
\draw(2,5)node{$D_2(z)=1$};
\end{tikzpicture}
}
\subcaption{Treatment choices for $Z=z$}
\end{minipage} &
\begin{minipage}[t]{0.45\hsize}
\centering
\scalebox{0.95}{
\begin{tikzpicture}
\coordinate(O)at(0,0);
\coordinate(XS)at(0,0);
\coordinate(XL)at(6,0);
\coordinate(YS)at(0,0);
\coordinate(YL)at(0,6);
\draw[->,>=stealth,semithick](XS)--(XL)node[below left]{$V_1$};
\draw[->,>=stealth,semithick](YS)--(YL)node[below left]{$V_2$};

\coordinate(P0)at(3.5,3.5);
\draw[dashed,thick]($(XS)!(P0)!(XL)$)node[below]{$\mu_1(0)$}--(P0)--($(YS)!(P0)!(YL)$);
\draw[dashed,thick,domain=3.5:6]plot(\x,\x);

\coordinate(P1)at(2,3.5);
\draw[thick]($(XS)!(P1)!(XL)$)node[below]{$\mu_1(1)$}--(P1)--($(YS)!(P1)!(YL)$)node[left]{$\mu_2(1)$};
\draw[thick,domain=2:4.5]plot(\x,\x+1.5)node[left]{$V_2-V_1=\mu_2(1)-\mu_1(1)$};

\draw(1,2)node{$D_0(1)=1$};
\draw(4.5,2)node{$D_1(1)=1$};
\draw(1.5,5)node{$D_2(1)=1$};
\draw[<-,>=stealth,thick](2.1,1)--(3.4,1);
\draw[<-,>=stealth,thick](3.1,4.5)--(4.4,4.5);
\end{tikzpicture}
}
\subcaption{Instrument shift from $Z=0$ to 1}
\end{minipage} \\
\begin{minipage}[t]{0.45\hsize}
\centering
\scalebox{0.95}{
\begin{tikzpicture}
\coordinate(O)at(0,0);
\coordinate(XS)at(0,0);
\coordinate(XL)at(6,0);
\coordinate(YS)at(0,0);
\coordinate(YL)at(0,6);
\draw[->,>=stealth,semithick](XS)--(XL)node[below left]{$V_1$};
\draw[->,>=stealth,semithick](YS)--(YL)node[below left]{$V_2$};

\coordinate(P0)at(3.5,3.5);
\draw[dashed,thick]($(XS)!(P0)!(XL)$)--(P0)--($(YS)!(P0)!(YL)$)node[left]{$\mu_2(0)$};
\draw[dashed,thick,domain=3.5:6]plot(\x,\x);

\coordinate(P2)at(3.5,2);
\draw[thick]($(XS)!(P2)!(XL)$)node[below]{$\mu_1(2)$}--(P2)--($(YS)!(P2)!(YL)$)node[left]{$\mu_2(2)$};
\draw[thick,domain=3.5:6]plot(\x,\x-1.5)node[above left]{$V_2-V_1=\mu_2(2)-\mu_1(2)$};

\draw(2,1.5)node{$D_0(2)=1$};
\draw(5,1.5)node{$D_1(2)=1$};
\draw(2,4.2)node{$D_2(2)=1$};
\draw[<-,>=stealth,thick](1,2.1)--(1,3.4);
\draw[<-,>=stealth,thick](4.5,3.1)--(4.5,4.4);
\end{tikzpicture}
}
\subcaption{Instrument shift from $Z=0$ to 2}
\end{minipage} &
\begin{minipage}[t]{0.45\hsize}
\centering
\scalebox{0.95}{
\begin{tikzpicture}
\coordinate(O)at(0,0);
\coordinate(XS)at(0,0);
\coordinate(XL)at(6,0);
\coordinate(YS)at(0,0);
\coordinate(YL)at(0,6);
\draw[->,>=stealth,semithick](XS)--(XL)node[below left]{$V_1$};
\draw[->,>=stealth,semithick](YS)--(YL)node[below left]{$V_2$};

\coordinate(P1)at(2,3.5);
\draw[dashed,thick]($(XS)!(P1)!(XL)$)node[below]{$\mu_1(1)$}--(P1)--($(YS)!(P1)!(YL)$)node[left]{$\mu_2(1)$};
\draw[dashed,thick,domain=2:4.5]plot(\x,\x+1.5);

\coordinate(P2)at(3.5,2);
\draw[thick]($(XS)!(P2)!(XL)$)node[below]{$\mu_1(2)$}--(P2)--($(YS)!(P2)!(YL)$)node[left]{$\mu_2(2)$};
\draw[thick,domain=3.5:6]plot(\x,\x-1.5);

\draw(1,1)node{$D_0(2)=1$};
\draw(5,1.5)node{$D_1(2)=1$};
\draw(1.5,4.5)node{$D_2(2)=1$};
\draw[->,>=stealth,thick](2.1,1)--(3.4,1);
\draw[<-,>=stealth,thick](1,2.1)--(1,3.4);
\end{tikzpicture}
}
\subcaption{Instrument shift from $Z=1$ to 2}
\end{minipage} 
\end{tabular}
\caption{Visualization of shifts in the instrument}\label{fig:ex1.1}
\end{figure}

We generalize the preceding argument to general $k\in\mT$.
Similar to the case of $k=2$, a discrete choice model with $k+1$ treatment options generates the following monotonicity inequalities:
\begin{equation}\label{eq:exi3}
D_{i}({i})\geq D_{i}({0})\text{ and }D_j({i})\leq D_j({0})\text{ for } i=1,\ldots,k\text{ and }j\in\mT\setminus\{ i\}.
\end{equation}
Table \ref{tab:mtc2} summarizes (\ref{eq:exi3}). 
From Table \ref{tab:mtc2}, Assumption \ref{asp:mt} (ii) holds for $\Lambda=\{(1,0),\ldots,(k,0)\}$.
\begin{table}[htb]
\caption{Monotonicity inequalities of Example I}\label{tab:mtc2}
\begin{center}
\scalebox{0.83}{
\begin{tabular}{|c|c c c c c c|} \hline
\multicolumn{1}{|c|}{} & \multicolumn{6}{|c|}{$\mT$} \\ \hline
\multicolumn{1}{|c|}{} & & $0$ & $1$ & $_{\cdots}$ & $k-1$ & $k$ \\ 
 &${(1,0)}$& $D_0({1})\leq D_0({0})$ & $D_1({1})\textcolor{red}{\geq} D_1({0})$ & $_{\cdots}$ &$D_{k-1}({1})\leq D_{k-1}({0})$ & $D_k({1})\leq D_k({0})$ \\ 
 $\mP$&$\vdots$ & $\vdots$ & $\vdots$ & $\vdots$ & $\vdots$ & $\vdots$ \\ 
 &${(k-1,0)}$& $D_0({k-1})\leq D_0({0})$ & $D_1({k-1})\leq D_1({0})$ & $_{\cdots}$ & $D_{k-1}({k-1})\textcolor{red}{\geq} D_{k-1}({0})$ & $D_k({k-1})\leq D_k({0})$ \\ 
 &${(k,0)}$& $D_0({k})\leq D_0({0})$ & $D_1({k})\leq D_1({0})$ & $_{\cdots}$ & $D_{k-1}({k})\leq D_{k-1}({0})$ & $D_k({k})\textcolor{red}{\geq} D_k({0})$ \\ \hline
\end{tabular}
}
\end{center}
\end{table}

\section{Identification}\label{sec:4}
In this section, we establish the identification of the potential outcome distributions and the local treatment effects using our monotonicity assumption.
Before establishing our main results, we first introduce a map termed ``counterfactual mapping.'' 
Counterfactual mapping, developed by \cite{VX2017}, is an essential tool for identification.
We then establish the identification of counterfactual mappings.

\subsection{Counterfactual mappings and our identification challenge}\label{subsec:2.2}
In this section, we introduce counterfactual mapping in multi-valued treatment settings. 
We show that identifying the potential outcome distributions follows from identifying the counterfactual mappings.
Our key identification challenge is to recover the counterfactual mappings in multi-valued treatment settings.

For $s,t\in\mT$, define $\phi_{s,t}:\R\to\R$ as $\phi_{s,t}(y):=Q_{Y_t}(F_{Y_s}(y))$. 
$\phi_{s,t}$ is called ``counterfactual'' mapping from $Y_s$ to $Y_t$ because the potential outcomes are also called ``counterfactual outcomes." 
\cite{VX2017} define a similar mapping for the binary treatment case. 
From the definition, this mapping relates the quantiles of the distribution of $Y_s$ to that of $Y_t$. 
Under Assumption \ref{asp:ivqr} (i), this mapping is strictly increasing on $\mY^\circ$, and $\phi_{s,r}=\phi_{t,r}\circ\phi_{s,t}$ holds for $s,t,r\in\mT$. 
Moreover, under Assumption \ref{asp:ivqr} (v), 
an inverse mapping $\phi_{s,t}^{-1}$ exists on $\mY^\circ$, and $\phi_{t,s}(y)=\phi_{s,t}^{-1}(y)$ holds for $y\in\mY^\circ$. 
Similarly, we define the unconditional counterfactual mapping for $s,t\in\mT$ as $\phi_{s,t}(y):=Q_{Y_t}(F_{Y_s}(y))$.

The following lemma shows that the potential outcome cdfs and means can be written as compositions of the counterfactual mappings and observable distributions. 
\cite{VX2017} show a similar result for the binary treatment case. 

\begin{lemma}[Potential outcome cdfs and means via counterfactual mappings]\label{lem:clfm}
Suppose that Assumption \ref{asp:ivqr} holds. Define $p_t(z)$ as in (\ref{eq:pt}).
Then, the following holds:
\begin{enumerate}
\item[(a)] For each $s\in\mT$, $F_{Y_s}(y)$ for $y\in\mY^\circ$ can be expressed as  
\begin{equation}\label{eq:clfm}
F_{Y_s}(y)=\sum_{t=0}^k F_{Y|TZ}(\phi_{s,t}(y)|t,z)p_t(z).
\end{equation}
\item[(b)] For each $s\in\mT$, $E[Y_s]$ can be expressed as
\begin{equation}\label{eq:clfmasf}
E[Y_s]=\sum_{t=0}^k E[\phi_{t,s}(Y)|T=t,Z=z]p_t(z).
\end{equation}
\end{enumerate}
\end{lemma}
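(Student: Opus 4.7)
The plan is to build both parts from three ingredients: (a) the conditional independence of $Y_s$ from $Z$ given $X=x$, which follows from Assumption~\ref{asp:ivqr}(ii) because $Y_s$ is a deterministic function of $U_s$ and $x$; (b) the ``rank swap'' permitted by rank similarity (iv), which allows $U_s$ to be replaced by $U_t$ inside any conditional law given $(T,Z,X)$; and (c) the almost-sure identities $\{Y_s\le y\}=\{U_s\le F_{Y_s|X}(y|x)\}$ and $\{U_t\le u\}=\{Y_t\le Q_{Y_t|X}(u|x)\}$, valid for arguments in $\mY^\circ$ by the strict monotonicity and continuity consequences of Assumption~\ref{asp:ivqr}(i) (as in Lemmas~\ref{lem:3b}--\ref{lem:3c}), combined with the full-support condition (v).

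For part (a), I would first observe that since $Y_s=Q_{Y_s|X}(U_s|x)$ and $U_s\indep Z\mid X=x$, one has $Y_s\indep Z\mid X=x$, so $F_{Y_s|X}(y|x)=P(Y_s\le y\mid Z=z,X=x)$. Marginalizing over $T$, this equals $\sum_{t=0}^k P(Y_s\le y\mid T=t,Z=z,X=x)\,p_t(z,x)$. Inside each summand, for $y\in\mY^\circ$ I rewrite $\{Y_s\le y\}=\{U_s\le F_{Y_s|X}(y|x)\}$. Because $T=\rho(Z,X,V)$ by (iii), the conditioning event $\{T=t,Z=z,X=x\}$ is $\sigma(Z,X,V)$-measurable, so rank similarity (iv) gives
\[
P\bigl(U_s\le F_{Y_s|X}(y|x)\bigm|T=t,Z=z,X=x\bigr)=P\bigl(U_t\le F_{Y_s|X}(y|x)\bigm|T=t,Z=z,X=x\bigr).
\]
Rewriting $\{U_t\le F_{Y_s|X}(y|x)\}=\{Y_t\le Q_{Y_t|X}(F_{Y_s|X}(y|x)|x)\}=\{Y_t\le \phi_{s,t}^x(y)\}$, and noting that $Y=Y_t$ on $\{T=t\}$, produces the summand $F_{Y|TZX}(\phi_{s,t}^x(y)|t,z,x)$.

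For part (b) the structure is completely parallel. The same independence step gives $E[Y_s|X=x]=\sum_{t=0}^k E[Y_s|T=t,Z=z,X=x]\,p_t(z,x)$. Next, Assumption~\ref{asp:ivqr}(i) and (v) imply $Y_s\in\mY^\circ$ almost surely given $X=x$, so $Y_s=Q_{Y_s|X}(U_s|x)$ a.s. Applying rank similarity under the conditional expectation swaps $U_s$ for $U_t$, and then $Q_{Y_s|X}(U_t|x)=Q_{Y_s|X}(F_{Y_t|X}(Y_t|x)|x)=\phi_{t,s}^x(Y_t)$, which equals $\phi_{t,s}^x(Y)$ on $\{T=t\}$, delivering (\ref{eq:clfmasf}).

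The main obstacle I anticipate is justifying the rank-swap step carefully: Assumption~\ref{asp:ivqr}(iv) is stated as equality in distribution of $U_s$ and $U_t$ given $(X,Z,V)$, whereas the argument above uses equality given $(T,Z,X)$. The bridge is exactly Assumption~\ref{asp:ivqr}(iii): since $T=\rho(Z,X,V)$ is $\sigma(Z,X,V)$-measurable, the conditional law of any random variable given $(T,Z,X)$ is obtained by integrating its conditional law given $(Z,X,V)$ against the conditional distribution of $V$ given $(T,Z,X)$, and this integration depends on the random variable only through the latter conditional law, which is identical for $U_s$ and $U_t$ by (iv). A secondary technical point, handled by (i) and (v) together with the cited auxiliary lemmas, is that the set inclusions between events on $Y_t$ and events on $U_t$ are only equalities almost surely and only on the interior $\mY^\circ$; but since continuity of the potential outcome cdfs makes $P(Y_t\in\mY\setminus\mY^\circ|X=x)=0$, this null-set bookkeeping does not affect the displayed identities.
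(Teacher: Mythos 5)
Your proposal is correct and follows essentially the same route as the paper's proof: decompose via conditional independence and the law of total probability over $T$, swap $U_s$ for $U_t$ on $\{T=t,Z=z,X=x\}$ using rank similarity (with the $\sigma(Z,X,V)$-measurability of $T$ as the bridge, a point the paper leaves implicit), and translate $U$-events into $Y$-events through $\phi_{s,t}^x$. The only cosmetic difference is in part (b), where you write $Y_s=Q_{Y_s|X}(U_s|x)$ a.s. and swap inside the expectation directly, while the paper instead establishes $Y_s\deq\phi_{t,s}^x(Y_t)$ conditional on $(T,Z,X)$ by matching cdfs on $\mY^\circ$ and extending by continuity; both are valid.
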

\noindent Lemma \ref{lem:clfm} follows from the rank similarity assumption.
For each $s,t\in\mT$, the rank variables $U_s$ and $U_t$, and hence $Y_s$ and $\phi_{t,s}(Y_t)$, are identically distributed conditional on $(T,Z)=(t,z)$. 

Lemma \ref{lem:clfm} implies that for each $s\in\mT$, $E[Y_s]$ and $Q_{Y_s} (\tau )$ for $\tau\in(0,1)$ are identified as closed-form expressions if $\phi_{s,t}$ for $t\in\mT$ is also identified as a closed-form expression.
Hence, we establish sufficient conditions to identify $\phi_{s,t}$'s and derive the closed-form expressions of $\phi_{s,t}$'s.

\begin{remark}[]\label{rem:clfm}
We only need Lemma \ref{lem:clfm} (a) to identify $E[Y_s]$ and $Q_{Y_s} (\tau )$ for $\tau\in(0,1)$. 
The identification of $\phi_{s,t}(y)$ for $y\in\mY^\circ$ suffices for the identification of the treatment effects because we assume that $F_{Y_s} (y )$ is continuous in $y\in\mY$ in Assumption \ref{asp:ivqr} (i) and that the closure of $\mY^\circ$ is equal to $\mY$ in Assumption \ref{asp:ivqr} (v).
In Appendix \ref{sec:g}, we relax Assumption \ref{asp:ivqr} (v) and derive the closed-form expressions of $\phi_{s,t}$ and $F_{Y_s} (\cdot )$ on a sufficiently large subset of $\mY$.
\end{remark}

\subsection{Preliminary identification results}\label{subsec:4.0}
In this section, we introduce some basic identification results under our monotonicity assumption.
We first establish the identification of the compliers under our monotonicity assumption.
The following lemma shows that when $D_t(z)\leq D_t({z'})$ and $P(\mC^t_{z,z'})>0$ hold almost surely for $(z,z')\in\mP$ and $t\in\mT$, each probability of $\mC^t_{z,z'}$ and the conditional cdf of $Y_t$ given $\mC^t_{z,z'}$ are identified as closed-form expressions. 
\cite{Heckman2018} show a similar result under the unordered monotonicity assumption.

\begin{lemma}[Identification of the compliers]\label{lem:idcpl}
Assume that Assumption \ref{asp:mt} holds, and that $P(D_t(z)\leq D_t({z'}))=1$ and $P(\mC^t_{z,z'})>0$ hold for $(z,z')\in\mP$ and $t\in\mT$. Define $p_t(z)$ as in (\ref{eq:pt}). 
Then $P(\mC^t_{z,z'})$ and $F_{Y_t|\mC^t_{z,z'} }(y)$ for $y\in\mY$ are identified as
\begin{equation}\label{eq:15}
P(\mC^t_{z,z'})=p_t(z')-p_t(z),\qquad p_t(z')>p_t(z),
\end{equation}
and
\begin{equation}\label{eq:16}
F_{Y_t|\mC^t_{z,z'} }(y)=\frac{F_{Y|TZ}(y|t,z')p_t(z')-F_{Y|TZ}(y|t,z)p_t(z)}{p_t(z')-p_t(z)}.
\end{equation}
\end{lemma}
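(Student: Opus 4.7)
The plan is to exploit two ingredients: (1) under the monotonicity inequality $D^t_z \le D^t_{z'}$, the event $\{D^t_{z'}=1\}$ decomposes as a disjoint union $\{D^t_z=1\} \sqcup \mC^t_{z,z'}$, and (2) under Assumption \ref{asp:mt}(i), the joint independence of $(Y_t,T_z)_{z\in\lambda,t\in\mT}$ and $Z$ given $X=x$ lets me rewrite objects conditional on $Z=z$ as unconditional (in $Z$) expressions involving the potential variables.

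First, to identify $P(\mC^t_{z,z'}|X=x)$, I would observe that $D^t_{z'} = D^t_z + \mathbf{1}_{\mC^t_{z,z'}}$ almost surely conditional on $X=x$. Taking conditional expectations and using Assumption \ref{asp:mt}(i) to move between $E[D^t_z|X=x]$ and $E[D^t_z|Z=z,X=x]$, together with the consistency identity $D^t\cdot\mathbf{1}\{Z=z\} = D^t_z\cdot\mathbf{1}\{Z=z\}$, I get $E[D^t_z|X=x] = E[D^t|Z=z,X=x] = p_t(z,x)$, and similarly for $z'$. This yields \eqref{eq:15}. Positivity $p_t(z',x) > p_t(z,x)$ follows from the instrument relevance Assumption \ref{asp:mt}(iii) combined with monotonicity (since $P(\mC^t_{z',z}|X=x)=0$).

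Next, to identify the cdf $F_{Y_t|\mC^t_{z,z'} X}(y|x)$, I would carry out the same decomposition after multiplying through by the indicator $\mathbf{1}\{Y_t\le y\}$. Concretely, consistency gives
\begin{equation*}
P(Y\le y,\,T=t\mid Z=z',X=x) = E[\mathbf{1}\{Y_t\le y\}\,D^t_{z'}\mid Z=z',X=x],
\end{equation*}
and Assumption \ref{asp:mt}(i) removes the conditioning on $Z=z'$. Hence the numerator in \eqref{eq:16} equals
\begin{equation*}
E[\mathbf{1}\{Y_t\le y\}(D^t_{z'}-D^t_z)\mid X=x] = E[\mathbf{1}\{Y_t\le y\}\,\mathbf{1}_{\mC^t_{z,z'}}\mid X=x],
\end{equation*}
which is $F_{Y_t|\mC^t_{z,z'} X}(y|x)\,P(\mC^t_{z,z'}|X=x)$. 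Dividing by \eqref{eq:15} delivers \eqref{eq:16}.

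The argument is essentially the classical LATE derivation of \cite{Imbens1994} transplanted to the indicator $D^t$; the only subtlety is keeping track of which independence is being invoked and making sure that the joint independence in Assumption \ref{asp:mt}(i) indeed covers the joint object $(Y_t,T_z)$ rather than merely the marginals, so that $E[\mathbf{1}\{Y_t\le y\}D^t_{z'}|Z=z',X=x]$ may be rewritten without the conditioning on $Z$. No nontrivial obstacle arises once that bookkeeping is done; the bulk of the proof is the algebraic rearrangement above.
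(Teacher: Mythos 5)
Your proof is correct and follows essentially the same route as the paper's: the paper writes $\mC^t_{z,z'}=\{D^t_{z'}=1\}\setminus\{D^t_z=D^t_{z'}=1\}$ and uses the independence in Assumption \ref{asp:mt}(i) to equate $p_t(z,x)$ with $P(D^t_z=1|X=x)$, which is exactly your additive identity $D^t_{z'}=D^t_z+\mathbf{1}_{\mC^t_{z,z'}}$ together with the same change of conditioning, and the cdf part is the same decomposition after multiplying by $\mathbf{1}\{Y_t\le y\}$. The only cosmetic difference is that positivity of $p_t(z',x)-p_t(z,x)$ follows directly from the hypothesis $P(\mC^t_{z,z'}|X=x)>0$ plus \eqref{eq:15}, so the appeal to Assumption \ref{asp:mt}(iii) is unnecessary.
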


With Lemma \ref{lem:idcpl} at hand, we establish the identification of the counterfactual mappings.
To provide intuition, we first review the identification results for the binary treatment case by \cite{VX2017}. 
Let the treatment be binary so that $\mT=\{0,1\}$.
Observe that
\begin{equation}\label{eq:eqcpl}
\mC^1_{z,z'}=\mC^0_{z',z}
\end{equation}
holds from the definition when $T$ is binary.
Suppose Assumptions \ref{asp:ivqr} and \ref{asp:mt} hold with $P(D_1(z)\leq D_1({z'}))=1$ and $P(\mC^1_{z,z'})>0$.
Note that under the rank similarity assumption, the rank variables
$U_0$ and $U_1$ are identically distributed conditional on $\mC^1_{z,z'}$.\footnote{We show this statement in Lemma \ref{lem:3a} in Appendix \ref{sec:a.5}.} 
Furthermore, from the definition of the counterfactual mapping, if $y\in\mY^\circ$ is the $\tau\in(0,1)$ quantile of the distribution of $Y_1$, then $\phi_{1,0}(y)$ is the $\tau$ quantile of the distribution of $Y_0$.
Therefore, given (\ref{eq:eqcpl}), we obtain the following equation for the potential outcome conditional distributions given the compliers:
\begin{equation}\label{eq:cp01}
F_{Y_1|\mC^1_{z,z'}}(y)=F_{Y_0|\mC^0_{z,z'}}(\phi_{1,0}(y))\text{ for }y\in\mY^\circ.
\end{equation}
Finally, $F_{Y_0|\mC^0_{z,z'}}$ and $F_{Y_1|\mC^1_{z,z'}}$ are identified from Lemma \ref{lem:idcpl}, and $\phi_{1,0}(y)$ for $y\in\mY^\circ$ is identified as $\phi_{1,0}(y)=Q_{Y_0|\mC^0_{z,z'}}(F_{Y_1|\mC^1_{z,z'}}(y))$ by solving (\ref{eq:cp01}) for $\phi_{1,0}$.

In the case of multi-valued treatment settings, generally, we can compare no two treatment states on the same compliers as in (\ref{eq:cp01}).
This is because the relationships between the compliers become more complicated than (\ref{eq:eqcpl}), and the compliers do not generally coincide.
We overcome this difficulty by developing systems of relationships between two or more compliers for multiple treatment states that can be solved simultaneously for the counterfactual mappings.

\subsection{Identification in multi-valued treatment settings}\label{subsec:4.1}
In this section, we establish the identification of counterfactual mappings using the relationships of compliers when the treatment is multi-valued.
Throughout the identification analysis, we consider Example I for notation simplicity. The following argument does not rely on the settings of Example I.
We first consider a case where the treatment takes three values (then we have $\mT=\{0,1,2\}$), and assume Assumptions \ref{asp:ivqr} and \ref{asp:mt} hold for the subset $\Lambda$ of $\mP$.

We first introduce ``sign treatments" that characterize the type of monotonicity relationship of each pair $(z_1,z_2)$ contained in the monotonicity subset $\Lambda$. 
Suppose that, for $(z_1,z_2)\in\Lambda$, there uniquely exists $t{(z_1,z_2)}\in\mT$ such that either
\[
D_{t{(z_1,z_2)}}({z_1})\geq D_{t{(z_1,z_2)}}({z_2})\text{ and }
D_j({z_1})\leq D_j({z_2})\,\, \text{ for }j\in\mT\setminus\{t{(z_1,z_2)}\}
\]
or
\[
D_{t{(z_1,z_2)}}({z_1})\leq D_{t{(z_1,z_2)}}({z_2})\text{ and } D_j({z_1})\geq D_j({z_2})\,\, \text{ for }j\in\mT\setminus\{t{(z_1,z_2)}\}
\]
holds almost surely. Then we call this $t{(z_1,z_2)}$ ``sign treatment of $(z_1,z_2)$," and we state ``$(z_1,z_2)$ has a sign treatment" in this paper.

When the treatment takes three values, each $(z_1,z_2)\in\Lambda$ has a sign treatment. 
This is because if the monotonicity inequalities of $(z_1,z_2)$ are all the same, then no compliers exist, and $\Lambda$ cannot contain $(z_1,z_2)$. 
Suppose that $D_j({z_1})\leq D_j({z_2})$ holds almost surely for all $j=0,1,2$. 
Then, $D_0({z_1})\geq D_0({z_2})$ holds almost surely because $D_1({z_1})\leq D_1({z_2})$ and $D_2({z_1})\leq D_2({z_2})$ imply $1-D_0({z_1})\leq 1-D_0({z_2})$. 
Hence, $D_0({z_1})= D_0({z_2})$ holds almost surely. 
Applying a similar argument to treatment states 1 and 2 gives $D_j({z_1})=D_j({z_2})$ for each $j\in\mT$, which violates Assumption \ref{asp:mt} (iii).\footnote{When the treatment takes more than three values, each $(z_1,z_2)\in\Lambda$ may not have a sign treatment.
Let $\mT=\{0,1,2,3\}$, and suppose that the following monotonicity inequalities hold for $(z_1,z_2)$:
\[
D_{0}({z_1})\geq D_{0}({z_2}),\,\, D_{1}({z_1})\geq D_{1}({z_2}),\,\,
D_{2}({z_1})\leq D_{2}({z_2}),\text{ and }
D_3({z_1})\leq D_3({z_2}).
\]
Then, from the definition, $(z_1,z_2)$ does not have a sign treatment.}

With the sign treatments, we employ the following assumption and assume that two different types of monotonicity relationships exist.
\begin{assumption}[Existence of different types of monotonicity relationships]\label{asp:apb.2}
In the case of $\mT = \{0,1,2\}$, the monotonicity subset $\Lambda$ contains two pairs of instrument values $\lambda_1$ and $\lambda_2$ such that the 
following condition holds:
\begin{enumerate}
\item[] 
There uniquely exists $t({\lambda_1})\in\mT$ and $t({\lambda_2})\in\mT$ with $\lambda_i=(\lambda_{i,1},\lambda_{i,2})$ and $t({\lambda_1})\neq t({\lambda_2})$ such that 
\[
D_{t({\lambda_i})}({\lambda_{i,1}})\geq D_{t({\lambda_i})}({\lambda_{i,2}})\text{ and }
D_j({\lambda_{i,1}})\leq D_j({\lambda_{i,2}})\,\, \text{ for }j\in \mT\setminus\{t({\lambda_i})\}
\]
hold almost surely for $i=1,2$. 
\end{enumerate}
\end{assumption}
\noindent Under Assumption \ref{asp:apb.2}, the monotonicity subset $\Lambda$ contains two pairs of instrument values $\lambda_1$ and $\lambda_2$ such that each pair $\lambda_i$ has a sign treatment, and the two sign treatments $t({\lambda_1})$ and $t({\lambda_2})$ are different. 
Then, each $t({\lambda_i})$ characterizes a type of monotonicity relationship of $\lambda_i$. 
Assumption \ref{asp:apb.2} holds in Example I.
From Table \ref{tab:mtc1} in Section \ref{subsec:3.2}, $(1,0)$ and $(2,0)$ have sign treatments $t{(1,0)}=1$ and $t{(2,0)}=2$, respectively. 
Then, $(1,0)$ and $(2,0)$ induce different types of monotonicity relationships.

We can interpret Assumption \ref{asp:apb.2} as an instrument relevance condition that requires monotonic correlation between each of the two endogenous variables $D_{t({\lambda_1})}$ and $D_{t({\lambda_2})}$ and pairs of instrument values $\lambda_1$ and $\lambda_2$. 
We illustrate this point with Example I.
For $i=1,2$, whether the group assignment is $i$ or 0 produces a monotonic effect only toward the field choice $D_{t{(i,0)}}=D_i$. 
Compared with group 0, group $i$ additionally offers a discount only to field $i$, and only the preference for field $i$ is affected by the difference between these two group assignments. 

Different types of monotonicity relationships are essential for our identification analysis.
We obtain key relationships between the compliers essential for identifying the counterfactual mappings based on the two different monotonicity relationships.
For Example I, the proof of Lemma \ref{lem:apb} in Appendix \ref{sec:a} derives the following key relationships between the compliers:
\begin{eqnarray}
F_{Y_1|\mC^1_{0,1}}(\phi_{2,1}(y))
&=& \frac{F_{Y_0|\mC^0_{1,0}}(\phi_{2,0}(y))P(\mC^0_{1,0})+F_{Y_2|\mC^2_{1,0}}(y)P(\mC^2_{1,0})}{P(\mC^1_{0,1})}, \label{eq:ab1} \\
F_{Y_2|\mC^2_{0,2}}(y)
&=& \frac{F_{Y_0|\mC^0_{2,0}}(\phi_{2,0}(y))P(\mC^0_{2,0})+F_{Y_1|\mC^1_{2,0}}(\phi_{2,1}(y))P(\mC^1_{2,0})}{P(\mC^2_{0,2})}. \label{eq:ac1}
\end{eqnarray}
Relationships \eqref{eq:ab1} and \eqref{eq:ac1} correspond to (\ref{eq:cp01}) in the binary treatment case.
Applying Lemma \ref{lem:idcpl}, all the functions in \eqref{eq:ab1} and \eqref{eq:ac1} except for the counterfactual mappings are identified.

We identify the counterfactual mappings by solving \eqref{eq:ab1} and \eqref{eq:ac1} simultaneously for each $y\in\mY^\circ$.
When we fix the value of $y$ in \eqref{eq:ab1} and \eqref{eq:ac1} at any $y=y^{f}\in\mY^\circ$, $\phi_{2,1}(y^f)$ and $\phi_{2,0}(y^f)$ are the solutions to the following nonlinear simultaneous equations of two unknown variables $y_1$ and $y_0$:
\begin{eqnarray}
F_{Y_1|\mC^1_{0,1}}(y_1)
&=& \frac{F_{Y_0|\mC^0_{1,0}}(y_0)P(\mC^0_{1,0})+F_{Y_2|\mC^2_{1,0}}(y^f)P(\mC^2_{1,0})}{P(\mC^1_{0,1})}, \label{eq:ab1.f} \\
F_{Y_2|\mC^2_{0,2}}(y^f)
&=& \frac{F_{Y_0|\mC^0_{2,0}}(y_0)P(\mC^0_{2,0})+F_{Y_1|\mC^1_{2,0}}(y_1)P(\mC^1_{2,0})}{P(\mC^2_{0,2})}. \label{eq:ac1.f}
\end{eqnarray}
We have two equations to solve two unknowns, and $\phi_{2,1}(y^f)$ and $\phi_{2,0}(y^f)$ are identified if the solution is unique at $y^{f}\in\mY^\circ$.
The two equations (\ref{eq:ab1.f}) and (\ref{eq:ac1.f}) have a unique solution for $y_1$ and $y_0$ when the conditional cdfs of the potential outcome on compliers are strictly increasing on $\mY^\circ$.
Assumption \ref{asp:mt} (iv) implies the strict monotonicity of the conditional cdfs, and $\phi_{2,1}(y^f)$ and $\phi_{2,0}(y^f)$ are identified as follows:\footnote{See Appendix \ref{sec:a.75} for the derivation of \eqref{eq:ac3,4}.}
\begin{equation}\label{eq:ac3,4}  
\phi_{2,1}(y^f)=G_{1,2}^{y^f-1}(F_{Y_2|\mC^2_{0,2}}(y^f)) \text{ and } \phi_{2,0}(y^f)=\phi_{1,0}^{y^f}(\phi_{2,1}(y^f)),
\end{equation}
where $\phi_{1,0}^{y_f}$ with its domain $\mY^f\subset\mY$ is defined as
\begin{equation}\label{eq:A}
\phi_{1,0}^{y_f}(y)
:=Q_{Y_0|\mC^0_{1,0}}\left(\frac{F_{Y_1|\mC^1_{0,1}}(y)P(\mC^1_{0,1})-F_{Y_2|\mC^2_{1,0}}(y^f)P(\mC^2_{1,0})}{P(\mC^0_{1,0})}\right)\text{ for }y\in\mY^f,
\end{equation}
and we define a function $G_{1,2}^{y^f}$ as 
\begin{equation}\label{eq:46}
G_{1,2}^{y^f}(y):=\frac{F_{Y_0|\mC^0_{2,0}}(\phi_{1,0}^{y^f}(y))P(\mC^0_{2,0})+F_{Y_1|\mC^1_{2,0}}(y)P(\mC^1_{2,0})}{P(\mC^2_{0,2})}.
\end{equation}
Other counterfactual mappings on $\mY^\circ$ are inversions or compositions of $\phi_{2,1}$ and $\phi_{2,0}$, and they are also identified as closed-form expressions. 
The following lemma shows the identification of the counterfactual mappings under Assumption \ref{asp:apb.2}:
\begin{lemma}[Identification of counterfactual mappings from monotonicity]\label{lem:apb.2}
Suppose that Assumptions \ref{asp:ivqr}-\ref{asp:apb.2} hold for the case of $\mT = \{0,1,2\}$,. 
Then, $\phi_{s,t}(y)$ for $y\in\mY^\circ$ and $s,t\in\mT$ are identified.
\end{lemma}

We provide intuition for the identification of the counterfactual mappings.
First, the two relationships \eqref{eq:excpls} and \eqref{eq:excpls2} that are key conditions for identification are based on the following two relationships between compliers:
\begin{equation}\label{eq:excpls}
\mC^1_{0,1}=\mC^0_{1,0}\cup\mC^2_{1,0}\quad\text{and}\quad\mC^0_{1,0}\cap\mC^2_{1,0}=\emp,
\end{equation}
\begin{equation}\label{eq:excpls2}
\mC^2_{0,2}=\mC^0_{2,0}\cup\mC^1_{2,0}\quad\text{and}\quad\mC^0_{2,0}\cap\mC^1_{2,0}=\emp.
\end{equation}
Relationships \eqref{eq:excpls} and \eqref{eq:excpls2} correspond to \eqref{eq:eqcpl} in the binary treatment case.
We show that two monotonicity relationships for $(1,0)$ and $(2,0)$ generate \eqref{eq:excpls} and \eqref{eq:excpls2}. 

Figure \ref{fig:ex1.2} visualizes these compliers relationships \eqref{eq:excpls} and \eqref{eq:excpls2} implied by the separable index model introduced in Section \ref{subsec:3.2}.
Figure \ref{fig:ex1.2} (a) visualizes how the compliers are generated from
a shift in the instrument from $Z=0$ to $Z=1$.
First, because $\mC^1_{0,1}$ compliers are driven by individuals who find field 1 more attractive and leave from either $T=0$ group or $T=2$ group, we have
\begin{equation}\label{eq:19}
\mC^1_{0,1}=\{D_1({1})=1,D_0({0})=1\}\cup\{D_1({1})=1,D_2({0})=1\}.
\end{equation}
Obviously, the sets $\{D_1({1})=1,D_0({0})=1\}$ and $\{D_1({1})=1,D_2({0})=1\}$ are disjoint.
Second, we show that
\begin{equation}\label{eq:case1}
\mC^0_{1,0}=\{D_1({1})=1,D_0({0})=1\}\quad\text{and}\quad\mC^2_{1,0}=\{D_1({1})=1,D_2({0})=1\}.
\end{equation}
To see this, note that $\mC^0_{1,0}$ compliers, which consist of individuals leaving from the $T=0$ group, are entirely driven by those who find field 1 more attractive.
This is because, from the restrictions on the cost functions \eqref{eq:cf1} and \eqref{eq:cf2}, a shift in the instrument from $Z=0$ to $Z=1$ decreases the cost of choosing field 1 but does not change the cost of choosing field 2.
Analogously, $\mC^2_{1,0}$ compliers, which consist of individuals leaving from the $T=2$ group, are also entirely driven by those who find field 1 more attractive.
Therefore, \eqref{eq:excpls} holds from (\ref{eq:19}) and (\ref{eq:case1}).
By analogous logic, as visualized in Figure \ref{fig:ex1.2} (b), the compliers generated from a shift in the instrument from $Z=0$ to $Z=2$ satisfy relationship \eqref{eq:excpls2}.

\begin{figure}[htbp]
\centering
\begin{minipage}[b]{0.45\linewidth}
\centering
\scalebox{0.95}{
\begin{tikzpicture}
\coordinate(O)at(0,0);
\coordinate(XS)at(0,0);
\coordinate(XL)at(6,0);
\coordinate(YS)at(0,0);
\coordinate(YL)at(0,6);
\draw[->,>=stealth,semithick](XS)--(XL)node[below left]{$V_1$};
\draw[->,>=stealth,semithick](YS)--(YL)node[below left]{$V_2$};

\coordinate(P0)at(3.5,3.5);
\draw[thick]($(XS)!(P0)!(XL)$)node[below]{$\mu_1(0)$}--(P0)--($(YS)!(P0)!(YL)$);
\draw[thick,domain=3.5:6]plot(\x,\x);

\coordinate(P1)at(2,3.5);
\draw[thick]($(XS)!(P1)!(XL)$)node[below]{$\mu_1(1)$}--(P1)--($(YS)!(P1)!(YL)$)node[left]{$\mu_2(0)$};
\draw[thick,domain=2:4.5]plot(\x,\x+1.5);

\draw(1,2)node{$D_0(1)=1$};
\draw(4.5,2)node{$D_1(0)=1$};
\draw(1.5,5)node{$D_2(1)=1$};
\draw(2.7,1.5)node{$\mC^0_{1,0}$};
\draw(4.2,5)node{$\mC^2_{1,0}$};
\draw(4.9,3.1)node{$\mC^1_{0,1}$};
\draw[->,semithick](4.9,3.4)--(3.8,4.4);
\draw[->,semithick](4.9,2.8)--(3.1,2.3);
\end{tikzpicture}
}
\subcaption{Compliers between $Z=0$ and 1}
\end{minipage}
\begin{minipage}[b]{0.45\linewidth}
\centering
\scalebox{0.95}{
\begin{tikzpicture}
\coordinate(O)at(0,0);
\coordinate(XS)at(0,0);
\coordinate(XL)at(6,0);
\coordinate(YS)at(0,0);
\coordinate(YL)at(0,6);
\draw[->,>=stealth,semithick](XS)--(XL)node[below left]{$V_1$};
\draw[->,>=stealth,semithick](YS)--(YL)node[below left]{$V_2$};

\coordinate(P0)at(3.5,3.5);
\draw[thick]($(XS)!(P0)!(XL)$)--(P0)--($(YS)!(P0)!(YL)$)node[left]{$\mu_2(0)$};
\draw[thick,domain=3.5:6]plot(\x,\x);

\coordinate(P2)at(3.5,2);
\draw[thick]($(XS)!(P2)!(XL)$)node[below]{$\mu_1(0)$}--(P2)--($(YS)!(P2)!(YL)$)node[left]{$\mu_2(2)$};
\draw[thick,domain=3.5:6]plot(\x,\x-1.5);

\draw(2,1)node{$D_0(2)=1$};
\draw(5,1.5)node{$D_1(2)=1$};
\draw(2,5.3)node{$D_2(0)=1$};
\draw(2,2.7)node{$\mC^0_{2,0}$};
\draw(4.5,3.8)node{$\mC^1_{2,0}$};
\draw(2.5,4.5)node{$\mC^2_{0,2}$};
\draw[->,semithick](2.5,4.2)--(2,3.1);
\draw[->,semithick](2.9,4.5)--(4.5,4.2);
\end{tikzpicture}
}
\subcaption{Compliers between $Z=0$ and 2}
\end{minipage}
\caption{Visualization of the compliers}
\label{fig:ex1.2}
\end{figure}

Next, we provide an intuition for identifying $\phi_{2,1}(y^f)$ and $\phi_{2,0}(y^f)$ for each $y^{f}\in\mY^\circ$ as the unique solution to the two equations (\ref{eq:ab1.f}) and (\ref{eq:ac1.f}) of two unknowns $y_1$ and $y_0$.
We compare the set of (an infinite number of) solutions to each of these equations.
The two equations have a unique solution when these two sets intersect at a point.
Figure \ref{fig:ex1.3} visualizes the solutions to (\ref{eq:ab1.f}) and (\ref{eq:ac1.f}) respectively in the two-dimensional space of $(y_1,y_0)$.
First, the set of solutions to (\ref{eq:ab1.f}) form a strictly increasing relationship from the strict monotonicity of $F_{Y_1|\mC^1_{0,1}}$ and $F_{Y_0|\mC^0_{1,0}}$.
To see this, suppose that both $(y^-_1,y^-_0)$ and $(y^+_1,y^+_0)$ are solutions to (\ref{eq:ab1.f}).
If we have $y^-_1 < y^+_1$, we also have $F_{Y_1|\mC^1_{0,1}}(y^-_1)<F_{Y_1|\mC^1_{0,1}}(y^+_1)$ from the strict monotonicity of $F_{Y_1|\mC^1_{0,1}}$.
Then, because $F_{Y_1|\mC^1_{0,1}}$ and $F_{Y_0|\mC^0_{1,0}}$ are on the left and right-hand sides of (\ref{eq:ab1.f}), respectively, we need $F_{Y_0|\mC^0_{1,0}}(y^-_0)<F_{Y_0|\mC^0_{1,0}}(y^+_0)$ for (\ref{eq:ab1.f}) to hold, and this implies that $y^-_0 < y^+_0$ from the strict monotonicity of $F_{Y_0|\mC^0_{1,0}}$.
Next, from an analogous argument with the strict monotonicity of $F_{Y_1|\mC^1_{2,0}}$ and $F_{Y_0|\mC^0_{2,0}}$, the set of solutions to (\ref{eq:ac1.f}) form a strictly decreasing relationship, where both $F_{Y_1|\mC^1_{2,0}}$ and $F_{Y_0|\mC^0_{2,0}}$ are on the right-hand side of (\ref{eq:ac1.f}).
Hence, because strictly increasing and strictly decreasing relationships intersect only once, (\ref{eq:ab1.f}) and (\ref{eq:ac1.f}) have a unique solution at $(\phi_{2,1}(y^f),\phi_{2,0}(y^f))$.

\begin{figure}[htbp]
\centering
\scalebox{1.0}{
\begin{tikzpicture}
\coordinate(O)at(0,0);
\coordinate(XS)at(0,0);
\coordinate(XL)at(6,0);
\coordinate(YS)at(0,0);
\coordinate(YL)at(0,6);
\draw[->,>=stealth,semithick](XS)--(XL)node[below left]{$y_1$};
\draw[->,>=stealth,semithick](YS)--(YL)node[below left]{$y_0$};

\draw[red,thick,domain=0.5:5.5]plot(\x,\x)node[above]{Solutions to (\ref{eq:ab1.f})};
\draw[blue,thick,domain=0.5:5.5]plot(\x,-\x+6)node[above right]{Solutions to (\ref{eq:ac1.f})};

\coordinate(P)at(3,3);
\fill(P)circle(0.075);
\draw[dashed,thick]($(XS)!(P)!(XL)$)node[below]{$\phi_{2,1}(y^f)$}--(P)--($(YS)!(P)!(YL)$)node[left]{$\phi_{2,0}(y^f)$};
\coordinate(P-)at(2,2);
\draw[dashed,thick]($(XS)!(P-)!(XL)$)node[below]{$y^-_1$}--(P-)--($(YS)!(P-)!(YL)$)node[left]{$y^-_0$};
\coordinate(P+)at(4,4);
\draw[dashed,thick]($(XS)!(P+)!(XL)$)node[below]{$y^+_1$}--(P+)--($(YS)!(P+)!(YL)$)node[left]{$y^+_0$};
\end{tikzpicture}
}
\caption{Identification image when $k=2$}
\label{fig:ex1.3}
\end{figure}

Next, we generalize the above argument for identifying the counterfactual mappings to the case of arbitrary $k$.
With the sign treatments, we employ the following assumption and assume that $k$ different types of monotonicity relationships exist.
\begin{assumption}[Existence of different types of monotonicity relationships]\label{asp:apb}
The monotonicity subset $\Lambda$ contains $k$ pairs of instrument values $\lambda_1,\ldots,\lambda_k$ such that the 
following condition holds:
\begin{enumerate}
\item[] 
For $i=1,\ldots,k$, there uniquely exists $t({\lambda_i})\in\mT$ with $\lambda_i=(\lambda_{i,1},\lambda_{i,2})$ and $t({\lambda_i})\neq t({\lambda_j})$ for $i\neq j$ such that 
\[
D_{t({\lambda_i})}({\lambda_{i,1}})\geq D_{t({\lambda_i})}({\lambda_{i,2}})\text{ and }
D_j({\lambda_{i,1}})\leq D_j({\lambda_{i,2}})\,\, \text{ for }j\in \mT\setminus\{t({\lambda_i})\}
\]
hold almost surely. 
\end{enumerate}
\end{assumption}
\noindent Under Assumption \ref{asp:apb}, the monotonicity subset $\Lambda$ contains $k$ pairs of instrument values $\lambda_1,\ldots,\lambda_k$ such that each pair $\lambda_i$ has a sign treatment, and the sign treatments $t({\lambda_i})$'s are all different. 
Then, each $t({\lambda_i})$ characterizes a type of monotonicity relationship of $\lambda_i$. 
As in the case of $k=2$, Assumption \ref{asp:apb} holds in Example I.
From Table \ref{tab:mtc2} in Section \ref{subsec:3.2}, for $i=1,\ldots,k$, $(i,0)$ has a sign treatment $t{(i,0)}=i$. Then, for $i\neq j$, $(i,0)$ and $(j,0)$ induce different types of monotonicity relationships.

Under Assumption \ref{asp:apb}, we obtain key relationships between the compliers essential for identifying the counterfactual mappings.
For Example I, the proof of Lemma \ref{lem:apb} in Appendix \ref{sec:a} derives the following key relationships between the compliers:
\begin{equation}\label{eq:4.3.4}
F_{Y_i|\mC^i_{0,i}}(\phi_{k,i}(y))
=\frac{\sum_{j\neq i}F_{Y_j|\mC^j_{i,0}}(\phi_{k,j}(y))P(\mC^j_{i,0})}{P(\mC^i_{0,i})}\text{ for }y\in\mY^\circ\text{ and }i=1,\ldots,k.
\end{equation}
The $k$ relationships of \eqref{eq:4.3.4} correspond to \eqref{eq:ab1} and \eqref{eq:ac1} in the case of $k=2$.
Applying Lemma \ref{lem:idcpl}, all the functions in (\ref{eq:4.3.4}) except for the counterfactual mappings are identified.
When we fix (\ref{eq:4.3.4}) at $y^{f}\in\mY^\circ$, the $k$ equations of (\ref{eq:4.3.4}) constitute nonlinear simultaneous equations of $\phi_{k,i}(y^{f})$ for $j=0,\ldots,k-1$.
We identify these values by solving the $k$ equations of (\ref{eq:4.3.4}) simultaneously at $y^{f}$.

The following lemma shows the identification of the counterfactual mappings under Assumption \ref{asp:apb}:
\begin{lemma}[Identification of counterfactual mappings from monotonicity]\label{lem:apb}
Suppose that Assumptions \ref{asp:ivqr}-\ref{asp:apb} hold. Then, $\phi_{s,t}(y)$ for $y\in\mY^\circ$ and $s,t\in\mT$ are identified.
\end{lemma}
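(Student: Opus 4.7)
The plan is to lift the binary-treatment argument sketched just before the lemma --- which produced the single identifying equation \eqref{eq:cp01} --- into a system of $k$ analogous equations that jointly pin down the unknown counterfactual mappings $\phi^x_{r,j}$ for all $j\in\mT$, where $r$ is any fixed reference level. Once these are in hand, every remaining $\phi^x_{s,t}$ follows from the composition rule $\phi^x_{s,t}=\phi^x_{r,t}\circ(\phi^x_{s,r})^{-1}$ noted in Section \ref{subsec:2.2}. Throughout I suppress $x$ in accordance with Section \ref{sec:3}.

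For each $i=1,\dots,k$ I would first analyze the pair $\lambda^i=(\lambda^i_1,\lambda^i_2)$ together with its sign treatment $t_{\lambda^i}$ guaranteed by Assumption \ref{asp:apb}. Repeating the reasoning used to establish \eqref{eq:excpls}, but with the role of ``1'' now played by the sign treatment $t_{\lambda^i}$, the sign-treatment structure yields the disjoint decomposition
\[
\mC^{t_{\lambda^i}}_{\lambda^i_2,\lambda^i_1}=\bigsqcup_{j\in\mT\setminus\{t_{\lambda^i}\}}\mC^j_{\lambda^i_1,\lambda^i_2}.
\]
Combining this with the rank-similarity consequence that $U_{t_{\lambda^i}}$ and $U_j$ share the same conditional distribution on each complier subset (the fact invoked just before \eqref{eq:cp01}) and with the defining property of the counterfactual mappings produces, for every $y\in\mY^\circ$, the multi-valued analog of \eqref{eq:4.3.4}:
\[
F_{Y_{t_{\lambda^i}}|\mC^{t_{\lambda^i}}_{\lambda^i_2,\lambda^i_1}}\!\bigl(\phi_{r,t_{\lambda^i}}(y)\bigr)\,P\bigl(\mC^{t_{\lambda^i}}_{\lambda^i_2,\lambda^i_1}\bigr)=\sum_{j\in\mT\setminus\{t_{\lambda^i}\}}F_{Y_j|\mC^j_{\lambda^i_1,\lambda^i_2}}\!\bigl(\phi_{r,j}(y)\bigr)\,P\bigl(\mC^j_{\lambda^i_1,\lambda^i_2}\bigr).
\]
By Lemma \ref{lem:idcpl} every coefficient and every complier-conditional cdf above is identified from the joint law of $(Y,T,Z)$; the unknowns at a fixed $y$ are exactly $\{\phi_{r,j}(y):j\in\mT\setminus\{r\}\}$, since $\phi_{r,r}(y)=y$. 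Because Assumption \ref{asp:apb} supplies $k$ pairs $\lambda^1,\dots,\lambda^k$ with pairwise distinct sign treatments, letting $i$ range over $1,\dots,k$ yields $k$ equations in these $k$ unknowns, with each unknown $\phi_{r,t_{\lambda^i}}(y)$ appearing as a ``privileged'' argument of the LHS of exactly one equation.

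The main obstacle will be proving that this nonlinear system has a unique solution at each $y\in\mY^\circ$. My plan is to exploit Assumption \ref{asp:mt}(iv), which makes every complier-conditional cdf continuous and strictly increasing on $\mY^\circ$, so each equation can be rewritten as an exact relation expressing $\phi_{r,t_{\lambda^i}}(y)$ as a strictly increasing function of the remaining $\phi_{r,j}(y)$. Viewing the system as $g_i(\phi_{r,0}(y),\dots,\phi_{r,k}(y))=0$, the Jacobian has strictly positive diagonal entries (from the LHS) and non-positive off-diagonal entries (from the RHS), and I would argue uniqueness either by an $M$-matrix / strict-monotonicity calculation --- which should make the rank condition behind \eqref{eq:ch2} transparent in terms of complier probabilities rather than conditional densities --- or, more concretely, by checking that the candidate $\phi_{r,j}(y)=Q_{Y_j}(F_{Y_r}(y))$ is the unique simultaneous solution via a coordinatewise monotone substitution across the $k$ equations indexed by the distinct sign treatments. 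Having identified $\phi_{r,j}(y)$ for all $j\in\mT$ and $y\in\mY^\circ$, identification of $\phi^x_{s,t}$ for arbitrary $s,t\in\mT$ is immediate.
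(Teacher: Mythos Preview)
Your derivation of the $k$-equation system coincides with Part~(i) of the paper's proof. The gap is in the uniqueness step. Your Jacobian/$M$-matrix route is not available under the stated hypotheses: Assumption~\ref{asp:ivqr}(i) imposes only continuity of the potential-outcome cdfs, so the complier-conditional cdfs need not be differentiable and the Jacobian need not exist --- the paper stresses exactly this in Section~\ref{subsec:4.4}, where the link to the Chernozhukov--Hansen rank condition is shown only under additional differentiability assumptions that the main result deliberately avoids. Even granting differentiability, the $M$-matrix property is not immediate: the diagonal and off-diagonal densities come from \emph{different} complier subgroups, so the row-sum identity $\sum_{j\neq i}P(\mC^j_{i,0})=P(\mC^i_{0,i})$ does not deliver diagonal dominance.

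The paper's Part~(ii) instead runs a derivative-free cascading contradiction. If a second solution has, say, $y_0'<\phi_{k,0}(y_k)$, the equation with sign treatment $k$ forces some $y_{j_1}'>\phi_{k,j_1}(y_k)$; writing $y_{j_1}'=\phi_{k,j_1}(y^{(1)})$ with $y^{(1)}>y_k$ and comparing the equation with sign treatment $j_1$ against the true solution evaluated at $y^{(1)}$ forces a further coordinate to exceed $\phi_{k,\cdot}(y^{(1)})$; iterating through the $k$ pairwise-distinct sign-treatment equations exhausts the coordinates and yields a contradiction. Your phrase ``coordinatewise monotone substitution'' gestures at this, but this stepping-and-ratcheting mechanism --- using each of the $k$ structurally distinct equations once, with the strictly increasing maps $\phi_{k,j}$ raising the comparison point at every step --- is the substance of the argument and needs to be spelled out. (Two minor slips: your ``each unknown on exactly one LHS'' claim requires the reference $r$ to be the unique element of $\mT$ not among the sign treatments, not any level; and the composition rule is $\phi_{s,t}=\phi_{r,t}\circ\phi_{s,r}$, not $\phi_{r,t}\circ(\phi_{s,r})^{-1}$.)
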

\begin{remark}[]\label{rem:apb}
In the proof of Lemma \ref{lem:apb} in Appendix \ref{sec:a}, we do not derive the closed-form expressions of $\phi_{s,t}$'s for identifying them. 
Appendix \ref{sec:a.25} provides the closed-form expressions of $\phi_{s,t}$'s for the general $k\in\mT$.
\end{remark}
\noindent As discussed in Section \ref{subsec:2.2}, $E[Y_s]$ and $Q_{Y_s} (\tau )$ for $\tau\in(0,1)$ are identified if $\phi_{s,t}(\cdot)$ for $t\in\mT$ are identified. 
Hence, we obtain the following theorem:
\begin{theorem}[Identification of potential outcome cdfs and means from monotonicity]\label{thm:apb}
Suppose that Assumptions \ref{asp:ivqr}-\ref{asp:apb} hold. Then, $E[Y_s]$ and $Q_{Y_s} (\tau )$ for $\tau\in(0,1)$ and $s\in\mT$ are identified.
\end{theorem}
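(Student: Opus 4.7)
The plan is to combine Lemma \ref{lem:apb} (identification of the counterfactual mappings) with Lemma \ref{lem:clfm} (expressions of the potential outcome cdfs and means via the counterfactual mappings), since the theorem is essentially a corollary of these two lemmas once the boundary issue is resolved. Assumptions \ref{asp:ivqr}--\ref{asp:apb} imply, via Lemma \ref{lem:apb}, that for each $x\in\mX$ the mapping $\phi_{s,t}^x(y)$ is identified at every $y\in\mY^\circ$ and every $s,t\in\mT$.

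First, I would identify the conditional cdf. Fix $s\in\mT$ and $x\in\mX$, and pick any $z\in\mZ$. For $y\in\mY^\circ$ the right-hand side of \eqref{eq:clfm} is a finite sum of expressions of the form $F_{Y|TZX}(\phi_{s,t}^x(y)\,|\,t,z,x)\,p_t(z,x)$, in which $F_{Y|TZX}$ and $p_t(z,x)$ are directly identified from the joint distribution of $(Y,T,Z,X)$ and $\phi_{s,t}^x(y)$ is identified by Lemma \ref{lem:apb}. Hence $F_{Y_s|X}(y|x)$ is identified on $\mY^\circ$. By Assumption \ref{asp:ivqr}(v) the closure of $\mY^\circ$ is $\mY$, and by Assumption \ref{asp:ivqr}(i) the function $F_{Y_s|X}(\cdot|x)$ is continuous on $\mY$, so identification extends to all of $\mY$ by continuous extension. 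Because $F_{Y_s|X}(\cdot|x)$ is strictly increasing on $\mY^\circ$ (as noted immediately after Assumption \ref{asp:ivqr}), $Q_{Y_s|X}(\tau|x)$ is obtained for each $\tau\in(0,1)$ as the unique solution in $\mY$ to $F_{Y_s|X}(y|x)=\tau$, and is therefore identified.

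Second, I would identify the conditional mean by applying Lemma \ref{lem:clfm}(b). Fix any $z\in\mZ$; the right-hand side of \eqref{eq:clfmasf} involves $E[\phi_{t,s}^x(Y)\,|\,T=t,Z=z,X=x]$, which is an integral of the identified mapping $\phi_{t,s}^x$ against the identified conditional distribution of $Y$ given $(T,Z,X)=(t,z,x)$. Although $\phi_{t,s}^x$ is identified only on $\mY^\circ$, the continuity of $F_{Y_t|X}(\cdot|x)$ from Assumption \ref{asp:ivqr}(i), together with the rank-similarity-based equivalence used in Lemma \ref{lem:clfm}, implies that the conditional distribution of $Y$ given $(T,Z,X)=(t,z,x)$ assigns zero mass to $\mY\setminus\mY^\circ$, so the integral is well defined and identified. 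Summing over $t\in\mT$ with the identified weights $p_t(z,x)$ yields identification of $E[Y_s|X=x]$.

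The main obstacle, and essentially the only one not delivered straight from the two invoked lemmas, is the passage from identification of $\phi_{s,t}^x$ on $\mY^\circ$ to identification of $F_{Y_s|X}$ and $E[Y_s|X=x]$ as objects defined on $\mY$. This is precisely the role of Assumption \ref{asp:ivqr}(v): combined with the continuity of $F_{Y_s|X}(\cdot|x)$ in Assumption \ref{asp:ivqr}(i), it turns the boundary $\mY\setminus\mY^\circ$ into a null set for the relevant distributions and allows a continuous extension of the identified cdf to $\mY$. Once this step is in place, inversion for quantiles and integration for means are immediate.
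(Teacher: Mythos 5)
Your proposal is correct and follows essentially the same route as the paper: identify $\phi_{s,t}^x$ on $\mY^\circ$ via Lemma \ref{lem:apb}, plug into Lemma \ref{lem:clfm} to identify $F_{Y_s|X}(\cdot|x)$ on $\mY^\circ$, and extend to $\mY$ using Assumption \ref{asp:ivqr}(v) together with continuity, after which quantiles and means follow. The only cosmetic difference is that the paper deduces the mean directly from the identified conditional distribution rather than re-invoking the closed form \eqref{eq:clfmasf}, so your extra argument about the boundary being null for the conditional law of $Y$ is not needed.
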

This result is interesting because the proposed sufficient condition is economically interpretable.
We do not need to interpret the numerical conditions on the distribution of the outcome variable when this condition is satisfied.
This fact may be helpful when designing a social experiment for which the outcome data will be collected later.

\subsection{Comparison with Chernozhukov and Hansen (2005) and violation of our assumptions}\label{subsec:4.4}
In this section, we compare Assumption \ref{asp:apb} (or Assumption \ref{asp:apb.2} in the case of $k=2$) with the identification condition of \cite{CH2005} and discuss the case where Assumption \ref{asp:apb} is violated.
For the binary treatment case, the closed-form expression of \cite{Wuthrich2019} is also valid when the full rank condition in \cite{CH2005} holds for all the values in $\mY$.
\cite{VX2017} establish an identification condition weaker than the monotonicity assumption for the binary treatment case and show that the identification condition of \cite{CH2005} implies that condition.

In the multi-valued treatment setting, we can show that Assumption \ref{asp:apb} implies the full rank conditions in \cite{CH2005} under Assumptions \ref{asp:ivqr} and \ref{asp:mt} and some differentiability assumptions.\footnote{Appendix \ref{sec:e} proves this statement.}
On the other hand, our identification condition does not require any differentiability assumption.
Assumption \ref{asp:apb} may be violated even when the identification condition of \cite{CH2005} is satisfied.\footnote{Appendix \ref{sec:e.2} provides a numerical example where the full rank condition in \cite{CH2005} holds but Assumption \ref{asp:apb} is violated.}
The full rank condition in \cite{CH2005} only requires that the moment equations \eqref{eq:ti} are uniquely solved.
Assumption \ref{asp:apb} clarifies the behavioral patterns of each individual for identifying the treatment effects.
Assumption \ref{asp:apb} also implies additional testable restrictions compared with \cite{CH2005}.\footnote{Appendix \ref{sec:e.3} discusses these additional restrictions. These restrictions are equally difficult to check statistically compared with the full rank conditions in \cite{CH2005} for the binary treatment case.}

\begin{remark}\label{rem:4.4}
The full rank conditions in \cite{CH2005} requires the instrument to take the same number of values as the treatment, which is also required in our settings.
\cite{Feng2019} and \cite{caetano2020} establish the identification of treatment effects using observed covariates when the instrument has smaller support than the treatment.
Our results do not rely on the existence of observed covariates.
\end{remark}

Next, we discuss the case where Assumption \ref{asp:apb} is violated.
Even when the unordered monotonicity assumption holds, Assumption \ref{asp:apb} is violated if the sign treatments are the same for all the instrument pairs.
We consider the $k=2$ case and let the monotonicity inequalities summarized in Table \ref{tab:mtn} hold almost surely.
In Example I, this may happen when individuals are randomly assigned to one of the three groups, where the cost of choosing field $2$ is decreased for all groups; Group 2 has the largest decrease in cost, and Group 0 has the smallest.
\begin{table}[htb]
\caption{Monotonicity inequalities when sign treatments are all the same}
\label{tab:mtn}
\begin{center}
\begin{tabular}{|c|c c c c|} \hline
\multicolumn{1}{|c|}{} & \multicolumn{4}{|c|}{$\mT$} \\ \hline
\multicolumn{1}{|c|}{} & & $0$ & $1$ & $2$ \\ 
& $(1,0)$ & $\,\,\,\,\,\,\,\,D_0({1})\leq D_0({0})\,\,\,\,\,\,\,\,$ & $\,\,\,\,\,\,\,\,D_1({1})\leq D_1({0})\,\,\,\,\,\,\,\,$ & $\,\,\,\,\,\,\,\,D_2({1})\textcolor{red}{\geq} D_2({0})\,\,\,\,\,\,\,\,$ \\ 
$\mP$ & $(2,0)$ & $\,\,D_0({2})\leq D_0({0})\,\,$ & $\,\,D_1({2})\leq D_1({0})\,\,$ & $\,\,D_2({2})\textcolor{red}{\geq} D_2({0})\,\,$ \\ 
& $(2,1)$ & $\,\,D_0({2})\leq D_0({1})\,\,$ & $\,\,D_1({2})\leq D_1({1})\,\,$ & $\,\,D_2({2})\textcolor{red}{\geq} D_2({1})\,\,$ \\ \hline
\end{tabular}
\end{center}
\end{table}
Then, Assumptions 1 and 2 hold, and the unordered monotonicity assumption holds because Assumption 2 (ii) holds for all the instrument pairs.
However, the sign treatments of the instrument pairs are $t{(1,0)}=t{(2,0)}=t{(2,1)}=2$, and Assumption \ref{asp:apb.2} does not hold because all the instrument pairs induce the same type of monotonicity relationship.

Under this monotonicity assumption, we obtain the following two equations from the monotonicity relationships of $(1,0)$ and $(2,0)$ as we obtain (\ref{eq:ab1}) and (\ref{eq:ac1}) under Assumption \ref{asp:apb.2}:
\begin{eqnarray}
F_{Y_2|\mC^2_{0,1}}(y)
&=& \frac{F_{Y_0|\mC^0_{1,0}}(\phi_{2,0}(y))P(\mC^0_{1,0})+F_{Y_1|\mC^1_{1,0}}(\phi_{2,1}(y))P(\mC^1_{1,0})}{P(\mC^2_{0,1})}, \label{eq:all1} \\
F_{Y_2|\mC^2_{0,2}}(y)
&=& \frac{F_{Y_0|\mC^0_{2,0}}(\phi_{2,0}(y))P(\mC^0_{2,0})+F_{Y_1|\mC^1_{2,0}}(\phi_{2,1}(y))P(\mC^1_{2,0})}{P(\mC^2_{0,2})}. \label{eq:all2}
\end{eqnarray}
Whether $\phi_{2,1}(y)$ and $\phi_{2,0}(y)$ are identified from the two equations (\ref{eq:all1}) and (\ref{eq:all2}) depends on the numerical conditions on the conditional cdfs.
For each $y^f\in\mY^\circ$, $\phi_{2,1}(y^f)$ and $\phi_{2,0}(y^f)$ are the solutions to the following nonlinear simultaneous equations of two unknown variables $y_1$ and $y_0$:
\begin{eqnarray}
F_{Y_2|\mC^2_{0,1}}(y^f)
&=& \frac{F_{Y_0|\mC^0_{1,0}}(y_0)P(\mC^0_{1,0})+F_{Y_1|\mC^1_{1,0}}(y_1)P(\mC^1_{1,0})}{P(\mC^2_{0,1})}, \label{eq:all1.f} \\
F_{Y_2|\mC^2_{0,2}}(y^f)
&=& \frac{F_{Y_0|\mC^0_{2,0}}(y_0)P(\mC^0_{2,0})+F_{Y_1|\mC^1_{2,0}}(y_1)P(\mC^1_{2,0})}{P(\mC^2_{0,2})}. \label{eq:all2.f}
\end{eqnarray}
As in Section \ref{subsec:4.1}, we compare the set of solutions to each of these two equations, and Figure \ref{fig:ex1.4} visualizes the solutions to (\ref{eq:all1.f}) and (\ref{eq:all2.f}), respectively, in the two-dimensional space of $(y_1,y_0)$.
Notice that, from the strict monotonicity of the conditional cdfs, the solutions to (\ref{eq:all1.f}) and (\ref{eq:all2.f}) have both strictly decreasing relationships between $y_1$ and $y_0$, where both the conditional cdfs of $Y_1$ and $Y_0$ are on the same side of (\ref{eq:all1.f}) and (\ref{eq:all2.f}).
Hence, when these two strictly decreasing relationships intersect only once, 
the two equations (\ref{eq:all1.f}) and (\ref{eq:all2.f}) have a unique solution at $(\phi_{2,1}(y^f),\phi_{2,0}(y^f))$.
In this case, we can show that the full rank conditions in \cite{CH2005} hold under Assumptions \ref{asp:ivqr} and \ref{asp:mt} and some differentiability assumptions.\footnote{Appendix \ref{sec:e.2} proves this statement.}

\begin{figure}[htbp]
\centering
\scalebox{1.0}{
\begin{tikzpicture}
\coordinate(O)at(0,0);
\coordinate(XS)at(0,0);
\coordinate(XL)at(6,0);
\coordinate(YS)at(0,0);
\coordinate(YL)at(0,6);
\draw[->,>=stealth,semithick](XS)--(XL)node[below left]{$y_1$};
\draw[->,>=stealth,semithick](YS)--(YL)node[below left]{$y_0$};

\draw[red,thick,domain=1.25:4.75]plot(\x,-1.5*\x+7.5)node[above right]{Solutions to (\ref{eq:all1.f})};
\draw[blue,thick,domain=0.5:5.5]plot(\x,-0.5*\x+4.5)node[above right]{ Solutions to (\ref{eq:all2.f})};

\coordinate(P)at(3,3);
\fill(P)circle(0.075);
\draw[dashed,thick]($(XS)!(P)!(XL)$)node[below]{$\phi_{2,1}(y^f)$}--(P)--($(YS)!(P)!(YL)$)node[left]{$\phi_{2,0}(y^f)$};

\end{tikzpicture}
}
\caption{Identification image when sign treatments are all the same}
\label{fig:ex1.4}
\end{figure}

\subsection{Identification of the local treatment effects}\label{subsec:4.5}
In this section, we identify the local treatment effects in the multi-valued treatment setting.
Suppose that the monotonicity inequalities hold for $(z,z')$, and that $D_t(z)\leq D_t(z')$ holds almost surely for treatment level $t$. 
For two different treatment levels $t$ and $t'$ and instrument values $z$ and $z'$, the subpopulation $\{D_{t'}(z)=1,D_t(z')=1\}$ changes the treatment choice from $t'$ to $t$ if the instrument value changes from $z$ to $z'$.
The LATE that compares treatment states $t$ and $t'$ conditional on $\{D_{t'}(z)=1,D_t(z')=1\}$ is $E[Y_t|D_{t'}(z)=1,D_t(z')=1]-E[Y_{t'}|D_{t'}(z)=1,D_t(z')=1]$.
The local quantile treatment effect (LQTE) conditional on $\{D_{t'}(z)=1,D_t(z')=1\}$ is $Q_{Y_t|D_{t'}(z)=1,D_t(z')=1}(\tau)-Q_{Y_{t'}|D_{t'}(z)=1,D_t(z')=1}(\tau)$ where $\tau\in(0,1)$. 

First, we briefly review the identification studies of local treatment effects in multi-valued treatment settings.
From the definition, local treatment effects are treatment effects conditional on the compliers when the treatment is binary.
As shown in \cite{Imbens1994}, LATEs are identified as the standard two-stage least squares (2SLS) estimands under the monotonicity assumption.
\cite{Heckman2018} establish the identification of the conditional means of the potential outcomes on the compliers.
However, identifying the local treatment effects is more complex in multi-valued treatment settings because the relationship between the compliers becomes more complicated than in the binary treatment case.
\cite{Angrist1995} show that the 2SLS estimand generally represents only a weighted average of LATEs with a binary instrument, even under certain monotonicity assumption.\footnote{\cite{kline2016evaluating} and \cite{hull2018isolateing} also derive related results in the case of a binary instrument.}
\cite{kirkeboen2016field} and \cite{Mountjoy2019} demonstrate that a similar result holds for 2SLS estimands even when there are as many instruments as treatments. 
\cite{Heckman2006} identify the LATEs generalized to compare treatment state $t$ and the set of other states with continuous instruments.
\cite{Heckman2006} also establish identification on more various subpopulations that are identified with continuous instruments.
For a more general class of selection models,
\cite{lee2018} and \cite{Mountjoy2019} show similar identification results for the ATEs that compare two different treatment states, $t$ and $t'$.

This section establishes the identification of the LATEs and LQTEs that compare two different treatment states, $t$ and $t'$, under our monotonicity assumption.
We require that the outcome variable is continuously distributed with additional assumptions, such as the rank similarity assumption, necessary for identifying the counterfactual mappings.

Note that, if $(z,z')$ has a sign treatment $t_{(z,z')}=t'$, then $P(\mC_{z,z'}^t)=P(D_{t'}(z)=1,D_t(z')=1)$ holds for $t\in\mT\setminus\{t'\}$ as we obtain (\ref{eq:case1}) for Example I in Section \ref{subsec:4.1}.
Under the rank similarity assumption, identifying the counterfactual mappings leads to identifying the treatment effects conditional on the compliers. 
The following lemma shows the identification of the conditional distribution of $Y_{t'}$ given $\mC_{z,z'}^t$ as well as that of $Y_{t}$ given $\mC_{z,z'}^t$: 

\begin{lemma}[Identification of potential outcome conditional cdfs and means given the compliers]\label{lem:apb2}
Suppose that Assumptions \ref{asp:ivqr}-\ref{asp:apb} hold, and that $P(D_t(z)\leq D_t({z'}))=1$ and $P(\mC^t_{z,z'})>0$ hold for $t\in\mT$ and $(z,z')\in\mP$. 
Then, for all $t'\in\mT$, $F_{Y_{t'}|\mC_{z,z'}^t }(y)$ for $y\in\mY^\circ$ and $E[Y_{t'}|\mC_{z,z'}^t]$ can be expressed as
\begin{equation}\label{eq:l4.1}
F_{Y_{t'}|\mC_{z,z'}^t }(y)=F_{Y_{t}|\mC_{z,z'}^t }(\phi_{t',t}(y))
\end{equation}
and
\begin{equation}\label{eq:l4.2}
E[Y_{t'}|\mC_{z,z'}^t]=E[\phi_{t,t'}(Y_t)|\mC_{z,z'}^t],
\end{equation}
and $E[Y_{t'}|\mC_{z,z'}^t]$ and $Q_{Y_{t'}|\mC_{z,z'}^t } (\tau )$ for $\tau\in(0,1)$ are identified.
\end{lemma}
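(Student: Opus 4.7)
The plan is to first establish that, conditional on $\mC_{z,z'}^t$ and $X=x$, the rank variables $U_t$ and $U_{t'}$ are identically distributed; then to combine this equivalence with the definition of the counterfactual mapping to derive \eqref{eq:l4.1} and \eqref{eq:l4.2}; and finally to conclude identification by combining with Lemmas \ref{lem:idcpl} and \ref{lem:apb}.

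To establish the distributional equivalence, I would invoke the selection equation in Assumption \ref{asp:ivqr} (iii) to write $T_z=\rho(z,X,V)$ and $T_{z'}=\rho(z',X,V)$. Hence, conditional on $X=x$, the event $\mC_{z,z'}^t=\{D_z^t=0,\,D_{z'}^t=1\}$ is $V$-measurable: it takes the form $\{V\in A\}$ for some Borel set $A$ depending on $(x,z,z',t)$. Rank similarity (Assumption \ref{asp:ivqr} (iv)) gives $U_t\deq U_{t'}$ given $(X,Z,V)=(x,\tilde z,v)$ for every $\tilde z$ and every $v$; marginalizing over $Z$ preserves this equality given $(X,V)=(x,v)$, and averaging the resulting conditional laws over $v$ with weight equal to the conditional distribution of $V$ given $\{X=x,\,V\in A\}$ produces equality of the conditional laws of $U_t$ and $U_{t'}$ given $\mC_{z,z'}^t$ and $X=x$. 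The passage from pointwise conditioning on $V=v$ to conditioning on the event $\{V\in A\}$ is the most delicate step, and I expect it to be the main obstacle; it requires a standard but careful measure-theoretic argument.

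Given the equivalence, \eqref{eq:l4.1} follows by a short computation. By Assumption \ref{asp:ivqr} (i) together with Lemma \ref{lem:3b}, $F_{Y_{t'}|X}(\cdot|x)$ is strictly increasing on $\mY^\circ$, so for $y\in\mY^\circ$ we have $\{Y_{t'}\leq y\}=\{U_{t'}\leq F_{Y_{t'}|X}(y|x)\}$; replacing $U_{t'}$ by $U_t$ using the conditional equivalence, and then using that $Q_{Y_t|X}(\cdot|x)$ is strictly increasing on $(0,1)$ (Lemma \ref{lem:3c}), rewrites the probability as $F_{Y_t|\mC_{z,z'}^t X}(\phi_{t',t}^x(y)|x)$, which is \eqref{eq:l4.1}. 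For \eqref{eq:l4.2}, I would integrate both sides of \eqref{eq:l4.1} against the appropriate measure and apply the change of variables $w=\phi_{t',t}^x(y)$; the inverse $\phi_{t,t'}^x=\phi_{t',t}^{x-1}$ exists on $\mY^\circ$ by Assumption \ref{asp:ivqr} (v) (see Section \ref{subsec:2.2}), making this step routine.

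For identification, $F_{Y_t|\mC_{z,z'}^t X}(\cdot|x)$ is identified as a closed-form expression by Lemma \ref{lem:idcpl}, and $\phi_{t',t}^x(\cdot)$ on $\mY^\circ$ is identified by Lemma \ref{lem:apb}. Consequently \eqref{eq:l4.1} identifies $F_{Y_{t'}|\mC_{z,z'}^t X}(\cdot|x)$ on $\mY^\circ$, and hence its quantile function $Q_{Y_{t'}|\mC_{z,z'}^t X}(\cdot|x)$ on $(0,1)$; the conditional mean is then identified via \eqref{eq:l4.2}. The bulk of the work is concentrated in the first paragraph above; once the rank equivalence on the complier subpopulation is in place, the remainder is essentially a re-run of the cdf and quantile manipulations used in Lemma \ref{lem:clfm} and Section \ref{subsec:4.0}.
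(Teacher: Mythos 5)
Your proposal is correct and follows essentially the same route as the paper's proof: establish that $U_t$ and $U_{t'}$ are identically distributed conditional on $\mC^t_{z,z'}$ and $X=x$, translate this into \eqref{eq:l4.1} and \eqref{eq:l4.2} via the strict monotonicity of $F_{Y_{t'}|X}(\cdot|x)$ on $\mY^\circ$ and of $Q_{Y_t|X}(\cdot|x)$ on $(0,1)$, and conclude identification from Lemmas \ref{lem:idcpl} and \ref{lem:apb}. The only difference is that the paper obtains the rank equivalence on the compliers from its auxiliary Lemma \ref{lem:3a}, which writes the complier conditional cdf as a weighted difference of $(T,Z,X)$-conditional cdfs using the monotonicity inequality, whereas you condition directly on the $V$-measurable complier event; both versions rest on the same standard passage from conditioning on $V=v$ to conditioning on a $V$-measurable event, and neither constitutes a gap.
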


\noindent With Lemma \ref{lem:apb2} at hand, we obtain the following theorem that shows the identification of local treatment effects under our assumptions:

\begin{theorem}[Identification of local potential outcome cdfs and means]\label{thm:apb2}
Assume that Assumptions \ref{asp:ivqr}-\ref{asp:apb} hold. Then, for each $t,t'\in\mT$, there exists $(z,z')\in\mP$ such that  $E[Y_s|D_{t'}(z)=1,D_t(z')=1]$ and $Q_{Y_s|D_{t'}(z)=1,D_t(z')=1} (\tau )$ for $\tau\in(0,1)$ and $s\in\{t,t'\}$ are identified.
\end{theorem}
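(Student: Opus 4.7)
The plan is to reduce the claim to Lemma \ref{lem:apb2} by exhibiting, for each $t\neq t'$ and each $x\in\mX$, an instrument pair $(z,z')\in\mP$ for which the subpopulation $\{D_z^{t'}=1,D_{z'}^t=1\}$ coincides with a complier set to which that lemma applies. The key tool is a pigeonhole argument on Assumption \ref{asp:apb}: the $k$ pairs $\lambda^1,\ldots,\lambda^k$ carry $k$ distinct sign treatments $t_{\lambda^1},\ldots,t_{\lambda^k}$ in the $(k+1)$-element set $\mT$, so at most one element of $\mT$ fails to be a sign treatment. Hence at least one of $\{t,t'\}$ lies in $\{t_{\lambda^1},\ldots,t_{\lambda^k}\}$, and I split the argument accordingly (the case $t=t'$ is trivial).

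Consider first the principal case $t'=t_{\lambda^i}$ and set $(z,z'):=\lambda^i$. By Assumption \ref{asp:apb}, $D_z^{t'}\geq D_{z'}^{t'}$ and $D_z^j\leq D_{z'}^j$ for every $j\neq t'$; in particular $D_z^t\leq D_{z'}^t$. Combined with Assumption \ref{asp:mt}(iii), which forces $P(\mC_{z,z'}^t|X=x)>0$ since the reverse complier set is empty a.s., this verifies the hypotheses of Lemma \ref{lem:apb2} at this pair and treatment $t$, yielding identification of $F_{Y_s|\mC_{z,z'}^t X}(\cdot|x)$, $E[Y_s|\mC_{z,z'}^t,X=x]$, and the corresponding conditional quantile function for every $s\in\mT$. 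It remains to verify the set identity $\mC_{z,z'}^t=\{D_z^{t'}=1,D_{z'}^t=1\}$. For $\omega\in\mC_{z,z'}^t$, $T_{z'}(\omega)=t$ implies $D_{z'}^j(\omega)=0$ for every $j\neq t$; then for $j\notin\{t,t'\}$ the monotonicity $D_z^j\leq D_{z'}^j$ forces $D_z^j(\omega)=0$, so $T_z(\omega)\in\{t,t'\}$, and since $T_z(\omega)\neq t$ we conclude $T_z(\omega)=t'$. The reverse inclusion is immediate.

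The remaining case, $t=t_{\lambda^j}$ with $t'\notin\{t_{\lambda^1},\ldots,t_{\lambda^k}\}$, is handled by reversing the order of the pair: take $(z,z'):=(\lambda^j_2,\lambda^j_1)$. Since the sign-treatment structure of $\lambda^j$ gives $D_{\lambda^j_1}^{t'}\leq D_{\lambda^j_2}^{t'}$, Lemma \ref{lem:apb2} applies to the pair $(\lambda^j_1,\lambda^j_2)$ with complier treatment $t'$, and the same inclusion argument (with the roles of $t$ and $t'$ interchanged) yields $\mC_{\lambda^j_1,\lambda^j_2}^{t'}=\{T_{\lambda^j_1}=t,T_{\lambda^j_2}=t'\}=\{D_z^{t'}=1,D_{z'}^t=1\}$, so the lemma identifies the desired conditional distributions on this set. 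The main obstacle is precisely the set identity between the theorem's conditioning subpopulation and a complier set: it hinges on the sign-treatment structure being rich enough to force any agent who switches into the complier treatment between the two instrument values to have come from the single ``anchor'' state picked out by the pigeonhole step, which is exactly what Assumption \ref{asp:apb} delivers.
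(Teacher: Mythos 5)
Your proof is correct and follows essentially the same route as the paper: the pigeonhole observation that the $k$ distinct sign treatments leave at most one element of $\mT$ uncovered, the identification of the conditioning set $\{D_z^{t'}=1,D_{z'}^t=1\}$ with the complier set $\mC_{z,z'}^{t}$ (which is exactly the content of the paper's equation (\ref{eq:f.3})), and the reduction to Lemma \ref{lem:apb2}. Your explicit verification of the set identity and of $P(\mC_{z,z'}^t|X=x)>0$ via Assumption \ref{asp:mt}(iii) is a slightly more detailed rendering of what the paper compresses into a ``without loss of generality'' relabeling, but the argument is the same.
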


\subsection{Identification in real-world examples}\label{sec:5}
In this section, we discuss the content of our assumptions and the applicability of our identification result for Examples II and III in Section \ref{subsec:2.2ex}.
In each example, we confirm the assumptions for Theorems \ref{thm:apb} and \ref{thm:apb2} with monotonicity inequalities.
Suppose that the treatment $T$ and the instrument $Z$ are sufficiently correlated, and we assume Assumption \ref{asp:ivqr} and Assumptions \ref{asp:mt} (i), (iii), and (iv) unless stated otherwise.

\subsubsection{Example II}\label{subsec:3.3.2}
We take an approach similar to \cite{Mountjoy2019} for the model and economic analysis.
\cite{Mountjoy2019} also employs the discrete choice index model to motivate his monotonicity assumption for continuous instruments.
Let $Y$ denote student outcome.
Let the treatment $T$ denote college decision that takes values on $\{0,2,4\}$, where $T=0$ denotes no college, while $T=2$ and $T=4$ denote starting college at a two-year institution and a four-year institution, respectively.
Let the two binary instruments $Z_2$ and $Z_4$ that take values on $\{0,1\}$ represent the distance to the nearest college, and $Z_2$ equals 1 when the nearest two-year college is within a $d_2$ km radius, whereas $Z_4$ equals 1 when the nearest four-year college is within a $d_4$ km radius.
As discussed in Section \ref{subsec:3.2}, the discrete choice index model generates the monotonicity inequalities.
As in \cite{Mountjoy2019}, we define the indirect utilities for each treatment option as follows:
\begin{eqnarray*}
I_0 &=& 0, \\
I_2 &=& V_2 - \mu_2(Z_2), \\
I_4 &=& V_4 - \mu_4(Z_4),
\end{eqnarray*}
where we assume strict monotonicity for the cost functions as follows:
\begin{eqnarray}
\mu_2(0) >\mu_2(1) \quad\text{and}\quad \mu_4(0) >\mu_4(1). \label{eq:cft1} 
\end{eqnarray}
This requirement is natural because students who live near a college will have a smaller cost for choosing that college.
The strict monotonicity of the cost functions generates the monotonicity inequalities summarized in Table \ref{tab:tyc} for $z_2,z_4\in\{0,1\}$.
\begin{table}[htb]
\caption{Monotonicity inequalities of Example II}\label{tab:tyc}
\begin{center}
\scalebox{0.95}{
\begin{tabular}{|c|c c c c|} \hline
\multicolumn{1}{|c|}{} & \multicolumn{4}{|c|}{$\mT$} \\ \hline
\multicolumn{1}{|c|}{} & & $0$ & $2$ & $4$ \\ 
& $\{(1,z_4),(0,z_4)\}$ & $\,D_0(0,z_4)\leq D_0(1,z_4)\,$ & $\,D_2(0,z_4)\textcolor{red}{\geq} D_2(1,z_4)\,$ & $\,D_4(0,z_4)\leq D_4(1,z_4)\,$ \\ 
$\mP$ & $\{(z_2,1),(z_2,0)\}$ & $\,D_0(z_2,0)\leq D_0(z_2,1)\,$ & $\,D_2(z_2,0)\leq D_2(z_2,1)\,$ & $\,D_4(z_2,0)\textcolor{red}{\geq} D_4(z_2,1)\,$ \\ \hline
\end{tabular}
}
\end{center}
\end{table}
\cite{Mountjoy2019} provides the same type of inequalities as the monotonicity inequalities in Table \ref{tab:tyc} with continuous instruments.
The monotonicity inequalities in Table \ref{tab:tyc} with $z_2=z_4=0$ are the same as those for $(1,0)$ and $(2,0)$ in Example I in Section \ref{subsec:3.2} if we replace the values that $T$ and $Z$ take from $\{0,2,4\}$ to $\{0,1,2\}$ for $T$, and from $\{(0,0),(1,0),(0,1)\}$ to $\{0,1,2\}$ for $Z$.
Example II may have more monotonicity inequalities than Example I because $z_2$ and $z_4$ take values of either 0 or 1. 

From Table \ref{tab:tyc}, Assumption \ref{asp:mt} (ii) holds for $\{(1,z_4),(0,z_4)\},\{(z_2,1),(z_2,0)\}\in\Lambda$.
From Table \ref{tab:tyc}, $\{(0,z_4),(1,z_4)\}$ and $\{(z_2,0),(z_2,1)\}$ have different sign treatments $t{\{(0,z_4),(1,z_4)\}}=2$ and $t{\{(z_2,0),(z_2,1)\}}=4$, respectively. 
Therefore, Assumption \ref{asp:apb} holds, and from Theorems \ref{thm:apb} and \ref{thm:apb2}, the treatment effects are identified as closed-form expressions.

\subsubsection{Example III}\label{subsec:3.3.3}
We take an approach similar to \cite{Pinto2015} for the model and economic analysis.
Let $Y$ denote the outcome of interest that is continuously distributed.
Let the treatment $T$ denote the relocation decision at the intervention onset, where $T=0$ denotes no relocation, which is equivalent to choosing high poverty neighborhood, $T=1$ denotes medium-poverty neighborhood relocation, and $T=2$ denotes low-poverty neighborhood relocation. 
Let the instrument $Z$ represent voucher assignment that takes values on $\mZ=\{a,b,c\}$, where $Z=a$ denotes no voucher (control group), 
$Z=b$ denotes the Section 8 voucher, and $Z=c$ denotes the experimental voucher.

As discussed in Section \ref{subsec:3.2}, the discrete choice index model generates the monotonicity inequalities.
As in Example I, we assume the following relationships for the cost functions under additive separability of the utility functions:
\begin{eqnarray}
\mu_1(a) &=& \mu_1(c)>\mu_1(b), \label{eq:cfm1} \\
\mu_2(a) &>& \mu_2(b)=\mu_1(c). \label{eq:cfm2}
\end{eqnarray}
Relationships (\ref{eq:cfm1}) and (\ref{eq:cfm2}) can be interpreted in the same way as (\ref{eq:cf1}) and (\ref{eq:cf2}) in Example I.
Applying these restrictions to the cost functions generates the monotonicity inequalities summarized in Table \ref{tab:mto}.
\begin{table}[htb]
\caption{Monotonicity inequalities of Example III}\label{tab:mto}
\begin{center}
\begin{tabular}{|c|c c c c|} \hline
\multicolumn{1}{|c|}{} & \multicolumn{4}{|c|}{$\mT$} \\ \hline
\multicolumn{1}{|c|}{} & & $0$ & $1$ & $2$ \\ 
& $(c,a)$ & $\,\,\,\,\,\,\,\,D_0({c})\leq D_0({a})\,\,\,\,\,\,\,\,$ & $\,\,\,\,\,\,\,\,D_1({c})\leq D_1({a})\,\,\,\,\,\,\,\,$ & $\,\,\,\,\,\,\,\,D_2({c})\textcolor{red}{\geq} D_2({a})\,\,\,\,\,\,\,\,$ \\ 
$\mP$ & $(b,c)$ & $\,\,D_0({b})\leq D_0({c})\,\,$ & $\,\,D_1({b})\textcolor{red}{\geq} D_1({c})\,\,$ & $\,\,D_2({b})\leq D_2({c})\,\,$ \\ \hline
\end{tabular}
\end{center}
\end{table}
\cite{Pinto2015} provides similar relationships as (\ref{eq:cfm1}) and (\ref{eq:cfm2}) with budget sets and choice restrictions equivalent to the inequalities in Table \ref{tab:mto}.
From Table \ref{tab:mto}, Assumption \ref{asp:mt} (ii) holds for $\Lambda=\{(c,a),(b,c)\}$.\footnote{\cite{Pinto2015} further assumes that a neighborhood is a normal good and generates monotonicity inequalities in addition to those in Table \ref{tab:mto}. 
Assumptions of \cite{Pinto2015} lead to part (A2) of Assumption \ref{asp:mt} for all the pairs in $\mP$, and the unordered monotonicity assumption holds.
Our assumptions are weaker than those in \cite{Pinto2015} but are sufficient to identify the treatment effects.}
From Table \ref{tab:mto}, $(c,a)$ and $(b,c)$ have different sign treatments $t{(c,a)}=2$ and $t{(b,c)}=1$, respectively. 
Therefore, Assumption \ref{asp:apb} holds, and from Theorems \ref{thm:apb} and \ref{thm:apb2}, the treatment effects are identified as closed-form expressions.

\section{Conclusion}\label{sec:6}
In this paper, we establish sufficient conditions for the identification of the treatment effects when the treatment is discrete and endogenous. We show that an appropriately constructed monotonicity assumption is sufficient, and this condition is economically interpretable. We also derive closed-form expressions of the identified treatment effects.

For the estimation procedure, \cite{Wuthrich2019} constructs an estimator in the binary case by semiparametric estimation of observable conditional cdfs, qfs, and probabilities and plugging them into his closed-form expression. A similar approach could be applied to our closed-form expressions.

Alternatively, especially for the estimation of the QTE, we can apply the existing estimation methods under structural quantile models based on the GMM objective function after checking our identification conditions.
For estimation based on the GMM objective function, reliable and practically useful methods are developed, particularly for parametric structural quantile models. See \cite{CH2006}, \cite{ChenLee2018}, \cite{Zhu2018}, and \cite{KaidoWuthrich2018} for linear-in-parameters quantile models, and see \cite{ChernozhukovHong2003} and \cite{deCastroetal2019} for nonlinear quantile models. Nonparametric estimation approaches are studied by \cite{CIN2007}, \cite{HorowitzLee2007}, \cite{ChenPouzo2009,ChenPouzo2012}, and \cite{GS2012}.

\def\thesection{}

\section{Appendix}

\appendix

\section{Proofs of the results in the main text}\label{sec:a}
Proofs in this section use some auxiliary results (Lemmas \ref{lem:3b}-\ref{lem:3a}) collected in Appendix \ref{sec:a.5}.

\begin{proof}[Proof of Lemma \ref{lem:clfm}]
Observe that, for each $s\in\mT$, we have
\begin{equation}\label{eq:6}
\begin{split}
F_{Y_s}(y)=&F_{Y_s|Z}(y|z)
=\sum_{t=0}^k F_{Y_s|TZ}(y|t,z) p_t(z)\text{ for }y\in\mY^\circ
\end{split}
\end{equation}
and
\begin{equation}\label{eq:7}
\begin{split}
E[Y_s]=&E[Y_s|Z=z]
=\sum_{t=0}^k E[Y_s|T=t,Z=z] p_t(z),
\end{split}
\end{equation}
where the first equalities in (\ref{eq:6}) and (\ref{eq:7}) hold from Assumption \ref{asp:ivqr} (ii).

Take any $y^*\in\mY^\circ$. Then, there exists $\tau^*\in(0,1)$ such that $y^*=Q_{Y_s}(\tau^*)$ holds from Assumption \ref{asp:ivqr} (i). 
This $\tau^*$ can be expressed as $\tau^*=F_{Y_s}(y^*)$.

First, we show part (a). From \eqref{eq:6}, the required result (\ref{eq:clfm}) holds if we show
\begin{equation}\label{eq:32}
F_{Y_s|TZ}(y^*|t,z)=F_{Y_t|TZ}(\phi_{s,t}(y^*)|t,z)
\end{equation}
and
\begin{equation}\label{eq:32.5}
F_{Y_t|TZ}(\phi_{s,t}(y^*)|t,z)=F_{Y|TZX}(\phi_{s,t}(y^*)|t,z).
\end{equation}
\eqref{eq:32.5} holds because the observed outcome is identical with the potential outcome under treatment state $t$ when the treatment choice is $T=t$.
We proceed to show (\ref{eq:32}).
Under the rank similarity assumption, we have 
\begin{equation}\label{eq:29}
F_{U_{s}|TZ}(\tau^*|t,z)=F_{U_{t}|TZ}(\tau^*|t,z).
\end{equation}
Observe that the following equations hold:
\begin{equation}\label{eq:30}
\{U_s\leq\tau^*\}=\{F_{Y_s}(Y_s)\leq F_{Y_s}(y^*)\}=\{Y_s\leq y^*\}
\end{equation}
and
\begin{equation}\label{eq:31}
\{U_t\leq\tau^*\}=\{F_{Y_t}(Y_t)\leq F_{Y_t}(\phi_{s,t}(y^*))\}=\{Y_t\leq \phi_{s,t}(y^*)\}.
\end{equation}
The first equality in (\ref{eq:30}) holds from the definitions of $U_s$ and $\tau^*$. 
The first equality in (\ref{eq:31}) holds from the definitions of $U_t$, $\tau^*$, and $\phi_{s,t}$, as well as because $F_{Y_t}(Q_{Y_t}(\tau^*))=\tau^*$ holds from Assumption \ref{asp:ivqr} (i). 
The second equalities in (\ref{eq:30}) and (\ref{eq:31}) hold because $F_{Y_t}(y)$ for $t\in\mT$ are strictly increasing in $y\in\mY^\circ$ by Lemma \ref{lem:3b}. 
Therefore, applying (\ref{eq:30}) and (\ref{eq:31}) to (\ref{eq:29}) leads to (\ref{eq:32}), and (\ref{eq:clfm}) holds for $y\in\mY^\circ$.

Next, we show part (b). 
From \eqref{eq:7}, the required result (\ref{eq:clfmasf}) holds if we show the following equations:
\begin{equation}\label{eq:34}
E[Y_s|T=t,Z=z]=E[\phi_{t,s}(Y_t)|T=t,Z=z]
\end{equation}
and
\begin{equation}\label{eq:34.5}
E[\phi_{t,s}(Y_t)|T=t,Z=z]=E[\phi_{t,s}(Y)|T=t,Z=z].
\end{equation}
\eqref{eq:34.5} holds because the observed outcome is identical with the potential outcome under treatment state $t$ when the treatment choice is $T=t$.
We proceed to show (\ref{eq:34}). Observe that
\begin{equation}\label{eq:33}
\{U_t\leq \tau^*\}=\{Q_{Y_s}(U_t)\leq Q_{Y_s}(\tau^*)\}=\{\phi_{t,s}(Y_t)\leq y^*\}.
\end{equation}
The first equality in (\ref{eq:33}) holds because $Q_{Y_s}(\tau)$ is strictly increasing in $\tau\in(0,1)$ from Lemma \ref{lem:3c}. 
The second equality in (\ref{eq:33}) holds from the definition of $U_t$ and $\phi_{t,s}$.
Then, applying (\ref{eq:30}) and (\ref{eq:33}) to (\ref{eq:29}) leads to
\begin{equation}\label{eq:33.5}
F_{Y_s|TZ}(y^*|t,z)=F_{\phi_{t,s}(Y_t)|TZ}(y^*|t,z).
\end{equation}
Because $U_t\sim U(0,1)$, we have $\phi_{t,s}(Y_t)=Q_{Y_s}(U_t)\deq Y_s$, and $F_{\phi_{t,s}(Y_t)}(\cdot)$ is continuous. 
Hence, $F_{Y_s|TZ}(\cdot|t,z)$ and $F_{\phi_{t,s}(Y_t)|TZ}(\cdot|t,z)$ are also continuous. 
Then, from the assumption that the closure of $\mY^\circ$ is equal to $\mY$, we have 
\begin{equation}\label{eq:33.75}
F_{Y_s|TZ}(y|t,z)=F_{\phi_{t,s}(Y_t)|TZ}(y|t,z)\text{ for }y\in\mY.
\end{equation}
It follows that $Y_s\deq \phi_{t,s}(Y_t)$ conditional on $(T,Z)=(t,z)$; hence, we have (\ref{eq:34}), and (\ref{eq:clfmasf}) holds.
\end{proof}

\begin{proof}[Proof of Lemma \ref{lem:idcpl}]
Observe that we have  $\mC^t_{z,z'}=\{D_t({z'})=1\}\setminus\{D_t(z)=D_t({z'})=1\}$ from the definition of $\mC^t_{z,z'}$, and that $P(D_t(z)\leq D_t({z'}))=1$ implies $p_t(z)=P(D_t(z)=D_t({z'})=1)$. 
Hence, (\ref{eq:15}) holds and $P(\mC^t_{z,z'})>0$ implies $p_t(z')>p_t(z)$. 
For $y\in\mY$, an analogous argument gives
\begin{equation}\label{eq:36}
P(Y_t\leq y,\mC^t_{z,z'})=F_{Y|TZ}(y|t,z')p_t(z')-F_{Y|TZ}(y|t,z)p_t(z),
\end{equation}
and (\ref{eq:16}) follows from (\ref{eq:15}) and (\ref{eq:36}). 
\end{proof}

\begin{proof}[Proof of Lemmas \ref{lem:apb.2} and \ref{lem:apb}]
Lemma \ref{lem:apb.2} is the same as Lemma \ref{lem:apb} when $k=2$. 
We give a proof of Lemma \ref{lem:apb} for any $k$.
Suppose that the monotonicity subset $\Lambda\subset\mP$ contains $k$ pairs of instrument values $\lambda_1,\ldots,\lambda_k$ such that each sign treatment is $t({\lambda_i})=i$ for $i=1,\ldots,k$. 
For notation simplicity, let $\lambda_i=(i,0)$ and $D_{i}({i})\geq D_{i}({0})$ hold almost surely. 
Then, the monotonicity inequalities correspond to those of Example I in Section \ref{subsec:3.2}, and the types of monotonicity relationships on $(i,0)$ and $(j,0)$ for $i\neq j$ differ. 
The proof does not rely on this assumption, and we can similarly prove the lemma without this assumption.

It suffices to show that $\phi_{k,0},\ldots,\phi_{k,k-1}$ are identified on $\mY^\circ$ because the other counterfactual mappings are identified from $\phi_{k,0},\ldots,\phi_{k,k-1}$ under Assumption \ref{asp:ivqr} (v).
To see this, observe that $\phi_{s,t}^{-1}$ exists on $\mY^\circ$ and $\phi_{t,s}(y)=\phi_{s,t}^{-1}(y)$ holds for $y\in\mY^\circ$
because $\phi_{s,t}$ is strictly increasing on $\mY^\circ$ from Assumption \ref{asp:ivqr} (i) and Lemmas \ref{lem:3b} and \ref{lem:3c},
and $\phi_{s,t}(\mY^\circ)=\mY^\circ$ holds from
Assumption \ref{asp:ivqr} (v).
Furthermore, for $s,t,r\in\mT$, $\phi_{s,r}$ is identified on $\mY^\circ$ if $\phi_{s,t}$ and $\phi_{t,r}$ are identified on $\mY^\circ$
because $\phi_{s,t}(\mY^\circ)=\mY^\circ$ holds from Assumption \ref{asp:ivqr} (v),
and $\phi_{s,r}=\phi_{t,r}\circ\phi_{s,t}$ follows from $F_{Y_t}(Q_{Y_t}(\tau))=\tau$ for $\tau\in(0,1)$ by Assumption \ref{asp:ivqr} (i).
Therefore, the identification of $\phi_{k,0},\ldots,\phi_{k,k-1}$ suffices for the identification of all the counterfactual mappings.

We proceed to show that $\phi_{k,0},\ldots,\phi_{k,k-1}$ are identified on $\mY^\circ$.
We divide the proof into parts (i) and (ii).
Part (i) shows (\ref{eq:a.4.1}) and (\ref{eq:4.3.4}), and part (ii) shows that $\phi_{k,0}(y_k),\ldots,\phi_{k,k-1}(y_k)$ are identified in (\ref{eq:4.3.4}) for each $y_k\in\mY^\circ$.
We do not derive the closed-form expressions of the counterfactual mappings in this proof. 
See Appendices \ref{sec:a.75} and \ref{sec:a.25} for the derivation of the closed-form expressions of $\phi_{s,t}^x$'s for the case of $k=2$ and general $k\in\mT$, respectively.
\bigskip

\noindent\textit{Part (i).} In this part, for $(1,0),\ldots,(k,0)\in\Lambda$, we show that (\ref{eq:4.3.4}) hold.
We first show
\begin{equation}\label{eq:a.4.1}
P(U_i\leq\tau,\mC^i_{0,i})
=\sum_{j\neq i}P(U_i\leq\tau,\mC^j_{i,0})\text{ for }\tau\in(0,1)\text{ and }i=1,\ldots,k.
\end{equation}
Take any $\tau\in(0,1)$. Observe that, from the definition, we have
\begin{equation}\label{eq:f.1}
P(U_i\leq\tau,\mC^i_{0,i})=
\sum_{j\neq i}P(U_i\leq\tau,D_i({i})=1,D_j({0})=1)
\text{ for }i=1,\ldots,k
\end{equation}
and
\begin{equation}\label{eq:f.2}
P(U_i\leq\tau,\mC^j_{i,0})=
\sum_{l\neq j}P(U_i\leq\tau,D_j({0})=1,D_l({i})=1)
\text{ for }j\in\mT\setminus\{i\}.
\end{equation}
Note that, for $j,l\in\mT\setminus\{i\}$ and $j\neq l$, we have
\[
\begin{split}
P(U_i\leq\tau,D_j({0})=1,D_l({i})=1)
\leq&P(U_i\leq\tau,D_l({0})=0,D_l({i})=1)\\
\leq&P(D_l({0})=0,D_l({i})=1)=0
\end{split}
\]
because $\{D_j({0})=1,D_l({i})=1\}$ is contained in $\{D_l({0})=0,D_l({i})=1\}$, and $D_l({i})\leq D_l({0})$ holds almost surely from Assumption \ref{asp:apb}. 
Hence, we have
\begin{equation}\label{eq:f.3}
P(U_i\leq\tau,\mC^j_{i,0})=P(U_i\leq\tau,D_i({i})=1,D_j({0})=1)\text{ for }i=1,\ldots,k\text{ and }j\in\mT\setminus\{i\}.
\end{equation}
Therefore, (\ref{eq:a.4.1}) follows from (\ref{eq:f.3}) and (\ref{eq:f.1}).

With (\ref{eq:a.4.1}) at hand, we show (\ref{eq:4.3.4}).
Take any $y^*\in\mY^\circ$ and $i\in\{1,\ldots,k\}$. 
Then, from Assumption \ref{asp:ivqr} (i), there exists $\tau^*\in(0,1)$ such that $y^*=Q_{Y_i}(\tau^*)$ holds. 
Observe that 
\begin{equation}\label{eq:44.5}
F_{U_i|\mC^j_{i,0}}(\tau^*)=F_{U_j|\mC^j_{i,0}}(\tau^*)
\text{ for }j\in\mT\setminus\{i\}
\end{equation}
holds because $\{U_s\}_{s=0}^k$ are identically distributed conditional on each $\mC^j_{i,0}$ for $j\in\mT\setminus\{i\}$ from Lemma \ref{lem:3a}.
Then, from (\ref{eq:a.4.1}) and (\ref{eq:44.5}), we have
\begin{equation}\label{eq:44}
F_{U_i|\mC^i_{0,i}}(\tau^*)
=\frac{\sum_{j\neq i}F_{U_j|\mC^j_{i,0}}(\tau^*)P(\mC^j_{i,0})}{P(\mC^i_{0,i})}.
\end{equation}
From (\ref{eq:30}) and (\ref{eq:31}) in the proof of Lemma \ref{lem:clfm},  $\{U_t\leq\tau^*\}=\{Y_t\leq \phi_{k,t}(y^*)\}$ for $t\in\mT$ hold,
and hence (\ref{eq:4.3.4}) holds by applying this to (\ref{eq:44}).
\bigskip

\noindent\textit{Part (ii).} In this part, take any $y_k\in\mY^\circ$, and consider the following simultaneous equations of $(y_0,\ldots,y_{k-1})$:
\begin{equation}\label{eq:a.4.3}
F_{Y_i|\mC^i_{0,i}}(y_i)
=\frac{\sum_{j\neq i}F_{Y_j|\mC^j_{i,0}}(y_j)P(\mC^j_{i,0})}{P(\mC^i_{0,i})}\text
{ for }i=1,\ldots,k\text
{ and }y_i\in\mY^\circ.
\end{equation}
We show that $(y_0,\ldots,y_{k-1})=(\phi_{k,0}(y_{k}),\ldots,\phi_{k,k-1}(y_{k}))$ uniquely solves (\ref{eq:a.4.3}).\footnote{Note that $y\in\mY^\circ$ implies $\phi_{s,t}(y)\in\mY^\circ$ because $\phi_{s,t}(\mY^\circ)=\mY^\circ$ holds from Assumption \ref{asp:ivqr} (v).}
Then, $\phi_{k,0}(y),\ldots,\phi_{k,k-1}(y)$ are identified for $y\in\mY^\circ$ in (\ref{eq:4.3.4}) because all the functions in (\ref{eq:a.4.3}) are identified by Lemma \ref{lem:idcpl}.

Suppose we have a solution 
$(y_0',\ldots,y_{k-1}')$ different from
$(\phi_{k,0}(y_{k}),\ldots,\phi_{k,k-1}(y_{k}))$ that also satisfies (\ref{eq:a.4.3}).
Let $y_k'=y_k$ for notation simplicity.
We first consider the case of $y_0'<\phi_{k,0}(y_k)$. 
Then, there exists $j\in\mT\setminus\{0,k\}$ such that $y_j'>\phi_{k,j}(y_k)$ holds.
To see this, suppose that $y_j'\leq \phi_{k,j}(y_k)$ holds for all $j\in\mT\setminus\{0,k\}$.
Then, $(y_0',\ldots,y_{k-1}')$ cannot be a solution of (\ref{eq:a.4.3}) because substituting $(y_0',\ldots,y_{k-1}')$ to the right hand side of (\ref{eq:a.4.3}) with $i=k$ and noting that $F_{Y_0|\mC^0_{k,0}}$ is strictly increasing on $\mY^\circ$ gives
\begin{equation}\label{eq:a.4.1.1}
F_{Y_k|\mC^k_{0,k}}(y_k)
>\frac{\sum_{j\neq k}F_{Y_j|\mC^j_{k,0}}(y_j')P(\mC^j_{k,0})}{P(\mC^k_{0,k})}.
\end{equation}
Without loss of generality, suppose $j=k-1$, so that $y_{k-1}'>\phi_{k,k-1}(y_k)$ holds.
Because $\phi_{k,k-1}$ is strictly increasing on $\mY^\circ$, there exists $y^{(1)}>y_k$ such that $y_{k-1}'=\phi_{k,k-1}(y^{(1)})$ holds.
This implies that there exists $j\in\mT\setminus\{0,k-1,k\}$ such that $y_j'>\phi_{k,j}(y^{(1)})$ holds.
To see this, suppose that $y_j'\leq \phi_{k,j}(y^{(1)})$ holds for all $j\in\mT\setminus\{0,k-1,k\}$. 
Then, substituting $(y_0',\ldots,y_{k-2}')$ to the right hand side of (\ref{eq:a.4.3}) with $i=k-1$ and noting that $F_{Y_0|\mC^0_{k-1,0}}$ and $\phi_{k,0}$ are strictly increasing on $\mY^\circ$ gives
\begin{equation}\label{eq:a.4.1.1.5}
\frac{\sum_{j\neq k-1}F_{Y_j|\mC^j_{k-1,0}}(\phi_{k,j}(y^{(1)}))P(\mC^j_{k-1,0})}{P(\mC^{k-1}_{0,k-1})}
>\frac{\sum_{j\neq k-1}F_{Y_j|\mC^j_{k-1,0}}(y_j')P(\mC^j_{k-1,0})}{P(\mC^{k-1}_{0,k-1})}.
\end{equation}
From (\ref{eq:4.3.4}) with $i=k-1$ at $y^{(1)}$, the left hand side of \eqref{eq:a.4.1.1.5} equals $F_{Y_{k-1}|\mC^{k-1}_{0,k-1}}(\phi_{k,k-1}(y^{(1)}))$,
and $(y_0',\ldots,y_{k-1}')$ cannot be a solution of (\ref{eq:a.4.3}) because $y_{k-1}'=\phi_{k,k-1}(y^{(1)})$ holds.
Without loss of generality, suppose $j=k-2$, so that $y_{k-2}'>\phi_{k,k-2}(y^{(1)})$ holds.
Because $\phi_{k,k-2}$ is strictly increasing on $\mY^\circ$, there exists $y^{(2)}>y^{(1)}$ such that $y_{k-2}'=\phi_{k,k-2}(y^{(2)})$ holds.
Then, by repeating similar arguments, we can show from (\ref{eq:a.4.3}) with $i=2,\ldots,k$ that, without loss of generality, there exists 
$y_k<y^{(1)}<\cdots<y^{(k-2)}<y^{(k-1)}$
such that $y_{j}'=\phi_{k,j}(y^{(k-j)})$ holds for $j\in\mT\setminus\{0,k\}$.
Substituting $(y_0',y_2',\ldots,y_{k-1}')$ to the right hand side of (\ref{eq:a.4.3}) with $i=1$ and noting that $F_{Y_j|\mC^j_{1,0}}$ and $\phi_{k,j}$ are strictly increasing on $\mY^\circ$ gives
\begin{equation}\label{eq:a.4.1.2.5}
\frac{\sum_{j\neq 1}F_{Y_j|\mC^j_{1,0}}(\phi_{k,j}(y^{(k-1)}))P(\mC^j_{1,0})}{P(\mC^{1}_{0,1})}
>\frac{\sum_{j\neq 1}F_{Y_j|\mC^j_{1,0}}(y_j')P(\mC^j_{1,0})}{P(\mC^{1}_{0,1})}.
\end{equation}
From (\ref{eq:4.3.4}) with $i=1$ at $y^{(k-1)}$, the left hand side of (\ref{eq:a.4.1.2.5}) equals $F_{Y_{1}|\mC^{1}_{0,1}}(\phi_{k,1}(y^{(k-1)}))$.
Therefore, $(y_0',\ldots,y_{k-1}')$ cannot be a solution of (\ref{eq:a.4.3}) because $y_{1}'=\phi_{k,1}(y^{(k-1)})$ holds.

We can show contradiction similarly for other cases. For the case of $y_0'>\phi_{k,0}(y_k)$, we can show contradiction similarly by using reverse signs of inequality. For the case of $y_0'=\phi_{k,0}(y_k)$, consider the case of $y_j'\neq\phi_{k,j}(y_k)$ for some $j\in\mT\setminus\{0,k\}$, and we can show a contradiction in the same way as the case of $y_0'\neq\phi_{k,0}(y_k)$.

Therefore, $(y_0,\ldots,y_{k-1})=(\phi_{k,0}(y_{k}),\ldots,\phi_{k,k-1}(y_{k}))$ uniquely solves (\ref{eq:a.4.3}), and $\phi_{k,0}(y),\ldots,\phi_{k,k-1}(y)$ are identified for $y\in\mY^\circ$ in (\ref{eq:4.3.4}).
Other counterfactual mappings are also identified on $\mY^\circ$ because they are inversions or compositions of $\phi_{k,0},\ldots,\phi_{k,k-1}$.
\end{proof}

\begin{proof}[Proof of Theorem \ref{thm:apb}]
It suffices to show that, for each $s\in\mT$, the conditional distribution of $Y_s$ is identified.
Because $\phi_{s,t}(y)$ for $y\in\mY^\circ$ is identified from Lemma \ref{lem:apb}, $F_{Y_s}(y)$ for $y\in\mY^\circ$ is identified from (\ref{eq:clfm}) in Lemma \ref{lem:clfm}.
Then, from Assumption \ref{asp:ivqr} (v),
$F_{Y_s}(y)$ for $y\in\mY$ is identified, and the required result follows.
\end{proof}

\begin{proof}[Proof of Lemma \ref{lem:apb2}]
First, we show (\ref{eq:l4.1}) and (\ref{eq:l4.2}).
Because $U_{t}$ and $U_{t'}$ are identically distributed conditional on $\mC_{z,z'}^t$ by Lemma \ref{lem:3a}, we have
\begin{equation}\label{eq:l4.3}
F_{U_{t'}|\mC_{z,z'}^t}(\tau)=F_{U_{t}|\mC_{z,z'}^t}(\tau)\text{ for }\tau\in(0,1).
\end{equation}
Then, similar to the derivations of (\ref{eq:31}) and (\ref{eq:33})-(\ref{eq:33.75}) in the proof of Lemma \ref{lem:clfm}, we have
\begin{equation}\label{eq:l4.4}
\{U_t\leq\tau\}=\{Y_t\leq \phi_{t',t}(y)\}\text{ for }y\in\mY^\circ
\end{equation}
and 
\begin{equation}\label{eq:l4.5}
Y_{t'}\deq \phi_{t,t'}(Y_t)\text{ conditional on }\mC_{z,z'}^t.
\end{equation}
Applying (\ref{eq:l4.4}) to (\ref{eq:l4.3}) leads to (\ref{eq:l4.1}), and (\ref{eq:l4.2}) follows from (\ref{eq:l4.5}).

The proof is completed by showing that, for each $t'\in\mT$, $F_{Y_{t'}|\mC_{z,z'}^t}(y)$ for $y\in\mY$ is identified.
Because $\phi_{t',t}(y)$ for $y\in\mY^\circ$ is identified from Lemma \ref{lem:apb}, $F_{Y_{t'}|\mC_{z,z'}^t}(y)$ for $y\in\mY^\circ$ is identified from (\ref{eq:l4.1}).
Then, from Assumption \ref{asp:ivqr} (v),
$F_{Y_{t'}|\mC_{z,z'}^t}(y)$ for $y\in\mY$ is identified, and the stated result follows.
\end{proof}

\begin{proof}[Proof of Theorem \ref{thm:apb2}]
From Assumption \ref{asp:apb}, for each $t,t'\in\mT$, there exists $(z,z')\in\mP$ such that $t{(z,z')}\in\{t,t'\}$. Let $t{(z,z')}=t'$ without loss of generality. 
From (\ref{eq:f.3}) in the proof of Lemma \ref{lem:apb}, we have $P(\mC_{z,z'}^t)=P(D_{t'}(z)=1,D_{t}(z')=1)$. Therefore, the stated result follows from Lemma \ref{lem:apb2}.
\end{proof}

\section{Derivation of (\ref{eq:ac3,4})}\label{sec:a.75}
In this section, we derive \eqref{eq:ac3,4} as the unique solution to (\ref{eq:ab1.f}) and (\ref{eq:ac1.f}).
We divide the proof into parts (i) and (ii).
The proof in this section uses an auxiliary result (Lemma \ref{lem:3b}) introduced in Appendix \ref{sec:a.5}.
\bigskip

\noindent\textit{Part (i).} 
Define $\phi_{1,0}^{y_f}$ as in (\ref{eq:A}).
In this part, we show that $\phi_{1,0}^{y_f}$ satisfies
\begin{equation}\label{eq:ac3}  
\phi_{2,0}(y^f)=\phi_{1,0}^{y^f}(\phi_{2,1}(y^f)).
\end{equation}
Observe that $\phi_{1,0}^{y_f}$ is the identified function that satisfies
\begin{equation}\label{eq:ac4}
F_{Y_1|\mC^1_{0,1}}(y)
=\frac{F_{Y_0|\mC^0_{1,0}}(\phi_{1,0}^{y_f}(y))P(\mC^0_{1,0})+F_{Y_2|\mC^2_{1,0}}(y^f)P(\mC^2_{1,0})}{P(\mC^1_{1,0})}\text{ for }y\in\mY^f.
\end{equation}
This is because $F_{Y_0|\mC^0_{1,0}}$ is continuous on $\mY$ from Assumptions \ref{asp:ivqr} and \ref{asp:mt} and Lemma \ref{lem:idcpl},
and $F_{Y_0|\mC^0_{1,0}}(Q_{Y_0|\mC^0_{1,0}}(\tau))=\tau$ holds for $\tau\in(0,1)$.
From (\ref{eq:ab1.f}) and (\ref{eq:ac4}) at $\phi_{2,1}(y^f)$, we have 
\begin{equation}\label{eq:B}  
F_{Y_0|\mC^0_{1,0}}(\phi_{2,0}(y^f))=F_{Y_0|\mC^0_{1,0}}(\phi_{1,0}^{y^f}(\phi_{2,1}(y^f))).
\end{equation}
Observe that $\phi_{2,0}(y^f)$ is contained in $\mY^\circ$ because $\phi_{2,0}(\mY^\circ)=\mY^\circ$ holds from Assumption \ref{asp:ivqr} (v), and
$F_{Y_0|\mC^0_{1,0}}$ is strictly increasing on $\mY^\circ$ from Lemma \ref{lem:3b}.
Therefore, (\ref{eq:ac3}) holds by taking the inverse of $F_{Y_0|\mC^0_{1,0}}$ in (\ref{eq:B}).

\bigskip
\noindent\textit{Part (ii).} In this part, we show that (\ref{eq:ac3,4}) is the unique solution to (\ref{eq:ab1.f}) and (\ref{eq:ac1.f}).
First, we plug in (\ref{eq:ac3}) to (\ref{eq:ac1.f}) and obtain
\begin{equation}\label{eq:ab3}
F_{Y_2|\mC^2_{0,2}}(y^f)=\frac{F_{Y_0|\mC^0_{2,0}}(\phi_{1,0}^{y^f}(\phi_{2,1}(y^f)))P(\mC^0_{2,0})+F_{Y_1|\mC^1_{2,0}}(\phi_{2,1}(y^f))P(\mC^1_{2,0})}{P(\mC^2_{0,2})}.
\end{equation}
Using $G_{1,2}^{y^f}(\cdot)$ defined in \ref{eq:46}, we can write (\ref{eq:ab3}) as $F_{Y_2|\mC^2_{(0,2)}}(y^f)=G_{1,2}^{y^f}(\phi_{2,1}(y^f))$.
Observe that $G_{1,2}^{y^f}$ is strictly increasing in $\mY^{f}\cap\mY^{\circ}$ because $F_{Y_0|\mC^0_{2,0}}$, $F_{Y_1|\mC^1_{2,0}}$, and $\phi_{1,0}^{y^f}$ are strictly increasing on $\mY^{f}\cap\mY^{\circ}$. $\phi_{2,1}(y^f)$ is contained in $\mY^{f}\cap\mY^{\circ}$ from Assumption \ref{asp:ivqr} (v), and we can solve (\ref{eq:ab3}) for $\phi_{2,1}(y^f)$ by taking the inverse of $G_{1,2}^{y^f}$.
Hence, $\phi_{2,1}(y)$ is identified at each $y^f\in\mY^\circ$ as in (\ref{eq:ac3,4}).

\section{Auxiliary results}\label{sec:a.5}
The following lemmas are used in the proofs in Appendices \ref{sec:a} and \ref{sec:a.75}.

\begin{lemma}[Strict monotonicity on the interior of the support]\label{lem:3b}
Let $W$ be a scalar-valued random variable whose support is $\mW$. Then, $F_W$ is strictly increasing on $\mW^\circ$.
\end{lemma}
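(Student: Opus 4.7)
The plan is to show that for any two points $w_1 < w_2$ in $\mW^\circ$, we have $F_W(w_2) - F_W(w_1) > 0$, and this is equivalent to showing $P(w_1 < W \le w_2) > 0$. I will work from the standard characterization of the support: $w \in \mW$ if and only if every open neighborhood of $w$ has positive probability under $W$'s distribution (equivalently, $\mW$ is the smallest closed set with full measure).

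First I would take $w_1, w_2 \in \mW^\circ$ with $w_1 < w_2$. Because $w_1 \in \mW^\circ$, there exists $\delta > 0$ with $(w_1 - \delta, w_1 + \delta) \subset \mW$, and by shrinking $\delta$ I may assume $\delta < w_2 - w_1$. Then I would pick an interior witness strictly between $w_1$ and $w_2$, namely $w^\ast := w_1 + \delta/2$, which lies in $\mW^\circ$ since it sits inside the open neighborhood of $w_1$ contained in $\mW$.

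Next, using that $w^\ast$ belongs to the support, every open neighborhood of $w^\ast$ carries positive mass. I would take the neighborhood $(w_1 + \delta/4,\, w_1 + 3\delta/4)$, which is contained in $(w_1, w_2)$ by the choice of $\delta$. Hence $P(w_1 < W < w_2) > 0$, which gives
\begin{equation*}
F_W(w_2) \;\ge\; F_W(w_2^-) \;\ge\; F_W(w_1) + P(w_1 < W < w_2) \;>\; F_W(w_1),
\end{equation*}
establishing strict monotonicity on $\mW^\circ$.

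The argument is really just a careful unpacking of the support definition, so there is no substantive obstacle; the only thing to be careful about is not to confuse the trivial fact that neighborhoods of $w_1$ or $w_2$ themselves have positive probability with the stronger fact needed here, namely that the \emph{open} interval $(w_1, w_2)$ has positive probability. The use of an auxiliary interior point $w^\ast$ strictly between $w_1$ and $w_2$ is precisely what bridges that gap, and it is available because $w_1$ lies in the interior of $\mW$ (membership in $\mW$ alone would not be enough).
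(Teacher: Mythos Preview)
Your proof is correct and follows essentially the same approach as the paper: both arguments use that $w_1\in\mW^\circ$ to locate a support point strictly between $w_1$ and $w_2$, then invoke the definition of support to obtain positive mass on an interval inside $(w_1,w_2)$. The only cosmetic difference is that the paper splits into the cases $w_2-w_1\le\delta$ and $w_2-w_1>\delta$, whereas you shrink $\delta$ at the outset to avoid the case distinction.
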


\begin{proof}[Proof of Lemma \ref{lem:3b}]
Because $\mW$ is the support of $W$, 
we have 
\begin{equation}\label{eq:3.b.1}
\mW=\{w\in\R:F_W(w+\eps)-F_W(w-\eps)>0\text{ for all }\eps>0\}.
\end{equation}
Consider $w_1,w_2\in\mW^\circ$ with $w_1<w_2$. Then, there exists $\delta>0$ such that $\eta\leq\delta\To w_1+\eta\in\mW$ holds. 
First, suppose $w_2-w_1\leq\delta$. Then, because $(w_1+w_2)/2\in\mW$, we have 
$F_W(w_1)<F_W(w_2)$ from (\ref{eq:3.b.1}).
Second, suppose $w_2-w_1>\delta$. Then, because $w_1+\delta/2\in\mW$, we have 
$F_W(w_1)<F_W(w_1+\delta)$ from (\ref{eq:3.b.1}). Hence, we have 
$F_W(w_1)<F_W(w_2)$ because 
$w_1+\delta<w_2$ implies 
$F_W(w_1+\delta)\leq F_W(w_2)$.
Therefore, the stated result follows. 
\end{proof}

\begin{lemma}[Strict monotonicity of the qf]\label{lem:3c}
Let $W$ be a scalar-valued random variable. Assume that $F_W$ is continuous. Then $Q_W$ is strictly increasing on $(0,1)$.
\end{lemma}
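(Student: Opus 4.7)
The plan is to argue by contradiction via the standard identity $F_W(Q_W(\tau))=\tau$, which holds whenever $F_W$ is continuous. Since $Q_W$ is always non-decreasing on $(0,1)$ (a direct consequence of its definition as the infimum $\inf\{w\in\R:F_W(w)\ge\tau\}$ over a set that shrinks in $\tau$), strict monotonicity fails only if there exist $\tau_1<\tau_2$ in $(0,1)$ with $Q_W(\tau_1)=Q_W(\tau_2)=:q$. Applying $F_W$ to both sides and using the identity would force $\tau_1=\tau_2$, a contradiction.

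The only nontrivial step is establishing $F_W(Q_W(\tau))=\tau$ under continuity. First I would observe that for $\tau\in(0,1)$, the set $A_\tau:=\{w\in\R:F_W(w)\ge\tau\}$ is nonempty (since $F_W(w)\to 1$ as $w\to\infty$) and bounded below (since $F_W(w)\to 0$ as $w\to-\infty$), so $q:=Q_W(\tau)$ is finite. Right-continuity of any cdf gives $F_W(q)\ge\tau$. Conversely, for every $w<q$ one has $w\notin A_\tau$, hence $F_W(w)<\tau$; letting $w\uparrow q$ and using continuity of $F_W$ at $q$ yields $F_W(q)\le\tau$. Combining the two inequalities gives $F_W(q)=\tau$.

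With the identity in hand, the contradiction argument closes the proof in one line: if $Q_W(\tau_1)=Q_W(\tau_2)=q$ with $\tau_1<\tau_2$, then $\tau_1=F_W(q)=\tau_2$. The main (mild) obstacle is just being careful about the two one-sided inequalities in the identity, one of which uses right-continuity (free) and the other of which uses the full continuity hypothesis, which is exactly where the assumption is consumed. No other subtleties arise, since monotonicity of $Q_W$ is immediate from the definition.
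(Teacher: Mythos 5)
Your proof is correct and follows essentially the same strategy as the paper's: assume $Q_W(\tau_1)=Q_W(\tau_2)$ for $\tau_1<\tau_2$ and use continuity of $F_W$ to recover each $\tau_i$ from the common value $q$, yielding a contradiction. The only difference is cosmetic: the paper routes through the probability integral transform $F_W(W)\sim U(0,1)$ together with the equivalence $Q_W(\tau)\le w\Leftrightarrow\tau\le F_W(w)$, whereas you prove the deterministic identity $F_W(Q_W(\tau))=\tau$ directly (right-continuity for one inequality, left-continuity for the other), which is a slightly more self-contained way to consume the same hypothesis.
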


\begin{proof}[Proof of Lemma \ref{lem:3c}]
Consider $\tau_1,\tau_2\in(0,1)$ with $\tau_1<\tau_2$. Suppose that $Q_W(\tau_1)=Q_W(\tau_2)$ holds.
Because $F_W$ is continuous, $F_W(W)\sim U(0,1)$ holds.
Then, from $Q_W(\tau_1)\leq w\toto\tau\leq F_W(w)$ for $\tau\in(0,1)$ and $w\in\R$, we have 
\[
1-\tau_i=P(F_W(W)\geq\tau_i)=P(W\geq Q_W(\tau_i))\text{ for }i=1,2.
\]
Hence, we have $\tau_1=\tau_2$, which is a contradiction.
Therefore, the stated result follows. 
\end{proof}

\begin{lemma}[Rank similarity on the compliers]\label{lem:3a}
Suppose that Assumptions \ref{asp:ivqr} and \ref{asp:mt} hold, and that $P(D_t(z)\leq D_t({z'}))=1$ and $P(\mC^t_{z,z'})>0$ hold for $(z,z')\in\mP$ and $t\in\mT$. 
Then, $\{U_s\}_{s=0}^k$ are identically distributed conditional on $\mC^t_{z,z'}$.
\end{lemma}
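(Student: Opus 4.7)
The plan is to show, for every $\tau\in(0,1)$ and every $s\in\mT$, that $P(U_s\le\tau\mid\mC^t_{z,z'},X=x)$ does not depend on $s$. The key idea is that the complier event $\mC^t_{z,z'}$ lies in the sigma-field generated by $(X,V)$, so I can reduce a conditional statement about $U_s$ given $\mC^t_{z,z'}$ to a conditional statement about $U_s$ given $(X,V)$, where rank similarity directly applies. The positivity $P(\mC^t_{z,z'}\mid X=x)>0$ assumed in the hypothesis is what legitimates the conditioning.

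First I would establish the measurability claim. By Assumption \ref{asp:ivqr}(iii), $T_z=\rho(z,X,V)$, so $D^t_z=\mathbf{1}\{\rho(z,X,V)=t\}$ is $\sigma(X,V)$-measurable. Hence $\mC^t_{z,z'}=\{D^t_z=0,D^t_{z'}=1\}\in\sigma(X,V)$, and on $\{X=x\}$ this event coincides with $\{V\in A\}$ for the measurable set $A:=\{v:\rho(z,x,v)\ne t,\ \rho(z',x,v)=t\}$. Next I would invoke Assumption \ref{asp:ivqr}(iv) in the form $P(U_s\le\tau\mid X,Z,V)=P(U_0\le\tau\mid X,Z,V)$ almost surely, and take $E[\,\cdot\mid X,V]$ on both sides, using the tower property to remove $Z$. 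This yields $P(U_s\le\tau\mid X,V)=P(U_0\le\tau\mid X,V)$ almost surely.

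Combining these two ingredients, I would compute
\begin{align*}
P(U_s\le\tau,\mC^t_{z,z'}\mid X=x)
&=E\!\left[\mathbf{1}\{U_s\le\tau\}\,\mathbf{1}\{V\in A\}\,\middle|\,X=x\right]\\
&=E\!\left[\mathbf{1}\{V\in A\}\,P(U_s\le\tau\mid X,V)\,\middle|\,X=x\right],
\end{align*}
where the second equality applies the tower property together with the $\sigma(X,V)$-measurability of $\mathbf{1}\{V\in A\}$. By the previous display the integrand is the same for every $s$, so the right-hand side is free of $s$. Dividing by $P(\mC^t_{z,z'}\mid X=x)>0$ delivers the claim.

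The argument is essentially bookkeeping; all the content lies in Assumption \ref{asp:ivqr}(iv). The only subtlety is the transfer of rank similarity from the conditioning sigma-field $\sigma(X,Z,V)$ to the smaller sigma-field $\sigma(X,V)$, which is precisely what the tower-property step handles. Notably, neither the monotonicity hypothesis $P(D^t_z\le D^t_{z'}\mid X=x)=1$ nor the other parts of Assumption \ref{asp:mt} actually enter the proof; they only serve to give meaning to the event $\mC^t_{z,z'}$ and to guarantee the positivity needed for conditioning.
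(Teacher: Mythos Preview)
Your proof is correct and takes a genuinely different route from the paper. The paper's argument first establishes a decomposition analogous to Lemma~\ref{lem:idcpl}: it writes $F_{U_{t'}\mid\mC^t_{z,z'}X}$ as a weighted difference of the observable conditionals $F_{U_{t'}\mid TZX}(\cdot\mid t,z',x)$ and $F_{U_{t'}\mid TZX}(\cdot\mid t,z,x)$, a step that uses both the monotonicity inequality $D^t_z\le D^t_{z'}$ and the joint independence in Assumption~\ref{asp:mt}(i); it then applies rank similarity at the level of $\sigma(T,Z,X)$ to show these conditionals do not depend on $t'$. You instead argue directly that $\mC^t_{z,z'}\in\sigma(X,V)$ via the structural reading of the selection equation, and transfer rank similarity from $\sigma(X,Z,V)$ down to $\sigma(X,V)$ by the tower property, bypassing the decomposition entirely. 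As you correctly observe, this makes no use of monotonicity or of Assumption~\ref{asp:mt}, so your argument actually proves the lemma under weaker hypotheses than those listed. The one step worth flagging is the inference $T_z=\rho(z,X,V)$ from Assumption~\ref{asp:ivqr}(iii): as written, that assumption concerns only the observed $T$, so you are reading the selection equation structurally. This is the standard interpretation in this literature and is implicit in the paper's own use of $V$ in the rank similarity condition, so it is not a gap, but it is worth stating explicitly. Overall your route is cleaner and more revealing of where the content lies; the paper's route has the compensating virtue of being phrased in terms of observable conditional distributions, which meshes with the identification machinery used elsewhere.
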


\begin{proof}[Proof of Lemma \ref{lem:3a}]
As we show (\ref{eq:16}) of Lemma \ref{lem:idcpl}, we can show that for $\tau\in(0,1)$ and $t'\in\mT$,
\begin{equation}\label{eq:16a}
F_{U_{t'}|\mC^t_{z,z'} }(\tau)
=\frac{F_{U_{t'}|TZ}(\tau|t,z')p_t(z')-F_{U_{t'}|TZ}(\tau|t,z)p_t(z)}{p_t(z')-p_t(z)}
\end{equation}
holds. Under rank similarity, for $z^+\in\mZ$, we have
\begin{equation}\label{eq:16b}
F_{U_{t'}|TZ}(\tau|t,z^+)=F_{U_{t}|TZ}(\tau|t,z^+).
\end{equation}
Combining (\ref{eq:16a}) with (\ref{eq:16b}) leads to $F_{U_{t'}|\mC^t_{z,z'} }(\tau)=F_{U_{t}|\mC^t_{z,z'} }(\tau)$, and the stated result follows. 
\end{proof}

\def\thesection{}

\section{Acknowledgments}
I would like to thank Katsumi Shimotsu, my advisor at The University of Tokyo, Hidehiko Ichimura, Yuichi Kitamura, Hiroaki Kaido, Takuya Ishihara, Masayuki Sawada, Ryo Imai, Ryota Yuasa, and the seminar participants at The University of Tokyo, Otaru University of Commerce, and Hitotsubashi University for their helpful comments on this research. This research is supported by JSPS KAKENHI Grant Number JP20J20046.

\bibliography{ivqr_id}
\if0
\newpage

\def\thesection{\Alph{section}}

This Supplemental Appendix is organized as follows. Appendix \ref{sec:a.25} derives the closed-form expressions of the counterfactual mappings for the general discrete treatment case.
Appendix \ref{sec:e} shows that Assumption \ref{asp:apb} implies the full rank conditions in \cite{CH2005} under Assumptions \ref{asp:ivqr} and \ref{asp:mt} and some differentiability assumptions.
Appendix \ref{sec:e.2} discusses the case where Assumption \ref{asp:apb.2} is violated, although the full rank conditions in \cite{CH2005} hold for the $k=2$ case. We provide a numerical example for this case.
Appendix \ref{sec:e.3} provides additional testable restrictions compared with \cite{CH2005} under our assumptions.
Appendix \ref{sec:g} relaxes Assumption \ref{asp:ivqr} (v), which is assumed for simplicity in the main paper, and precisely derives the closed-form expressions of the treatment effects. 

\newpage
\section{Closed-form expressions of the counterfactual mappings in Lemma \ref{lem:apb}}\label{sec:a.25}
In this section, we derive the closed-form expressions of $\phi_{s,t}$'s in Lemma \ref{lem:apb} for general $k\in\mT$.
Take any $y^{f,k}\in\mY^\circ$. Define $G_{k-1,k}^{y^f}(y)$ for $y\in\mY^f$ as
\begin{equation}\label{eq:a.4.10}
G_{k-1,k}^{y^f}(y):=\frac{\sum_{j=0}^{k-2}F_{Y_j|\mC^j_{k,0}}(\phi_{k-1,j}^{y^f}(y))P(\mC^j_{k,0})
+F_{Y_{k-1}|\mC^{k-1}_{k,0}}(y)P(\mC^{k-1}_{k,0})}{P(\mC^h_{0,k})},
\end{equation}
where $\phi_{k-1,j}^{y_f}$ for $j=0,\ldots,k-2$ with their domain $\mY^f\subset\mY$ are constructed to satisfy the following equations with $\phi_{k-1,k-1}^{y^f}(y)=y$:\footnote{The domain $\mY^f$ contains $\phi_{k,k-1}(y^{f,k})$. When $k=2$, (\ref{eq:ac4}) and (\ref{eq:4.3.10}) are the same.}
\begin{equation}\label{eq:4.3.10}
\begin{split}
&F_{Y_i|\mC^i_{0,i}}(\phi_{k-1,i}^{y^f}(y))\\
=&\frac{\sum_{j\neq i,k,k-1}F_{Y_j|\mC^j_{i,0}}(\phi_{k-1,j}^{y^f}(y))P(\mC^j_{i,0})+F_{Y_{k-1}|\mC^{k-1}_{i,0}}(y)P(\mC^{k-1}_{i,0})+F_{Y_k|\mC^k_{i,0}}(y^{f,k})P(\mC^k_{i,0})}{P(\mC^i_{0,i})}\\
\text{ for }&i=1,\ldots,k-1\text{ and }y\in\mY^f.
\end{split}
\end{equation}
The closed-form expressions of $\phi_{k,j}(y^{f,k})$ for $j=0,\ldots,k-1$ are derived from (\ref{eq:4.3.4}) as follows:
\begin{equation}\label{eq:4.3.12}
\begin{split}
\phi_{k,k-1}(y^{f,k})
=&\sup\left\{y\in\mY^f:G_{k-1,k}^{y^f}(y)\leq F_{Y_k|\mC^h_{0,k}}(y^{f,k})\right\}\\
=&\inf\left\{y\in\mY^f:G_{k-1,k}^{y^f}(y)\geq F_{Y_k|\mC^h_{0,k}}(y^{f,k})\right\},
\end{split}
\end{equation}
\begin{equation}\label{eq:4.3.11}  
\phi_{k,j}(y^{f,k})=\phi_{k-1,j}^{y^f}(\phi_{k,k-1}(y^{f,k}))\text{ for }j=0,\ldots,k-2.
\end{equation}
When $k=2$, \eqref{eq:a.4.10}-\eqref{eq:4.3.11} are the same as \eqref{eq:46}, \eqref{eq:ac4}, and the two expressions of \eqref{eq:ac3,4}, respectively.
Note that the first expression of \eqref{eq:ac3,4} can be expressed in two ways as in \eqref{eq:4.3.12}.
This is because $G_{1,2}^{y^f}(y)$, defined in \eqref{eq:46}, is increasing in $y\in\mY^f$ when $k=2$.
$\phi_{k,k-1}(y^{f,k})$ has these two expressions for general $k\in\mT$ because $G_{k-1,k}^{y^f}(y)$ is not larger (or smaller) than $F_{Y_k|\mC^h_{0,k}}(y^{f,k})$ when $y$ is smaller (or larger) than $\phi_{k,k-1}(y^{f,k})$.
This property simplifies the process of finding $\phi_{k,k-1}(y^{f,k})$ from $G_{k-1,k}^{y^f}$.
As discussed in proof of Lemma \ref{asp:apb} in Appendix \ref{sec:a},
other counterfactual mappings are also identified as closed-form expressions on $\mY^\circ$ because they are inversions or compositions of $\phi_{k,0},\ldots,\phi_{k,k-1}$.

We proceed to show that (\ref{eq:4.3.12}) and (\ref{eq:4.3.11}) are the unique solutions to (\ref{eq:4.3.4}) at $y^{f,k}$. 
We divide the proof into parts (i)-(iii). 
Part (i) constructs $\phi_{k-1,j}^{y_f}$ for $j=0,\ldots,k-2$ that satisfy (\ref{eq:4.3.10}),
part (ii) shows that $\phi_{k-1,j}^{y_f}$ for $j=0,\ldots,k-2$ satisfy (\ref{eq:4.3.11}), and
part (iii) shows that (\ref{eq:4.3.12}) and (\ref{eq:4.3.11}) uniquely solve (\ref{eq:4.3.4}) at $y^{f,k}$.
\bigskip

\noindent\textit{Part (i).} In this part, we construct $\phi_{k-1,j}^{y_f}$ for $j=0,\ldots,k-2$ that satisfy (\ref{eq:4.3.10}), 
where $\mY^f\subset\mY$ is the domain of $\phi_{k-1,j}^{y_f}$ for $j=0,\ldots,k-2$, and $\mY^f$ contains $\phi_{k,k-1}(y^{f,k})$.
By comparing (\ref{eq:4.3.4}) at $y^{f,k}$ and (\ref{eq:4.3.10}), these functions exist at $\phi_{k,k-1}(y^{f,k})$. 

Consider the case of $k=h$. 
We can construct these functions for any $h$ by starting the following discussion from the case of $h=3$ and inductively repeating the discussion for $h=4,5,\ldots$.

Suppose we know how to construct $\phi_{k-1,j}^{y_f}$ for $j=0,\ldots,k-2$ in the case of $k=2,\ldots,h-1$. From (\ref{eq:4.3.10}) with $k=h$, for $y^{f,h-1}\in\mY^f$, we can construct $\phi_{h-1,j}^{y_f}$ with its domain $\mY^{f,2}\subset\mY$ as follows: 
\begin{equation}\label{eq:a.4.2.6}
\phi_{h-1,h-2}^{y^f}(y^{f,h-1})
\in\left\{y\in\mY^{f,2}:G_{h-2,h-1}^{y^f}(y)
= F_{Y_{h-1}|\mC^{h-1}_{0,h-1}}(y^{f,h-1})\right\},
\end{equation}
\begin{equation}\label{eq:a.4.2.7}  
\phi_{h-1,j}^{y^f}(y^{f,h-1})
=\phi_{h-2,j}^{y^f}(\phi_{h-1,h-2}^{y^f}(y^{f,h-1}))\text{ for }j=0,\ldots,h-3,
\end{equation} 
where $G_{h-2,h-1}^{y^f}$ is defined as
\[
G_{h-2,h-1}^{y^f}(y):=\frac{\sum_{j\neq h,h-1}F_{Y_j|\mC^j_{h-1,0}}(\phi_{h-2,j}^{y^f}(y))P(\mC^j_{h-1,0})+F_{Y_h|\mC^h_{h-1,0}}(y^{f,h})P(\mC^h_{h-1,0})}{P(\mC^{h-1}_{0,h-1})},
\]
and $\phi_{h-2,j}^{y_f}$ for $j=0,\ldots,h-3$ are the functions that satisfy the following equations:
\begin{equation}\label{eq:a.4.2.8}
\begin{split}
&F_{Y_i|\mC^i_{0,i}}(\phi_{h-2,i}^{y^f}(y))\\
=&\frac{\sum_{j\neq i,h,h-1}F_{Y_j|\mC^j_{i,0}}(\phi_{h-2,j}^{y^f}(y))P(\mC^j_{i,0})+\sum_{j=h-1}^h F_{Y_j|\mC^j_{i,0}}(y^{f,j})P(\mC^h_{i,0})}{P(\mC^i_{0,i})}\\
\text{ for }&i=1,\ldots,h-2\text{ and }y\in\mY^{f,2},
\end{split}
\end{equation}
We can construct $\phi_{h-2,j}^{y_f}$ for $j=0,\ldots,h-3$ in the same way as the construction of $\phi_{h-2,j}^{y_f}$ for $j=0,\ldots,h-3$ for the case of $k=h-1$. 
\bigskip

\noindent\textit{Part (ii).} In this part, we show that $\phi_{k-1,j}^{y_f}$ for $j=0,\ldots,k-2$ satisfy (\ref{eq:4.3.11}). 
Let $y_{k-1}=\phi_{k,k-1}(y^{f,k})$ and consider the following simultaneous equations of $(y_0,\ldots,y_{k-2})$:
\begin{equation}\label{eq:a.4.5}
\begin{split}
&F_{Y_i|\mC^i_{0,i}}(y_i)\\
=&\frac{\sum_{j\neq i,k}F_{Y_j|\mC^j_{i,0}}(y_j)P(\mC^j_{i,0})+F_{Y_k|\mC^k_{i,0}}(y^{f,k})P(\mC^k_{i,0})}{P(\mC^i_{0,i})}\\
\text{ for }&y_i\in\mY^\circ\text
{ and }i=1,\ldots,k-1.
\end{split}
\end{equation}
We show that $(y_0,\ldots,y_{k-2})=(\phi_{k,0}(y^{f,k}),\ldots,\phi_{k,k-2}(y^{f,k}))$ uniquely solves (\ref{eq:a.4.5}) with $y_{k-1}=\phi_{k,k-1}(y^{f,k})$.
Then, (\ref{eq:4.3.11}) follows from
(\ref{eq:4.3.4}) at $y^{f,k}$ and (\ref{eq:4.3.10}) at $\phi_{k,k-1}(y^{f,k})$.

Suppose we have a solution
$(y_0',\ldots,y_{k-2}')$ different from $(\phi_{k,0}(y^{f,k}),\ldots,\phi_{k,k-2}(y^{f,k}))$ that also satisfies (\ref{eq:a.4.5}) with $y_{k-1}=\phi_{k,k-1}(y^{f,k})$. 
Thus, as we show the contradiction for the cases of $y_0'<\phi_{k,0}(y_k)$, $y_0'>\phi_{k,0}(y_k)$, and $y_0'=\phi_{k,0}(y_k)$ in part (ii) of the proof of Lemma \ref{lem:apb}, we can show that $(y_0',\ldots,y_{k-2}')$ cannot be a solution of (\ref{eq:a.4.5}) with $y_{k-1}=\phi_{k,k-1}(y^{f,k})$ for the cases of $y_0'<\phi_{k,0}(y^{f,k})$, $y_0'>\phi_{k,0}(y^{f,k})$, and $y_0'=\phi_{k,0}(y^{f,k})$. 

Therefore, $(y_0,\ldots,y_{k-1})=(\phi_{k,0}(y^{f,k}),\ldots,\phi_{k,k-1}(y^{f,k}))$ uniquely solves (\ref{eq:a.4.5}) with $y_{k-1}=\phi_{k,k-1}(y^{f,k})$.
Moreover, (\ref{eq:4.3.11}) holds regardless of the uniqueness of $\phi_{k-1,j}^{y_f}$ for $j=0,\ldots,k-2$ to satisfy (\ref{eq:4.3.10}).
\bigskip

\noindent\textit{Part (iii).} In this part, we show that (\ref{eq:4.3.12}) and (\ref{eq:4.3.11}) uniquely solve (\ref{eq:4.3.4}) at $y^{f,k}$.
First, we plug in (\ref{eq:4.3.11}) to (\ref{eq:4.3.4}) with $i=k$:
\begin{equation}\label{eq:a.4.7}
\begin{split}
&F_{Y_k|\mC^h_{0,k}}(y^{f,k})\\
=&\frac{\sum_{j=0}^{k-2}F_{Y_j|\mC^j_{k,0}}(\phi_{k-1,j}^{y^f}(\phi_{k,k-1}(y^{f,k})))P(\mC^j_{k,0})
+F_{Y_{k-1}|\mC^{k-1}_{k,0}}(\phi_{k,k-1}(y^{f,k}))P(\mC^{k-1}_{k,0})}{P(\mC^k_{0,k})}.
\end{split}
\end{equation}
Next, we solve (\ref{eq:a.4.7}) for $\phi_{k,k-1}(y^{f,k})$. 
We first show that,
for $j=0,\ldots,k-2$ and $y\in\mY^f\cap\mY^\circ$, $\phi_{k-1,j}^{y^f}$ satisfies the following properties:
\begin{equation}\label{eq:a.4.8}
y>\phi_{k,k-1}(y^{f,k})\To\phi_{k-1,j}^{y^f}(y)>\phi_{k-1,j}^{y^f}(\phi_{k,k-1}(y^{f,k}))
\end{equation}
and
\begin{equation}\label{eq:a.4.9}
y<\phi_{k,k-1}(y^{f,k})\To\phi_{k-1,j}^{y^f}(y)<\phi_{k-1,j}^{y^f}(\phi_{k,k-1}(y^{f,k})).
\end{equation}
Given (\ref{eq:a.4.8}) and (\ref{eq:a.4.9}), for $y\in\mY^f\cap\mY^\circ$, $G_{k-1,k}^{y^f}$, defined in (\ref{eq:a.4.10}), satisfies a property similar to (\ref{eq:a.4.8}) and (\ref{eq:a.4.9}) of $\phi_{k-1,j}^{y^f}$.

We show (\ref{eq:a.4.8}). 
(\ref{eq:a.4.9}) follows from a similar argument using the reverse signs of inequalities.
Suppose $y^+>\phi_{k,k-1}(y^{f,k})$ satisfies $\phi_{k-1,0}^{y^f}(y^+)\leq \phi_{k-1,0}^{y^f}(\phi_{k,k-1}(y^{f,k}))$. 
From (\ref{eq:4.3.11}), we have $\phi_{k-1,0}^{y^f}(y^+)\leq\phi_{k,0}(y^{f,k})$. Then, as we show the contradiction for the case of $y_0'<\phi_{k,0}(y_k)$ in part (ii) of the proof of Lemma \ref{lem:apb}, we can show that $\phi_{k-1,j}^{y_f}(y^+)$ for $j=k-1,\ldots,0$ cannot satisfy (\ref{eq:4.3.10}) at $y^+$. 
For the case of $y^+>\phi_{k,k-1}(y^{f,k})$ satisfying $\phi_{k-1,0}^{y^f}(y^+)> \phi_{k-1,0}^{y^f}(\phi_{k,k-1}(y^{f,h}))$, consider the case of $\phi_{k-1,j}^{y^f}(y^+)\leq \phi_{k-1,j}^{y^f}(\phi_{k,k-1}(y^{f,h}))$ for some $j\in\setminus\{0,k-1,k\}$.
We can show a contradiction in the same way as in the case of $y_0'=\phi_{k,0}(y_k)$ and $y_j'\neq\phi_{k,j}(y_k)$ for some $j\in\mT\setminus\{0,k\}$ in part (ii) of the proof of Lemma \ref{lem:apb}.
Hence, (\ref{eq:a.4.8}) holds.

Therefore, we can solve (\ref{eq:a.4.7}) for $\phi_{k,k-1}(y^{f,k})$, and the closed-form expression of $\phi_{k,k-1}(y)$ at each $y^{f,k}$ is derived as (\ref{eq:4.3.12}). Closed-form expressions of $\phi_{k,j}$ for $j=0,\ldots,k-2$ are derived by plugging in the closed-form expression of $\phi_{k,k-1}$ to (\ref{eq:4.3.11}).
\section{The full rank conditions in Chernozhukov and Hansen (2005) under Assumption \ref{asp:apb}}\label{sec:e}
This section shows that Assumption \ref{asp:apb} implies the full rank conditions in \cite{CH2005} under Assumptions \ref{asp:ivqr} and \ref{asp:mt} and some differentiability assumptions.
For simplicity, let $\mZ$ contain $k+1$ values and $\mZ=\{z_0,z_1\ldots,z_k\}$.
The following argument does not rely on this assumption.

Assume that Assumptions \ref{asp:ivqr}-\ref{asp:apb} hold.
Fix some small constants $\overline \delta>0$ and $\underline{f}>0$.
For a given quantile $\tau\in(0,1)$ and $\delta\geq 0$, define $\mL(\delta)$ as a set containing all vectors $(y_0,\ldots,y_k)$ that satisfy $\sum_{t=0}^k F_{Y|TZ}(y_t|t,z)p_t(z)\in[\tau-\delta,\tau+\delta]$ for each $z\in\mZ$.
Assume that, for each $t\in\mT$ and $z\in\mZ$, $F_{Y|TZ}(\cdot|t,z)$ is continuously differentiable on $\mL_0$.
Define $\mL$ as a subset of $\mL_0$ containing all vectors $(y_0,\ldots,y_k)$ that satisfy the following restriction:
\begin{itemize}
\item[] For each $t\in\mT$, $y_t$ is contained in a set containing all values $y\in\mY$ that satisfy $f_{Y|TZ}(y|t,z)\geq \underline{f}$ for all $z\in\mZ$ with $p_t(z)>0$ and $f_{Y|\mC^t_{z,z'}}(y)\geq \underline{f}$ for all $(z,z')\in\mP$ with $D^t_z\leq D^t_{z'}$ almost surely and  $P(\mC^t_{z,z'})>0$.
\end{itemize}
Assume that $(Q_{Y_0}(\tau),\ldots,Q_{Y_k}(\tau))$ is contained in $\mL$.
Define $\Pi:\R^{k+1}\to\R^{k+1}$ as
\begin{equation}\label{eq:ch1}
\Pi(y_0,\ldots,y_k):=
\begin{pmatrix}
\sum_{t=0}^k F_{Y|TZ}(y_t|t,z_0)p_t(z_0)-\tau \\
\vdots \\
\sum_{t=0}^k F_{Y|TZ}(y_t|t,z_k)p_t(z_k)-\tau
\end{pmatrix}.
\end{equation}
Then $\Pi'(y_0,\ldots,y_k)$, defined in \eqref{eq:ch2}, is the Jacobian of $\Pi$ with respect to $(y_0,\ldots,y_k)$:
\begin{equation*}
\Pi'(y_0,\ldots,y_k):=
\begin{pmatrix}
f_{Y|TZ}(y_0|0,z_0)p_0(z_0) & \cdots & f_{Y|TZ}(y_k|k,z_0)p_k(z_0) \\
\vdots & \ddots & \vdots \\
f_{Y|TZ}(y_0|0,z_k)p_0(z_k) & \cdots & f_{Y|TZ}(y_k|k,z_k)p_k(z_k)
\end{pmatrix}.
\end{equation*}
We show that $\Pi'(y_0,\ldots,y_k)$ is full rank for each $(y_0,\ldots,y_k)\in\mL\cap\mL(0)$.
In this case, $\Pi'(y_0,\ldots,y_k)$ is also full rank for each $(y_0,\ldots,y_k)\in\mL\cap\mL(\delta^*)$ for some $\delta^*>0$ because the determinant of $\Pi'(y_0,\ldots,y_k)$ is continuous in $(y_0,\ldots,y_k)$.

The restrictions define $\mL$ as a set of potential solutions to the moment equations \eqref{eq:ti} where the solutions are contained in $\mY$.
Both $f_{Y|TZ}(\cdot|t,z)$ and $f_{Y|\mC^t_{z,z'}}(\cdot)$ are required to be positive on $\mL$ because
we assume that the support of the conditional distribution of $Y_t$ given $\mC^t_{z,z'}$ is $\mY$ in Assumption \ref{asp:mt} (iv).
Note that, for each $s,t\in\mT$, the derivative of the counterfactual mapping $\phi_{s,t}'(\cdot)$ is also positive on $\mL$.
This is because, for $(y_0,\ldots,y_k)\in\mL$, the definition implies
\begin{equation}\label{eq:e.00}
\phi_{s,t}'(y_s)=\frac{f_{Y_s}(y_s)}{f_{Y_t}(y_t)}>0.
\end{equation}

Suppose that the monotonicity subset $\Lambda\subset\mP$ contains $k$ pairs of instrument values $\lambda_1,\ldots,\lambda_k$ such that each sign treatment is $t({\lambda_i})=i$ for $i=1,\ldots,k$. 
For notation simplicity, let $\lambda_i=(i,0)$ and $D_{i}({i})\geq D_{i}({0})$ hold almost surely. 
Then the monotonicity inequalities correspond to those of Example I in Section \ref{subsec:3.2}, and the types of monotonicity relationships on $(i,0)$ and $(j,0)$ for $i\neq j$ differ. 
The proof does not rely on this assumption.
Similarly, we can prove the stated result without this assumption.

For notation simplicity, let $f_{Y_i\mC^i_{0,i}}(\cdot):=f_{Y_i|\mC^i_{0,i}}(\cdot)P(\mC^i_{0,i})$ and
$f_{i,j}(\cdot):=f_{Y|TZ}(\cdot|j,0)p_j(i)$.
Observe that Lemma \ref{lem:idcpl} implies $f_{Y_t\mC^t_{z,z'}}(y)=f_{z',t}(y)-f_{z,t}(y)$, and the determinant of the Jacobian $\Pi'(y_0,\ldots,y_k)$ can be expressed as follows:
\begin{equation}\label{eq:e.1}
\begin{split}
&\Pi'(y_0,\ldots,y_k)\\
=&\begin{vmatrix}
f_{0,0}(y_0) & \cdots & f_{0,k}(y_k)\\
\vdots & \ddots & \vdots \\
f_{k,0}(y_0) & \cdots & f_{k,k}(y_k)
\end{vmatrix}\\
=&\begin{vmatrix}
f_{0,0}(y_0) & f_{0,1}(y_1) & f_{0,2}(y_2) & \cdots & f_{0,k-1}(y_{k-1}) & f_{0,k}(y_k) \\
-f_{Y_0\mC^0_{1,0}}(y_0) & f_{Y_1\mC^1_{0,1}}(y_1) & -f_{Y_2\mC^2_{1,0}}(y_2) & \cdots & -f_{Y_{k-1}\mC^{k-1}_{1,0}}(y_{k-1}) & -f_{Y_k\mC^k_{1,0}}(y_k)\\
\vdots & \vdots & \vdots & \ddots & \vdots \\
-f_{Y_0\mC^0_{k,0}}(y_0) & -f_{Y_1\mC^1_{k,0}}(y_1) & -f_{Y_2\mC^2_{k,0}}(y_2) & \cdots & -f_{Y_{k-1}\mC^{k-1}_{k,0}}(y_{k-1}) & f_{Y_k\mC^k_{0.k}}(y_k)
\end{vmatrix}.
\end{split}
\end{equation}
Define
\begin{equation}\label{eq:e.2}
\begin{split}
&\Pi^{\wedge}(y_0,\ldots,y_k)\\
:=&\begin{pmatrix}
f_{0,0}(y_0) & f_{0,1}(y_1) & f_{0,2}(y_2) & \cdots & f_{0,k-1}(y_{k-1}) & f_{0,k}(y_k) \\
-f_{Y_0\mC^0_{1,0}}(y_0) & f_{Y_1\mC^1_{0,1}}(y_1) & -f_{Y_2\mC^2_{1,0}}(y_2) & \cdots & -f_{Y_{k-1}\mC^{k-1}_{1,0}}(y_{k-1}) & -f_{Y_k\mC^k_{1,0}}(y_k)\\
\vdots & \vdots & \vdots & \ddots & \vdots \\
-f_{Y_0\mC^0_{k,0}}(y_0) & -f_{Y_1\mC^1_{k,0}}(y_1) & -f_{Y_2\mC^2_{k,0}}(y_2) & \cdots & -f_{Y_{k-1}\mC^{k-1}_{k,0}}(y_{k-1}) & f_{Y_k\mC^k_{0.k}}(y_k)
\end{pmatrix}.
\end{split}
\end{equation}
It suffices to show that $\Pi^{\wedge}(y_0,\ldots,y_k)$ is full rank.

For $l=2,\ldots,k$, let $\mM^l$ be a set of $l\times l$ submatrices of $\Pi^{\wedge}(y_0,\ldots,y_k)$ whose diagonal elements are positive and all other entries are negative.
Let $\mM^l_{i,j}$ be a set of $l\times l$ submatrices of $\Pi^{\wedge}(y_0,\ldots,y_k)$ such that all entries in row $i$ and column $j$ are negative, and the $(l-1)\times(l-1)$ submatrix obtained by removing row $i$ and column $j$ is contained in $\mM^{l-1}$.
Let $\Pi^l\in\mM^l$ and $\Pi^l_{i,j}\in\mM^l_{i,j}$.
Using induction, we show that the following holds for $l=2,\ldots,k$:
\begin{equation}\label{eq:e.3}
\begin{cases}
|\Pi^l_{i,j}|>0 & \text{ if }(i+j)\text{ is odd}\\
|\Pi^l_{i,j}|<0 & \text{ otherwise}
\end{cases}
\end{equation}
and
\begin{equation}\label{eq:e.4}
|\Pi^l|\geq0.
\end{equation}

For $l=2$, \eqref{eq:e.3} follows from the definition.
Observe that, for $i_1<\cdots<i_l$, $\Pi^l$ can be expressed as 
\begin{equation}\label{eq:e.5}
\Pi^l
=\begin{pmatrix}
f_{Y_{i_1}\mC^{i_1}_{0,i_1}}(y_{i_1}) & -f_{Y_{i_2}\mC^{i_2}_{i_1,0}}(y_{i_2}) & \cdots & -f_{Y_{i_{l-1}}\mC^{i_{l-1}}_{i_1,0}}(y_{i_{l-1}}) & -f_{Y_{i_l}\mC^{i_l}_{i_1,0}}(y_{i_l})\\
\vdots & \vdots & \ddots & \vdots \\
-f_{Y_{i_1}\mC^{i_1}_{i_l,0}}(y_{i_1}) & -f_{Y_{i_2}\mC^{i_2}_{i_l,0}}(y_{i_2}) & \cdots & -f_{Y_{i_{l-1}}\mC^{i_{l-1}}_{i_l,0}}(y_{i_{l-1}}) & f_{Y_{i_l}\mC^{i_l}_{0,i_l}}(y_{i_l})
\end{pmatrix}.
\end{equation}
Note that (\ref{eq:4.3.4}) holds, which is shown in the proof of Lemma \ref{lem:apb}, and (\ref{eq:4.3.4}) implies 
\begin{equation}\label{eq:e.0}
f_{Y_i\mC^i_{0,i}}(y_i)\phi_{k,i}'(y_k)
=\sum_{j\neq i}f_{Y_j\mC^j_{i,0}}(y_j)\phi_{k,j}'(y_k)\text{ for }y\in\mY^\circ\text{ and }i=1,\ldots,k.
\end{equation}
From \eqref{eq:e.0}, we obtain
\begin{equation}\label{eq:e.6}
\Pi^l
\begin{pmatrix}
\phi_{k,i_1}'(y_k)\\
\vdots\\
\phi_{k,i_l}'(y_k)
\end{pmatrix}
=\begin{pmatrix}
\sum_{j\neq i_1,\ldots,i_l}f_{Y_j\mC^j_{i_1,0}}(y_j)\phi_{k,j}'(y_k)\\
\vdots\\
\sum_{j\neq i_1,\ldots,i_l}f_{Y_j\mC^j_{i_l,0}}(y_j)\phi_{k,j}'(y_k)
\end{pmatrix}.
\end{equation}
Then, if $\Pi^l$ is full rank, we have 
\begin{equation}\label{eq:e.7}
\begin{pmatrix}
\phi_{k,i_1}'(y_k)\\
\vdots\\
\phi_{k,i_l}'(y_k)
\end{pmatrix}
=\frac{1}{|\Pi^l|}\tilde \Pi^l
\begin{pmatrix}
\sum_{j\neq i_1,\ldots,i_l}f_{Y_j\mC^j_{i_1,0}}(y_j)\phi_{k,j}'(y_k)\\
\vdots\\
\sum_{j\neq i_1,\ldots,i_l}f_{Y_j\mC^j_{i_l,0}}(y_j)\phi_{k,j}'(y_k)
\end{pmatrix},
\end{equation}
where $\tilde \Pi^l$ denotes the adjugate of $\Pi^l$.
Note that, from \eqref{eq:e.00}, all entries on the left-hand side of \eqref{eq:e.7} are positive.
From the definition,
\[
\tilde \Pi^2
=\begin{pmatrix}
f_{Y_{i_2}\mC^{i_2}_{0,i_2}}(y_{i_2}) & f_{Y_{i_2}\mC^{i_2}_{i_1,0}}(y_{i_2})\\
f_{Y_{i_1}\mC^{i_1}_{i_2,0}}(y_{i_1}) &  f_{Y_{i_1}\mC^{i_1}_{0,i_1}}(y_{i_1})
\end{pmatrix},
\]
Thus, $|\Pi^2|$ is positive because all entries of $\tilde \Pi^2$ are positive.
Hence, \eqref{eq:e.4} follows for $l=2$.

Suppose that $\Pi^{l-1}\in\mM^{l-1}$ and $\Pi^{l-1}_{i,j}\in\mM^{l-1}_{i,j}$ satisfy
\begin{equation}\label{eq:e.8}
\begin{cases}
|\Pi^{l-1}_{i,j}|>0 & \text{ if }(i+j)\text{ is odd}\\
|\Pi^{l-1}_{i,j}|<0 & \text{ otherwise}
\end{cases}\quad\text{and}\quad |\Pi^{l-1}|\geq0.
\end{equation}
First, we show \eqref{eq:e.3} under \eqref{eq:e.8}.
Let $\pi^l_{i,m}$ be the $(i,m)$ entry of $\Pi^l_{i,j}$.
Then, from the definition, there exist $(l-1)\times(l-1)$ matrices $\Pi^{l-1}_{i,m}\in\mM^{l-1}_{i,m}$ and 
$\Pi^{l-1}\in\mM^{l-1}$ such that the Laplace expansion of the determinant of $\Pi^l_{i,j}$ can be expressed as follows:
\begin{equation}\label{eq:e.9}
|\Pi^l_{i,j}|=\sum_{m=1}^{j-1}(-1)^{i+m}\pi^l_{i,m}
|\Pi^{l-1}_{m,j-1}|+(-1)^{i+j}\pi^l_{i,j}|\Pi^{l-1}|
+\sum_{m=j+1}^{l}(-1)^{i+m}\pi^l_{i,m}|\Pi^{l-1}_{m-1,j}|.
\end{equation}
Observe that, using \eqref{eq:e.8}, we obtain, for $m=1,\ldots,j-1$,
\begin{equation}\label{eq:e.10}
\begin{cases}
(-1)^{i+m}\pi^l_{i,m}|\Pi^{l-1}_{m,j-1}|>0 & 
\text{ if }(i+j)\text{ is odd}\\
(-1)^{i+m}\pi^l_{i,m}|\Pi^{l-1}_{m,j-1}|<0 & \text{ otherwise}
\end{cases},
\end{equation}
\begin{equation}\label{eq:e.11}
\begin{cases}
(-1)^{i+j}\pi^l_{i,j}|\Pi^{l-1}|>0 & 
\text{ if }(i+j)\text{ is odd}\\
(-1)^{i+j}\pi^l_{i,j}|\Pi^{l-1}|<0 & \text{ otherwise}
\end{cases},
\end{equation}
and for $m=j+1,\ldots,l$,
\begin{equation}\label{eq:e.12}
\begin{cases}
(-1)^{i+m}\pi^l_{i,m}|\Pi^{l-1}_{m-1,j}|>0 & 
\text{ if }(i+j)\text{ is odd}\\
(-1)^{i+m}\pi^l_{i,m}|\Pi^{l-1}_{m-1,j}|<0 & \text{ otherwise}
\end{cases}.
\end{equation}
Hence, \eqref{eq:e.3} follows from \eqref{eq:e.9}-\eqref{eq:e.12}.

Next, we show \eqref{eq:e.4} under \eqref{eq:e.8}.
Let $\tilde \pi^l_{j,m}$ be the $(j,m)$ cofactor of $\Pi^l$.
Then, from the definition, there exist $(l-1)\times(l-1)$ matrices $\Pi^{l-1}_{i,m}\in\mM^{l-1}_{i,m}$ and 
$\Pi^{l-1}\in\mM^{l-1}$ such that the cofactors of $\Pi^l$ can be expressed as
\begin{equation}\label{eq:e.13}
\tilde \pi^l_{j,m}=
\begin{cases}
(-1)^{j+m}|\Pi^{l-1}_{m,j-1}| & \text{ if }m=1,\ldots,j-1\\
(-1)^{j+m}|\Pi^{l-1}| & \text{ if }m=j\\
(-1)^{j+m}|\Pi^{l-1}_{m-1,j}| & \text{ if }m=j+1,\ldots,l.
\end{cases}
\end{equation}
Observe that, using \eqref{eq:e.8}, we obtain $(-1)^{j+m}|\Pi^{l-1}_{m,j-1}|>0$ for $m=1,\ldots,j-1$,
$(-1)^{2j}|\Pi^{l-1}|\geq0$, and
$(-1)^{j+m}|\Pi^{l-1}_{m-1,j}|>0$ for $m=j+1,\ldots,l$.
Then, based on \eqref{eq:e.7}, $|\Pi^l|$ is positive when $|\Pi^l|$ is full rank because each $(j,m)$ cofactor of $\Pi^l$ satisfies $\tilde \pi^l_{j,m}\geq0$.
Hence, \eqref{eq:e.4} follows.

Therefore, using induction, \eqref{eq:e.3} and \eqref{eq:e.4} hold.
From the definition, there exist $k\times k$ matrices $\Pi^{k}_{i,m}\in\mM^{k}_{i,m}$ and 
$\Pi^{k}\in\mM^{k}$ such that the Laplace expansion of the determinant of $\Pi^{\wedge}(y_0,\ldots,y_k)$ can be expressed as follows:
\begin{equation}\label{eq:e.14}
|\Pi^+(y_0,\ldots,y_k)|
=f_{0,0}(y_0)|\Pi^{k}|
+\sum_{m=2}^{k+1}(-1)^{1+m}f_{0,m-1}(y_{m-1})|\Pi^{k}_{m-1,1}|.
\end{equation}
Then $|\Pi^{\wedge}(y_0,\ldots,y_k)|$ is positive because, using \eqref{eq:e.3} and \eqref{eq:e.4} for $l=k$, we obtain 
$f_{0,0}(y_0)|\Pi^{k}|\geq0$, and
$(-1)^{1+m}f_{0,m-1}(y_{m-1})|\Pi^{k}_{m-1,1}|>0$ for $m=2,\ldots,l$.
Therefore, $\Pi^{\wedge}(y_0,\ldots,y_k)$ and hence $\Pi'(y_0,\ldots,y_k)$ are full rank.

\section{The full rank conditions in Chernozhukov and Hansen (2005) where Assumption \ref{asp:apb} is violated}\label{sec:e.2}
This section discusses the case where Assumption \ref{asp:apb.2} is violated, although the full rank conditions in \cite{CH2005} hold for the $k=2$ case.
We also present a numerical example for this case. 
Similar to Section \ref{subsec:4.4}, we consider a case where the unordered monotonicity assumption holds but Assumption \ref{asp:apb} is violated because the sign treatments are the same for all instrument pairs.
First, we show that when the two sets of solutions to (\ref{eq:all1.f}) and (\ref{eq:all2.f}) intersect only once, as shown in Figure \ref{fig:ex1.4}, the full rank conditions in \cite{CH2005} hold under Assumptions \ref{asp:ivqr} and \ref{asp:mt} and some differentiability assumptions.
We assume the monotonicity assumption as given in Table \ref{tab:mtn}.
We also define the sets and notations and assume the differentiability assumptions as shown in Appendix \ref{sec:e}.
We define the two functions $\phi_{1,0}^{y_f}$ and $\phi_{2,0}^{y_f}$ with their domain $\mY^f\subset\mY$ as 
\begin{equation}\label{eq:allf1}
\phi_{1,0}^{y_f}(y)
:=Q_{Y_0|\mC^0_{1,0}}\left(\frac{-F_{Y_1|\mC^1_{1,0}}(y)P(\mC^1_{1,0})+F_{Y_2|\mC^2_{0,1}}(y^f)P(\mC^2_{0,1})}{P(\mC^0_{1,0})}\right),
\end{equation}
\begin{equation}\label{eq:allf2}
\phi_{2,0}^{y_f}(y)
:=Q_{Y_0|\mC^0_{2,0}}\left(\frac{-F_{Y_1|\mC^1_{2,0}}(y)P(\mC^1_{2,0})+F_{Y_2|\mC^2_{0,2}}(y^f)P(\mC^2_{0,2})}{P(\mC^0_{2,0})}\right)\text{ for }y\in\mY^f.
\end{equation}
Note that these functions are visualized as the two curves in Figure \ref{fig:ex1.4}, and the sets of solutions to (\ref{eq:all1.f}) and (\ref{eq:all2.f}) are expressed as $\{(y_0,y_1):y_1\in\mY^f\text{ and }y_0=\phi_{1,0}^{y^f}(y_1)\}$ and $\{(y_0,y_1):y_1\in\mY^f\text{ and }y_0=\phi_{2,0}^{y^f}(y_1)\}$, respectively.
Then the derivatives of these functions for each solution are as follows:
\begin{equation*}
\phi_{1,0}^{y_f'}(y_1) = -\frac{f_{Y_1\mC^1_{1,0}}(y_1)}{f_{Y_0\mC^0_{1,0}}(y_0)} \text{ for }(y_0,y_1)\in\{(y_0,y_1):y_1\in\mY^f\text{ and }y_0=\phi_{1,0}^{y^f}(y_1)\},
\end{equation*}
\begin{equation*}
\phi_{2,0}^{y_f'}(y_1) = -\frac{f_{Y_1\mC^1_{2,0}}(y_1)}{f_{Y_0\mC^0_{2,0}}(y_0)} \text{ for }(y_0,y_1)\in\{(y_0,y_1):y_1\in\mY^f\text{ and }y_0=\phi_{2,0}^{y^f}(y_1)\}.
\end{equation*}
When two functions $\phi_{1,0}^{y_f}$ and $\phi_{2,0}^{y_f}$ intersect only once, we have $\phi_{1,0}^{y_f'}(y_1)\neq \phi_{2,0}^{y_f'}(y_1)$ at the intersection point.
We show that, when we have $\phi_{1,0}^{y_f'}(y_1)< \phi_{2,0}^{y_f'}(y_1)$, the Jacobian $\Pi'(y_0,\ldots,y_k)$ is full rank for each $(y_0,\ldots,y_k)\in\mL\cap\mL(0)$.
Hence $\Pi'(y_0,\ldots,y_k)$ is also full rank for each $(y_0,\ldots,y_k)\in\mL\cap\mL(\delta^*)$ for some $\delta^*>0$.
For the case of $\phi_{1,0}^{y_f'}(y_1)> \phi_{2,0}^{y_f'}(y_1)$, we can show this using an analogous argument.

Under the monotonicity assumption of Table \ref{tab:mtn}, the determinant of the Jacobian $\Pi'(y_0,\ldots,y_k)$ can be expressed as follows:
\begin{equation}\label{eq:e.2.1}
\begin{split}
\Pi'(y_0,y_1,y_2)=\begin{vmatrix}
f_{0,0}(y_0) & f_{0,1}(y_1) & f_{0,2}(y_2) \\
-f_{Y_0\mC^0_{1,0}}(y_0) & -f_{Y_1\mC^1_{1,0}}(y_1) & f_{Y_2\mC^2_{0,1}}(y_2)\\
-f_{Y_0\mC^0_{2,0}}(y_0) & -f_{Y_1\mC^1_{2,0}}(y_1) & f_{Y_2\mC^2_{0,2}}(y_2)
\end{vmatrix}.
\end{split}
\end{equation}
Define
\begin{equation}\label{eq:e.2.2}
\begin{split}
\Pi^{\vee}(y_0,y_1,y_2)
:=\begin{pmatrix}
f_{0,0}(y_0) & f_{0,1}(y_1) & f_{0,2}(y_2) \\
-f_{Y_0\mC^0_{1,0}}(y_0) & -f_{Y_1\mC^1_{1,0}}(y_1) & f_{Y_2\mC^2_{0,1}}(y_2)\\
-f_{Y_0\mC^0_{2,0}}(y_0) & -f_{Y_1\mC^1_{2,0}}(y_1) & f_{Y_2\mC^2_{0,2}}(y_2)
\end{pmatrix}.
\end{split}
\end{equation}
It suffices to show that $\Pi^{\vee}(y_0,y_1,y_2)$ is full rank.
Let $\Pi^{\vee}_{i,j}$ be the $2\times2$ submatrix of $\Pi^{\vee}$ obtained by removing row $i$ and column $j$.
Note that (\ref{eq:all1}) and (\ref{eq:all2}) imply
\begin{equation}\label{eq:e.2.0}
f_{Y_2\mC^2_{0,i}}(y_2)
=f_{Y_0\mC^0_{i,0}}(y_0)\phi_{2,0}'(y_2)
+f_{Y_1\mC^1_{i,0}}(y_1)\phi_{2,1}'(y_2)\text{ for }y\in\mY^\circ\text{ and }i=1,2.
\end{equation}
From \eqref{eq:e.2.0}, we obtain
\begin{equation}\label{eq:e.2.6}
\begin{pmatrix}
f_{Y_0\mC^0_{1,0}}(y_0) & f_{Y_1\mC^1_{1,0}}(y_1)\\
f_{Y_0\mC^0_{2,0}}(y_0) & f_{Y_1\mC^1_{2,0}}(y_1)
\end{pmatrix}
\begin{pmatrix}
\phi_{2,0}'(y_2)\\
\phi_{2,1}'(y_2)
\end{pmatrix}
=\begin{pmatrix}
f_{Y_2\mC^2_{0,1}}(y_2)\\
f_{Y_2\mC^2_{0,2}}(y_2)
\end{pmatrix}.
\end{equation}
Then, because $\phi_{1,0}^{y_f'}(y_1)< \phi_{2,0}^{y_f'}(y_1)$ implies that $\Pi^{\vee}_{1,3}$ is full rank with 
\begin{equation*}
|\Pi^{\vee}_{1,3}|
=f_{Y_0\mC^0_{1,0}}(y_0)f_{Y_1\mC^1_{2,0}}(y_1)-f_{Y_0\mC^0_{2,0}}(y_0)f_{Y_1\mC^1_{1,0}}(y_1)>0,
\end{equation*}
we have 
\begin{equation}\label{eq:e.2.7}
\begin{pmatrix}
\phi_{2,0}'(y_2)\\
\phi_{2,1}'(y_2)
\end{pmatrix}
=\frac{1}{\Pi^{\vee}_{1,3}}
\begin{pmatrix}
f_{Y_1\mC^1_{2,0}}(y_1) & -f_{Y_1\mC^1_{1,0}}(y_1)\\
-f_{Y_0\mC^0_{2,0}}(y_0) & f_{Y_0\mC^0_{1,0}}(y_0)
\end{pmatrix}
\begin{pmatrix}
f_{Y_2\mC^2_{0,1}}(y_2)\\
f_{Y_2\mC^2_{0,2}}(y_2)
\end{pmatrix}.
\end{equation}
Note that, from \eqref{eq:e.00}, all entries on the left-hand side of \eqref{eq:e.2.7} are positive.
Thus, \eqref{eq:e.2.7} implies
\begin{equation*}
\begin{pmatrix}
f_{Y_1\mC^1_{2,0}}(y_1) & -f_{Y_1\mC^1_{1,0}}(y_1)
\end{pmatrix}
\begin{pmatrix}
f_{Y_2\mC^2_{0,1}}(y_2)\\
f_{Y_2\mC^2_{0,2}}(y_2)
\end{pmatrix}
=|\Pi^{\vee}_{1,1}|>0,
\end{equation*}
\begin{equation*}
\begin{pmatrix}
-f_{Y_0\mC^0_{2,0}}(y_0) & f_{Y_0\mC^0_{1,0}}(y_0)
\end{pmatrix}
\begin{pmatrix}
f_{Y_2\mC^2_{0,1}}(y_2)\\
f_{Y_2\mC^2_{0,2}}(y_2)
\end{pmatrix}
=-|\Pi^{\vee}_{1,2}|<0.
\end{equation*}
Hence, from the Laplace expansion of the determinant of $\Pi^{\vee}$, we have
\begin{equation*}
|\Pi^{\vee}| = \sum_{j=1}^3 f_{0,j-1}(y_{j-1})(-1)^{1+j}|\Pi^{\vee}_{1,j}|>0.
\end{equation*}
Therefore, $\Pi^{\vee}(y_0,y_1,y_2)$ and hence $\Pi'(y_0,y_1,y_2)$ are full rank.

Next, we provide a numerical example where the full rank conditions in \cite{CH2005} hold under the monotonicity assumption of Table \ref{tab:mtn}.
Suppose that $Y_t$ is normally distributed with mean $t$ and variance 1, and $Y_t$ is independent of $T(z)$ under Assumption \ref{asp:mt} (i).
Let the distributions of $T(z)$ 's be similar to those in \eqref{eq:4.4.1}
The monotonicity inequalities summarized in Table \ref{tab:mtn} hold almost surely.
\begin{equation}\label{eq:4.4.1}
\begin{pmatrix}
P(D_0(0)=1) & P(D_1(0)=1) & P(D_2(0)=1) \\
P(D_0(1)=1) & P(D_1(1)=1) & P(D_2(1)=1) \\
P(D_0(2)=1) & P(D_1(2)=1) & P(D_2(2)=1)
\end{pmatrix}=
\begin{pmatrix}
0.3125 & 0.4375 & 0.25 \\
0.25 & 0.375 & 0.375 \\
0.125 & 0.3125 & 0.5625
\end{pmatrix}
\end{equation}
In this case, Assumption \ref{asp:apb} is violated but the full rank condition in \cite{CH2005} holds for all $y_t\in\mY$.
This is because the determinant of the Jacobian defined in \eqref{eq:ch2} is shown to be negative as follows:
\begin{equation}\label{eq:4.4.2}
\begin{split}
|\Pi'(y_0,y_1,y_2)|=&
\begin{vmatrix}
f_{Y|TZ}(y_0|0,0)p_0(0) & f_{Y|TZ}(y_1|1,0)p_1(0) & f_{Y|TZ}(y_2|2,0)p_2(0) \\
f_{Y|TZ}(y_0|0,1)p_0(1) & f_{Y|TZ}(y_1|1,1)p_1(1) & f_{Y|TZ}(y_2|2,1)p_2(1) \\
f_{Y|TZ}(y_0|0,2)p_0(2) & f_{Y|TZ}(y_1|1,2)p_1(2) & f_{Y|TZ}(y_2|2,2)p_2(2)
\end{vmatrix}\\
=&\begin{vmatrix}
f_{Y_0}(y_0)P(D_0(0)=1) & f_{Y_1}(y_1)P(D_1(0)=1) & f_{Y_2}(y_2)P(D_2(0)=1) \\
f_{Y_0}(y_0)P(D_0(1)=1) & f_{Y_1}(y_1)P(D_1(1)=1) & f_{Y_2}(y_2)P(D_2(1)=1) \\
f_{Y_0}(y_0)P(D_0(2)=1) & f_{Y_1}(y_1)P(D_1(2)=1) & f_{Y_2}(y_2)P(D_2(2)=1)
\end{vmatrix}\\
=&f_{Y_0}(y_0)f_{Y_1}(y_1)f_{Y_2}(y_2)\begin{vmatrix}
P(D_0(0)=1) & P(D_1(0)=1) & P(D_2(0)=1) \\
P(D_0(1)=1) & P(D_1(1)=1) & P(D_2(1)=1) \\
P(D_0(2)=1) & P(D_1(2)=1) & P(D_2(2)=1)
\end{vmatrix}\\
=&-f_{Y_0}(y_0)f_{Y_1}(y_1)f_{Y_2}(y_2)0.00390625<0.
\end{split}
\end{equation}

\section{Testable restrictions under Assumption \ref{asp:apb}}\label{sec:e.3}
Assumptions \ref{asp:mt} and \ref{asp:apb} imply additional testable restrictions compared with \cite{CH2005}.
Suppose that Assumptions \ref{asp:ivqr}-\ref{asp:apb} hold.
Then there exists $k$ pairs of instrument values $\lambda_1,\ldots,\lambda_k\in\mP$ such that the 
following condition holds:
\begin{enumerate}
\item[] 
For $i=1,\ldots,k$, there uniquely exists $t({\lambda_i})\in\mT$ with $\lambda_i=(\lambda_{i,1},\lambda_{i,2})$ $t({\lambda_i})\neq t({\lambda_j})$ for $i\neq j$ such that
\begin{equation}\label{eq:e.3.1}
F_{Y|TZ}(y|t({\lambda_i}),\lambda_{i,1})p_{t({\lambda_i})}(\lambda_{i,1})-F_{Y|TZ}(y|t({\lambda_i}),\lambda_{i,2})p_{t({\lambda_i})}(\lambda_{i,2})
\end{equation}
and
\begin{equation}\label{eq:e.3.2}
F_{Y|TZ}(y|j,\lambda_{i,2})p_j(\lambda_{i,2})-F_{Y|TZ}(y|j,\lambda_{i,1})p_j(\lambda_{i,1})
\end{equation}
are strictly increasing on $\mY$ for $j\in \mT\setminus\{t({\lambda_i})\}$.
\end{enumerate}
This result follows from the fact that the supports of  $F_{Y_{t({\lambda_i})}|\mC^{t({\lambda_i})}_{\lambda_{i,2},\lambda_{i,1}} }(y)$ and $F_{Y_{j}|\mC^{j}_{\lambda_{i,1},\lambda_{i,2}}}(y)$
are $\mY$ from Assumption \ref{asp:mt} (iv), and 
$F_{Y_{t({\lambda_i})}|\mC^{t({\lambda_i})}_{\lambda_{i,2},\lambda_{i,1}} }(y)
P(\mC^{t({\lambda_i})}_{\lambda_{i,2},\lambda_{i,1}})$ and $F_{Y_{j}|\mC^{j}_{\lambda_{i,1},\lambda_{i,2}}}(y)P(\mC^{j}_{\lambda_{i,1},\lambda_{i,2}})$ are identified as \eqref{eq:e.3.1} and \eqref{eq:e.3.2} respectively from Lemma \ref{lem:idcpl}.

Under some differentiability assumptions, the above restrictions may be expressed in terms of densities. 
Assume $F_{Y|TZ}(\cdot|t,z)$ is continuously differentiable on $\mY$ for each $t\in\mT$ and $z\in\mZ$.
Then, there exist $k$ pairs of instrument values $\lambda_1,\ldots,\lambda_k\in\mP$ such that the 
following condition holds:
\begin{enumerate}
\item[] 
For $i=1,\ldots,k$, there uniquely exist $t({\lambda_i})\in\mT$ with $t({\lambda_i})\neq t({\lambda_j})$ for $i\neq j$ such that
\begin{equation}\label{eq:e.3.3}
f_{Y|TZ}(y|t({\lambda_i}),\lambda_{i,1})p_{t({\lambda_i})}(\lambda_{i,1})-f_{Y|TZ}(y|t({\lambda_i}),\lambda_{i,2})p_{t({\lambda_i})}(\lambda_{i,2})>0
\end{equation}
and
\begin{equation}\label{eq:e.3.4}
f_{Y|TZ}(y|j,\lambda_{i,2})p_j(\lambda_{i,2})-f_{Y|TZ}(y|j,\lambda_{i,1})p_j(\lambda_{i,1})>0
\end{equation}
on $\mY$ for $j\in \mT\setminus\{t({\lambda_i})\}$.
\end{enumerate}
For the binary case, \cite{Wuthrich2019} shows that Assumption \ref{asp:mt} implies full rank conditions in \cite{CH2005}.
They are expressed as 
\eqref{eq:e.3.3} and \eqref{eq:e.3.4} with $(t_{\lambda^i},j)=(0,1) \text{ or } (1,0)$.
\eqref{eq:e.3.3} and \eqref{eq:e.3.4} are equally difficult to check statistically compared to the full rank conditions in \cite{CH2005} for the binary case.  
\section{Identification as closed-form expressions without Assumption \ref{asp:ivqr} (v)}\label{sec:g}

In this section, we relax Assumption \ref{asp:ivqr} (v)
and precisely derive the closed-form expressions of the treatment effects.
First, we define a subset of $\mY$ that is sufficiently large and contains $\mY^\circ$.
Next, we derive the closed-form expressions of $\phi_{s,t}$ and the conditional c.d.f.'s of $Y_t$ on this subset of $\mY$.
Proofs of lemmas, propositions, and theorems are collected at the end of this section.

Before we define the required subset of $\mY$, we introduce some useful preliminary results in this section.
Let $V$ and $W$ be scalar-valued random variables whose supports are $\mV$ and $\mW$. Define $\mV^*:=\{v\in\R:F_V(v)-F_V(v-\eps)>0\text{ for all }\eps>0\}$ and $\mW^*:=\{w\in\R:F_W(w)-F_W(w-\eps)>0\text{ for all }\eps>0\}$. The following lemmas show some useful properties of $\mV^*$ and $\mW^*$.  

\begin{lemma}[Size of $\mW^*$]\label{lem:g0}
$\mW$ contains $\mW^*$, and $\mW^*$ contains $\mW^\circ$.
\end{lemma}

\begin{lemma}[Covering the quantiles]\label{lem:g2}
\begin{enumerate}
\item[(a)] For all $\tau\in(0,1)$, $Q_W(\tau)$ is contained in $\mW^*$.

\item[(b)] If $F_W$ is continuous, then, for all $\tau\in(0,1)$, there exists $w\in\mW^*$ such that $F_W(w)=\tau$ holds.
\end{enumerate}

\end{lemma}

\begin{lemma}[Identical supports]\label{lem:g1}
\begin{enumerate}
\item[(a)] If $\mV^*\supset\mW^*$ holds, then $\mV\supset\mW$ holds.

\item[(b)] If $F_W$ is continuous and $\mV\supset\mW$ holds, then $\mV^*\supset\mW^*$ holds.
\end{enumerate}

\end{lemma}

\begin{lemma}[Strict monotonicity on $\mW^*$]\label{lem:g3}
$F_W$ is strictly increasing on $\mW^*$.
\end{lemma}

\begin{lemma}[Identical distributions]\label{lem:g4}
If $F_V(w)=F_W(w)$ holds for all $w\in\mW^*$, and $F_W$ is continuous, then $F_V(w)=F_W(w)$ holds for all $w\in\R$.
\end{lemma}

Lemma \ref{lem:g0} shows that $\mW^*$ is a subset of $\mW$ and contains $\mW^\circ$.
Lemma \ref{lem:g2} shows that $\mW^*$ contains all quantiles of the distribution of $W$.
Lemma \ref{lem:g1} shows that $\mV^*=\mW^*$ implies $\mV=\mW$, and $\mV^*=\mW^*$ and $\mV=\mW$ are equivalent when $F_V$ and $F_W$ are continuous.
Lemma \ref{lem:g3} shows that $F_W$ is strictly increasing on $\mW^*$ and $\mW^\circ$.
Lemma \ref{lem:g4} shows that $\mV^*$ and $\mW^*$ are sufficiently large such that $V$ and $W$ follow the same continuous distribution if they are identically distributed on $\mV^*$ or $\mW^*$.

Now we define the required subset of $\mY$.
For $t\in\mT$, define $\mY^*:=\{y\in\R:F_{Y_t}(y)-F_{Y_t}(y-\eps)>0\text{ for all }\eps>0\}$.\footnote{$\mY^*$ does not depend on $t\in\mT$ from Lemma \ref{lem:g1}. 
We assume that the support of the conditional distribution of $Y_t$ is $\mY$.}
From Lemma \ref{lem:g0}, $\mY^*$ is a subset of $\mY$ and contains $\mY^\circ$.
The following proposition shows that $\phi_{s,t}$ and the conditional cdfs of $Y_t$ are identified on $\mY^*$ and $\mY^\circ$:

\begin{proposition}\label{prop:1}
Suppose that Assumptions \ref{asp:ivqr}-\ref{asp:apb} hold.  
Define $p_t(z)$ as in (\ref{eq:pt}). 
Then the following implication holds:
\begin{enumerate}
\item[(a)] For each $s\in\mT$, $F_{Y_s}(y)$ for $y\in\mY^*$ can be expressed as (\ref{eq:clfm}).

\item[(b)] Assume that $P(D_t(z)\leq D_t({z'}))=1$ and $P(\mC^t_{z,z'})>0$ hold for $t\in\mT$ and $(z,z')\in\mP$. 
Then, for all $t'\in\mT$, $F_{Y_{t'}|\mC_{z,z'}^t}(y)$ for $y\in\mY^*$ can be expressed as (\ref{eq:l4.1}).

\item[(c)] $\phi_{s,t}(y)$ for $y\in\mY^*$ and $s,t\in\mT$ is identified.
\end{enumerate}
\end{proposition}

\begin{remark}\label{rem.g1}
Proposition \ref{prop:1} modifies the results in the main paper that hold on $\mY^\circ$.
Proposition \ref{prop:1} (a) modifies Lemma \ref{lem:clfm} (a).
Proposition \ref{prop:1} (b) modifies Lemma \ref{lem:apb2}.
Proposition \ref{prop:1} (c) modifies Lemma \ref{lem:apb}.
The closed-form expressions derived in Appendices \ref{sec:a.75} and \ref{sec:a.25} also hold on $\mY^*$.
\end{remark}

In the proof of Lemmas \ref{lem:clfm}-\ref{lem:apb2}, we use strict monotonicity of $F_{Y_t}(y)$ and $F_{Y_t|\mC^t_{z,z'}}(y)$ in $y\in\mY^\circ$ to derive the closed-form expressions of $\phi_{s,t}$ and the conditional cdfs of $Y_t$ on $\mY^\circ$.
From Lemma \ref{lem:g3}, $F_{Y_t|X}(y|x)$ is strictly increasing on $\mY^*$ under Assumption \ref{asp:ivqr} (i). 
From Lemmas \ref{lem:g1} and \ref{lem:g3}, $F_{Y_t|\mC^t_{z,z'}}(y)$ for each $(z,z')\in\mP$ is also strictly increasing on $\mY^*$ under Assumption \ref{asp:mt} (iv).
Therefore, applying Lemmas \ref{lem:g1} and \ref{lem:g3} instead of Lemma \ref{lem:3b} leads to the closed-form expressions of $\phi_{s,t}$ and the conditional cdfs of $Y_t$ on $\mY^*$.

Given Proposition \ref{prop:1}, the following proposition shows that the treatment effects are identified as closed-form expressions:

\begin{proposition}\label{prop:2}
Suppose that Assumptions \ref{asp:ivqr}-\ref{asp:apb} hold. Then the following holds:
\begin{enumerate}
\item[(a)] For each $s\in\mT$, $Q_{Y_s}(\tau)$ for $\tau\in(0,1)$ and $E[Y_s]$ can be expressed as
\begin{equation}\label{eq:p2.1}
Q_{Y_t}(\tau)=\inf\{y\in\mY^*:F_{Y_t}(y)\geq\tau\}
\end{equation}
and (\ref{eq:clfmasf}) respectively.

\item[(b)] Assume that $P(D_t(z)\leq D_t({z'}))=1$ and $P(\mC^t_{z,z'})>0$ hold for $t\in\mT$ and $(z,z')\in\mP$. 
Then, for all $t'\in\mT$, $Q_{Y_{t'}|\mC_{z,z'}^t }(\tau)$ for $\tau\in(0,1)$ and $E[Y_{t'}|\mC_{z,z'}^t]$ can be expressed as follows:
\begin{equation}\label{eq:p2.2}
Q_{Y_{t'}|\mC_{z,z'}^t }(\tau)=\inf\{y\in\mY^*:F_{Y_{t'}|\mC_{z,z'}^t }(y)\geq\tau\}
\end{equation}
and (\ref{eq:l4.2}) respectively.
\end{enumerate}
\end{proposition}

\begin{remark}\label{rem.g2}
In the main paper, we show (\ref{eq:clfmasf}) and (\ref{eq:l4.2}) under Assumption \ref{asp:ivqr} (v) in Lemmas \ref{lem:clfm} (b) and \ref{lem:apb2}, respectively. 
In the proof of Proposition \ref{prop:2}, we show (\ref{eq:clfmasf}) and (\ref{eq:l4.2}) without Assumption \ref{asp:ivqr} (v).
\end{remark}

Proposition \ref{prop:2} follows from the fact that $\mY^*$ is sufficiently large such that identification of the distribution on $\mY^*$ leads to the identification of the entire distribution.
From Lemma \ref{lem:g2}, $\mY^*$ contains all the quantiles of the distribution of $Y_t$.
Lemma \ref{lem:g4} implies that $F_{Y_s}(\cdot)$ on $\R$ is specified when $F_{Y_s}(\cdot)$ on $\mY^*$ is specified.
Hence, the identification of $F_{Y_s}(\cdot)$ on $\mY^*$ leads to identification of $Q_{Y_s}(\cdot)$ on $(0,1)$ and $E[Y_s]$.

Furthermore, we can relax Assumption \ref{asp:ivqr} (v) in Theorems \ref{thm:apb} and \ref{thm:apb2} using 
Propositions \ref{prop:1} and \ref{prop:2}.
The assumption that the closure of $\mY^\circ$ is equal to $\mY$ is used in the main paper to assure that identification on the interior of the support leads to identification on the entire support.
However, $\mY^*$ is sufficiently large and does not require such a condition to identify the conditional distributions of $Y_t$ on $\mY$.
The assumption that $F_{Y_t}(\mY^\circ)$ does not depend on $t\in\mT$ is used in the main paper to ensure the existence of an inverse mapping $\phi_{s,t}^{-1}$ on $\mY^\circ$.
However, $\phi_{s,t}^{-1}$ exists on $\mY^*$ because $F_{Y_t}(\mY^*)=(0,1)$ holds and $F_{Y_t}(\mY^*)$ does not depend on $t\in\mT$. 
$F_{Y_t|X}(\mY^*)=(0,1)$ holds because $\mY^*$ contains all quantiles of the distribution of $Y_t$ from Lemma \ref{lem:g2}.

Finally, we show Lemmas \ref{lem:g0}-\ref{lem:g4} and Propositions \ref{prop:1} and \ref{prop:2}.
We modify the proofs of Theorems \ref{thm:apb} and \ref{thm:apb2} using 
Propositions \ref{prop:1} and \ref{prop:2}. 
The proofs are as follows:

\begin{proof}[Proof of Lemma \ref{lem:g0}]
First, we show that $\mW$ contains $\mW^*$. For $w\in\mW^*$, $F_W(w+\eps)>F_W(w-\eps)$ holds for all $\eps>0$ because $F_W(w+\eps)\geq F_W(w)$ holds. Hence, $w$ is also contained in $\mW$, and $\mW$ contains $\mW^*$.
Next, we show that $\mW^*$ contains $\mW^\circ$. Let $w$ be a point not contained in $\mW^*$. Then, there exists $\eps>0$ such that $F_W(w)=F_W(w-\eps)$ holds. Suppose that $W$ is contained in $\mW^\circ$. Then, there exists $\delta>0$ such that $\eta\leq\delta\To w-\eta\in\mW$ holds. If this $\eta$ also satisfies $\eta\leq\eps/2$, then $F_W(w)>F_W(w-2\eta)$ holds because $w-\eta$ is contained in $\mW$. 
However, this fact contradicts the fact that
$F_W(w-2\eta)\geq F_W(w-\eps)$ holds because $F_W(w)=F_W(w-\eps)$ holds. Hence, $W$ is not contained in $\mW^\circ$, and $\mW^*$ contains $\mW^\circ$.
Therefore, the stated result follows.
\end{proof}

\begin{proof}[Proof of Lemma \ref{lem:g2}]
First, we show part (a).
Suppose that there exists $\tau'\in(0,1)$ such that $Q_W(\tau')$ is not contained in $\mW^*$. Then, there exists $\eps>0$ such that $F_W(Q_W(\tau'))=F_W(Q_W(\tau')-\eps)$ holds. However, this fact contradicts the fact that $F_W(Q_W(\tau'))\geq\tau'$ holds, according to the definition of $Q_W$. Hence, $Q_W(\tau)$ is contained in $\mW^*$ for all $\tau\in(0,1)$.

Next, we show part (b).
Observe that $F_W(Q_W(\tau))=\tau$ holds for all $\tau\in(0,1)$ because $F_W$ is continuous. 
Therefore, the stated result follows by taking $w=Q_W(\tau)$ for each $\tau$.
\end{proof}

\begin{proof}[Proof of Lemma \ref{lem:g1}]
First, we show part (a).
Let $w\in\mW$ satisfy $w\notin\mV$.
Then there exists $\eps>0$ such that $F_V(w+\eps)=F_V(w-\eps)$. This holds because $F_V(w+\eps)\geq F_V(w-\eps)$ holds.
This implies that $(w-\eps,w+\eps)$ is not contained in $\mW^*$ because $\mV^*\supset\mW^*$ and $\mV^*\subset\mV$ hold from Lemma \ref{lem:g0}.
Thus, there exists $\tau\in(0,1)$ such that $F_W(\eta)=\tau$ holds for all $\eta\in(w-\eps,w+\eps)$.
We proceed to show this statement.
Suppose that $\eta,\eta'\in(w-\eps,w+\eps)$ satisfies $\eta<\eta'$ and $F_W(\eta)<F_W(\eta')$.
Then, $\eta<Q_W(F_W(\eta'))\leq\eta'$ holds, and
$Q_W(F_W(\eta'))$ is contained in $\mW^*$ from Lemma \ref{lem:g2}.
However, this contradicts the fact that $(w-\eps,w+\eps)$ is not contained in $\mW^*$.
Hence, the above statement follows.
This implies that $w$ is not contained in $\mW$.
However, this contradicts $w\in\mW$.
Therefore, $\mV\supset\mW$ holds.

Next, we show part (b).
Let $w\in\mW^*$ satisfy $w\notin\mV^*$. 
Then there exists $\eps>0$ such that $F_V(w)=F_V(w-\eps)$ holds.
This is because $F_V(w)\geq F_V(w-\eps)$ holds. 
This implies that $(w-\eps,w)$ is not contained in $\mW$ because $\mV\supset\mW$ holds.
Thus, there exists $\tau\in(0,1)$ such that $F_W(w-\eta)=\tau$ and $\tau<F_W(w)$ hold for all $0<\eta<\eps$.
We proceed to show this statement.
Suppose that $0<\eta<\eta'<\eps$ satisfies $F_W(w-\eta)<F_W(w-\eta')$.
Then, $w-\eta<Q_W(F_W(w-\eta'))\leq w-\eta'$ holds, and $Q_W(F_W(w-\eta'))$ is contained in $\mW^*$ from Lemma \ref{lem:g2}.
However, this contradicts the fact that $(w-\eps,w)$ is not contained in $\mW$.
This is because $Q_W(F_W(w-\eta'))$ is also contained in $\mW$ from Lemma \ref{lem:g0}.
Hence, the above statement follows, which contradicts the continuity of $F_W$. 
Therefore, $\mV^*\supset\mW^*$ holds.
\end{proof}

\begin{proof}[Proof of Lemma \ref{lem:g3}]
Let $w_1,w_2\in\mW^*$ satisfy $w_1<w_2$. Then, because $w_2$ is contained in $\mW^*$, $F_W(w_2)>F_W(w_1)$ holds. Therefore, the stated result follows.
\end{proof}

\begin{proof}[Proof of Lemma \ref{lem:g4}]
We show that $F_V(w)\geq F_W(w)$ holds for all $w\in\R$. 
Suppose that there exists $w\in\R$ such that $F_V(w)< F_W(w)$ holds. Then, for $F_V(w)<\tau'< F_W(w)$, there exists $w'\in\mW^*$ such that $w'<w$ and $F_V(w')=F_W(w')=\tau'$ hold from Lemma \ref{lem:g2}. However, this implies that $\tau'\leq F_V(w)$ holds and contradicts with the fact that $F_V(w)< F_W(w)$ holds. Hence, $F_V(w)\geq F_W(w)$ holds for all $w\in\R$.
Similarly, we can show that $F_V(w)\leq F_W(w)$ holds for all $w\in\R$. Therefore, the stated result follows.
\end{proof}

\begin{proof}[Proof of Proposition \ref{prop:1}]
Part (a) follows from an argument similar to the proof of Lemma \ref{lem:clfm} (a), using Lemmas \ref{lem:g1} and \ref{lem:g3} instead of Lemma \ref{lem:3b}.
Part (b) follows from an argument similar to the derivation of (\ref{eq:l4.1}) in the proof of Lemma \ref{lem:apb2}, using Lemmas \ref{lem:g1} and \ref{lem:g3} instead of Lemma \ref{lem:3b}.

For part (c), it suffices to show that $\phi_{k,j}$ for $j=0,\ldots,k-1$ is identified on $\mY^*$ because the other counterfactual mappings are identified from $\phi_{k,j}$ for $j=0,\ldots,k-1$.
To see this, observe that $\phi_{s,t}^{-1}$ exists on $\mY^*$ and $\phi_{t,s}(y)=\phi_{s,t}^{-1}(y)$ holds for $y\in\mY^*$.
This is because $\phi_{s,t}$ is strictly increasing on $\mY^\circ$ from Assumption \ref{asp:ivqr} (i) and Lemma \ref{lem:g3},
and $\phi_{s,t}(\mY^*)=\mY^*$ holds from Lemma \ref{lem:g2}.
Note that $F_{Y_t}(\mY^*)=(0,1)$ holds and $F_{Y_t}(\mY^*)$ does not depend on $t\in\mT$.
This is because $\mY^*$ contains all quantiles of the distribution of $Y_t$ by Lemma \ref{lem:g2}, which leads to $\phi_{s,t}(\mY^*)=\mY^*$.
Furthermore, for $s,t,r\in\mT$, $\phi_{s,r}$ is identified on $\mY^*$ if $\phi_{s,t}$ and $\phi_{t,r}$ are identified on $\mY^*$.
This is because $\phi_{s,t}(\mY^*)=\mY^*$ holds and
$\phi_{s,r}=\phi_{t,r}\circ\phi_{s,t}$ follows from $F_{Y_t}(Q_{Y_t}(\tau))=\tau$ for $\tau\in(0,1)$ by Assumption \ref{asp:ivqr} (i).
Therefore, identification of $\phi_{k,j}$ for $j=0,\ldots,k-1$ suffices to identify all the counterfactual mappings.

Identification of $\phi_{k,j}$ for $j=0,\ldots,k-1$ on $\mY^*$ follows from an argument similar to the proof of Lemma \ref{lem:apb}, using Lemmas \ref{lem:g1} and \ref{lem:g3} instead of Lemma \ref{lem:3b}.
Note that $y\in\mY^*$ implies $\phi_{s,t}(y)\in\mY^*$ because $\phi_{s,t}(\mY^*)=\mY^*$ holds from Lemma \ref{lem:g2}.
Other counterfactual mappings are also identified on $\mY^*$ because they are inversions or compositions of $\phi_{k,j}$ for $j=0,\ldots,k-1$.
The closed-form expressions of $\phi_{s,t}$ on $y\in\mY^*$ follow from an argument similar to that in Appendices \ref{sec:a.75} and \ref{sec:a.25}, using Lemmas \ref{lem:g1} and \ref{lem:g3} instead of Lemma \ref{lem:3b}.
\end{proof}

\begin{proof}[Proof of Proposition \ref{prop:2}]
We show part (a). Part (b) follows from a similar argument.
Observe that (\ref{eq:p2.1}) holds because $F_{Y_t}(Q_{Y_t}(\tau))=\tau$ and 
$Q_{Y_t}(\tau)\in\mY^*$ hold for all $\tau\in(0,1)$ from Lemma \ref{lem:g2}.
Similar to the derivation of \eqref{eq:33.5} in the proof of Lemma \ref{lem:clfm} (b), we obtain
\begin{equation}\label{eq:g.1}
F_{Y_s|TZ}(y|t,z)=F_{\phi_{t,s}(Y_t)|TZ}(y|t,z)\text{ for }y\in\mY^*.
\end{equation}
Applying Lemma \ref{lem:g4} to (\ref{eq:g.1}) gives
\begin{equation}\label{eq:g.2}
F_{Y_s|TZ}(y|t,z)=F_{\phi_{t,s}(Y_t)|TZ}(y|t,z)\text{ for }y\in\R.
\end{equation}
Hence, $Y_s\deq \phi_{t,s}(Y_t)$ conditional on $(T,Z)=(t,z)$ holds.
Therefore, we obtain (\ref{eq:clfmasf}) as in the proof of Lemma \ref{lem:clfm} (b).
\end{proof}

\begin{proof}[Modified proof of Theorem \ref{thm:apb}]
Observe that $F_{Y_s}(y)$ for $y\in\mY^*$ is identified from Proposition \ref{prop:1} (a) because $\phi_{s,t}(y)$ for $y\in\mY^*$ is identified from Proposition \ref{prop:1} (c).
Hence, $Q_{Y_s}(\tau)$ for $\tau\in(0,1)$ and $E[Y_s]$ are identified from Proposition \ref{prop:2} (a).
\end{proof}

\begin{proof}[Modified proof of Theorem \ref{thm:apb2}]
First, we modify the proof of Lemma \ref{lem:apb2} and show that $Q_{Y_{t'}|\mC_{z,z'}^t }(\tau)$ for $\tau\in(0,1)$ and $E[Y_{t'}|\mC_{z,z'}^t]$ are identified for all $t'\in\mT$ if $P(D_t(z)\leq D_t({z'}))=1$ and $P(\mC^t_{z,z'})>0$ hold for $t\in\mT$, $(z,z')\in\mP$, and $x\in\mX$.
Observe that $F_{Y_{t'}|\mC_{z,z'}^t }(y)$ for $y\in\mY^*$ is identified from Proposition \ref{prop:1} (b) because $\phi_{s,t}(y)$ for $y\in\mY^*$ is identified from Proposition \ref{prop:1} (c).
Hence, $Q_{Y_{t'}|\mC_{z,z'}^t }(\tau)$ for $\tau\in(0,1)$ and $E[Y_{t'}|\mC_{z,z'}^t]$ are identified from Proposition \ref{prop:2} (b).
The stated result follows from applying this result instead of Lemma \ref{lem:apb2} to the proof in the main paper.
\end{proof}
\fi

\end{document}